  \theoremstyle{plain}
  \newtheorem{theorem}{Theorem}[section]
  \newtheorem{lemma}[theorem]{Lemma}
  \newtheorem{corollary}[theorem]{Corollary}
  \newtheorem{definition}[theorem]{Definition}
  \newtheorem{remark}[theorem]{Remark}
    \newtheorem{fact}[theorem]{Fact}
\newtheorem{example}{Example}[section]
 \newtheorem*{theorem*}{Theorem}
\newmdtheoremenv[backgroundcolor=gray!10,
                 linewidth=0pt,
                 innerleftmargin=4pt,
                 innerrightmargin=4pt,
                 innertopmargin=1pt,
                 innerbottommargin=4pt,
            splitbottomskip=4pt]{problem}[prob]{Problem}
\newmdtheoremenv[backgroundcolor=gray!10,
                 linewidth=0pt,
                 innerleftmargin=4pt,
                 innerrightmargin=4pt,
                 innertopmargin=1pt,
                 innerbottommargin=5.5pt,
            splitbottomskip=4pt]{conjecture}[conj]{Conjecture}
\newcommand{\N}{\mathbb{N}}
\newcommand{\Id}{\mathbb{1}}
\newcommand{\proj}[1]{\ensuremath{|#1\rangle \langle #1|}}
\newcommand{\Tr}{\mathrm{Tr}}
\renewcommand{\E}{\mathbb{E}}
\newcommand{\bit}{\{0,1\}}
\newcommand{\algo}{\mathcal}
\newcommand{\from}{\ensuremath{\leftarrow}}
\newcommand{\negl}{\ensuremath{\operatorname{negl}}\xspace}
\newcommand\numberthis{\addtocounter{equation}{1}\tag{\theequation}}
\title{Quantum One-Wayness of the Single-Round Sponge with Invertible Permutations}
\author{Joseph Carolan\footnote{\texttt{jcarolan@umd.edu}}\\University of Maryland \and Alexander Poremba\footnote{\,\texttt{poremba@mit.edu}}\\MIT}
\date{}
\begin{document}

\maketitle


\abstract{
Sponge hashing is a widely used class of cryptographic hash algorithms which underlies the current international hash function
standard SHA-3. In a nutshell, a sponge function takes as input a bit-stream of any length 
and processes it via a simple iterative procedure: it repeatedly feeds each block of the input into a so-called \emph{block function}, and then produces a digest by once again iterating the block function on the final output bits.
While much is known about the post-quantum security of the sponge construction when the block function is modeled as a random function or one-way permutation, the case of \emph{invertible} permutations, which more accurately models the construction underlying SHA-3, has so far remained a fundamental open problem.

In this work, we make new progress towards overcoming this barrier and show several results. First, we prove the ``double-sided zero-search'' conjecture proposed by Unruh (eprint' 2021) and show that finding zero-pairs in a random $2n$-bit permutation requires at least $\Omega(2^{n/2})$ many queries---and this is tight due to Grover's algorithm. At the core of our proof lies a novel ``symmetrization argument'' which uses insights from the theory of Young subgroups. Second, we consider more general variants of the double-sided search problem and show similar query lower bounds for them. As an application, we prove the quantum one-wayness of the single-round sponge with invertible permutations in the quantum random permutation model.
}

\section{Introduction}

Hash functions are one of the most fundamental objects in cryptography. They are used in a multitude of applications, such as integrity checks for data packages (e.g., software updates), password storage on cloud servers, or as important components in digital signature schemes---whether it is to design hash-based signatures or simply to construct signatures schemes for variable input lengths~\cite{10.5555/2700550}.

In recent years, the National Institute of Standards and Technology (NIST) announced a new international hash function standard known as SHA-3. Unlike its predecessor SHA-2, which was rooted in the Merkle-Damg\aa rd construction~\cite{Mer88,Mer90,eurocrypt-1987-2247}, the new hash function standard uses Keccak~\cite{KeccakSub3}---a family of cryptographic functions based on the idea of \emph{sponge hashing}~\cite{KeccakSponge3}. This particular approach  allows for both variable input length and variable output length, which makes it particularly attractive towards the design of cryptographic hash functions. 
The internal state of a sponge function gets updated through successive applications of a so-called \emph{block function} $\varphi: \bit^{r+c} \rightarrow \bit^{r+c}$, where we call the parameters $r \in \N$ the \emph{rate} and $c \in \N$ the \emph{capacity} of the sponge. 
The evaluation of the sponge function $\mathsf{Sp}^\varphi: \bit^* \rightarrow \bit^*$ takes place in two phases:
\begin{enumerate}
    \item (Absorption phase:) During each round, a new block consisting of $r$ many bits of the input gets ``fed'' into the block function $\varphi$. Following the sponge metaphor, we say that the sponge function ``absorbs'' the input. 
    More formally, suppose that the input consists of $r$-bit blocks $m_1,m_2,\dots,m_\ell$. During the first round, we compute $\varphi(m_1||0^c) = y_1||z_1$, for some $y_1 \in \bit^r$ and $z_1 \in \bit^c$. Then, to absorb the next block $m_2$, we compute $\varphi(y_1 \oplus m_2||z_1) = y_2 || z_2$, and so on.

    \item (Squeezing phase:) During each round, a new $r$-bit block is produced by
    essentially running the absorption phase in reverse. Each $r$-bit digest becomes a block of the final output of the function $\mathsf{Sp}^\varphi$.
    Following the sponge metaphor, we say that
the sponge function gets “squeezed” to produce fresh random bits.
More formally, suppose the last round of the absorption phase results in a pair $y_\ell||z_\ell$. We then let the digest $y_\ell$ serve as the first block of the output. To produce the second block, we compute $\varphi(y_\ell||z_\ell)=y_{\ell+1} || z_{\ell+1}$, and output $y_{\ell+1}$, and so on.
\end{enumerate}

Several works have since analyzed the security properties behind the sponge construction~\cite{10.1007/978-3-540-78967-3_11,10.1007/978-3-031-15982-4_5,10.1007/978-3-031-48621-0_9}. In particular, in~\cite{10.1007/978-3-540-78967-3_11} it was shown that the sponge hash function enjoys a strong form of security called \emph{indifferentiability} in the case when the underlying block function $\varphi$ is modeled as an invertible random permutation. 
Indifferentiability already implies many desired properties of cryptographic hash functions, such as collision-resistance or pseudorandomness.

While much is known about the security of the sponge in a classical world, our understanding changes significantly once we take the threat of large-scale quantum computing into account. The good news is that most hash functions are believed to only be mildly affected, whereas the majority of pre-quantum public-key cryptography faces serious threats due to 
Shor's algorithm (and its variants)~\cite{Shor_1997,regev2024efficient}. The reason hash functions are believed to be less susceptible to quantum attacks is due to their inherent lack of structure, which means that generic quantum attacks tend to achieve at most a square-root speed-up relative to their classical counterparts. 
To this day, however, our understanding of the post-quantum security of the sponge remains somewhat incomplete.
In fact, the post-quantum security of the sponge construction is only well-understood in the special case when the block function is modeled as a \emph{non-invertible} random permutation~\cite{Czajkowski2017,Czajkowski2019}. The case of invertible random permutations, which more accurately models the construction underlying SHA-3, has so far remained a major open problem.


\paragraph{Single-round sponge hashing.}

In the special case when there is only a single round of absorption and squeezing, the sponge function $\mathsf{Sp}^\varphi: \bit^{r} \rightarrow \bit^r$ has a simple form which is illustrated in Figure~\ref{fig:single-sponge}; namely, on input $x \in \bit^r$, the output is given by $y = \mathsf{Sp}^\varphi(x)$, where $y$ corresponds to the first $r$ bits of $\varphi(x||0^c)$. In other words, $\mathsf{Sp}^\varphi$ is defined as the restriction of $\varphi$ onto the first $r$ bits of its output.

\begin{figure}[t]
\begin{center}
{\small
\begin{tikzpicture}
  \draw (5,-0.75) rectangle (6,0.75) node [pos=.5]{$\varphi$}; 

\draw[-] (4.6,0.35) node[left]{$x$ \hspace{1mm}} -- (5,0.35);
\draw[-] (4.6,-0.35) node[left]{$0^c$} --(5,-0.35);
  \draw[-] (6.0,0.35) node[right]{\hspace{5mm}$y$} --(6.4,0.35);
\draw[-] (6.0,-0.35) node[right]{\hspace{5mm}$z$} --(6.4,-0.35);

 \end{tikzpicture}

\label{fig:single-sponge}
}
\end{center}
\caption{The single-round sponge.}
\end{figure}
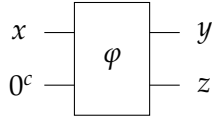

Despite its simplicity, the single-round sponge already features many of the technical difficulties that one encounters when analyzing the security of the (many-round) sponge---especially once we model $\varphi$ as an invertible random permutation. 
In the post-quantum setting, this requires us to analyze quantum algorithms $\algo A^{\varphi,\varphi^{-1}}$ which are allowed to make superposition queries to both the ``forward oracle'' $O_\varphi$, as well as the ``backward oracle'' $O_{\varphi^{-1}}$.
For example, one may ask:
\begin{itemize}
    \item (One-wayness:) How many queries does a quantum algorithm $\algo A^{\varphi,\varphi^{-1}}$ need to invert $\mathsf{Sp}^\varphi$ on a randomly chosen input? In other words, given $y$ which corresponds to the first $r$ bits of $\varphi(x||0^c)$, for a random $x \sim \bit^r$, how many queries does $\algo A$ need to find $x'$ such that the first $r$ many bits of $\varphi(x'||0^c)$ equal $y$? 

    \item (Collision-resistance:) How many queries does a quantum algorithm $\algo A^{\varphi,\varphi^{-1}}$ need to find a collision for $\mathsf{Sp}^\varphi$? Here, $\algo A$ needs to find a colliding pair $x,x'$ such that the first $r$ many bits of $\varphi(x||0^c)$ and $\varphi(x'||0^c)$ match.
\end{itemize}
To the best of our knowledge, there are currently no known (tight) query lower bounds for either of the two above problems---despite the fact that they concern the seemingly simple variant of single-round sponge hashing. 
While prior work by Zhandry~\cite{Zhandry21} does in fact show the strong notion of reset-indifferentiabability of the single-round sponge with invertible permutations (which implicitly implies the hardness of the query problems above), the techniques do not seem to yield tight lower bounds and only apply to restricted parameter ranges (see \Cref{sec-related}).
The lack of lower bounds is mainly due to the fact that current proof techniques for the non-invertible case~\cite{Czajkowski2017,Czajkowski2019} seem to break down once the inverse oracle $O_{\varphi^{-1}}$ enters the picture.

Consider, for example, the basic property of \emph{zero-preimage-resistance}; that is, the hardness of finding a pre-image of the all-zero string. Czajkowski et al.~\cite{Czajkowski2017} showed that if a (non-invertible) $\varphi$ is both collision-resistant (when restricted to the left and right half of its
output) and zero-preimage-resistant (when restricted to the right half of its output),
then the (many-round) sponge construction is collision-resistant. Unfortunately, 
this approach fails immediately if an
adversary can submit inverse queries to the block function $\varphi$. Clearly, the restriction of $\varphi$ to the right half of its output is no longer zero-preimage-resistant---an adversary can simply query $O_{\varphi^{-1}}$ on inputs of the form $y||0^c$, for any input $y \in \bit^r$, in order to find a zero-preimage. 

Another difficulty in generalizing existing proof techniques lies in the fact it is currently not known how to construct \emph{compressed permutation oracles}~\cite{Unruh2021,Unruh2023}.
Zhandry~\cite{Zhandry2018} previously introduced compressed oracles as a tool to prove query lower bounds for problems involving random oracles; specifically, as a means to ``record'' quantum queries to a random oracle.
Compressed oracles have proven to be extremely useful in analyzing the post-quantum security of hash functions in the quantum random oracle model, particularly in the case of random sponges~\cite{Czajkowski2019,Unruh2021,Unruh2023}. However, unlike random functions where all the function outputs are sampled independently, this is not true for random permutations. This makes it difficult to construct compressed oracles for permutations, especially for modelling algorithms that can query the permutation in both directions~\cite{Unruh2021,Unruh2023}.

\paragraph{Double-sided zero-search.}

In light of the difficulty in proving the post-quantum security of the sponge with invertible permutations,
Unruh proposed the following simple conjecture which seems beyond the scope of current techniques.

\begin{conjecture}[Double-sided zero-search, \cite{Unruh2021,Unruh2023}]\label{conj}
Any quantum algorithm $\algo A^{\varphi,\varphi^{-1}}$ which has quantum query-access to a random permutation $\varphi: \bit^{2n} \rightarrow \bit^{2n}$ and its inverse $\varphi^{-1}$ must make $\Omega(2^{n/2})$ many queries to find a zero pair $(x,y)$ such that $\varphi(x||0^n) = y||0^n$ with constant success probability.
\end{conjecture}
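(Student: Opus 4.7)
The plan is to establish the $\Omega(2^{n/2})$ lower bound by exploiting symmetries of the uniform distribution over permutations, via a Young-subgroup symmetrization, to reduce the problem to an unstructured Grover-style search with an $O(1)$ expected number of marked items. Set $N = 2^n$ and partition $\bit^{2n}$ into the ``zero-class'' $M = \set{x \| 0^n : x \in \bit^n}$ of size $N$ and its complement of size $N^2 - N$. The Young subgroup $H = S_N \times S_{N^2 - N} \le S_{N^2}$ acts on $\bit^{2n}$ by permuting $M$ and its complement separately; the product $H \times H$ then acts on permutations via $(\pi_{\text{out}}, \pi_{\text{in}}) \cdot \varphi = \pi_{\text{out}} \circ \varphi \circ \pi_{\text{in}}^{-1}$. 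The uniform distribution over $S_{N^2}$ is invariant under this action, the random variable $|Z_\varphi| := |\set{x \in M : \varphi(x) \in M}|$ is likewise $H \times H$-invariant, and a simple counting argument shows $|Z_\varphi|$ is approximately Poisson with mean $1$, so $|Z_\varphi| = O(1)$ holds with constant probability. Because the ``output a zero-pair'' predicate is also equivariant under the joint action on oracles and outputs, averaging any algorithm $\algo A$ over a uniformly random $(\pi_{\text{out}}, \pi_{\text{in}}) \in H \times H$ yields an $H \times H$-invariant algorithm with the same success probability; thus it suffices to lower bound symmetrized algorithms.

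The second and central step is to analyze what information a symmetrized algorithm can extract from a single forward or inverse query. The regular representation of $H$ on $\C[\bit^{2n}]$ decomposes along the partition $\bit^{2n} = M \sqcup M^c$, and correspondingly the permutation oracle decomposes into blocks according to whether a query lies in $M$ or $M^c$ and whether its image does. The central claim---and the technical heart of the proof---is that within the $H \times H$-invariant subspace, the only information about $\varphi$ a single query can reveal is the membership of each element of $M$ in $Z_\varphi$: all other within-class relabeling data is averaged out by the twirl, and crucially, an inverse query can only reveal which elements of $M$ lie in $\varphi(M)$, which is the same information up to the bijection between $Z_\varphi$ and $\varphi(Z_\varphi)$. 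The effective oracle therefore behaves like a Grover oracle marking the random subset $Z_\varphi \subseteq M$, and finding a zero-pair reduces to Grover search on $N$ items with $O(1)$ expected marks.

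Finally, I would conclude by invoking the standard $\Omega(\sqrt{N/k})$ quantum lower bound for finding one of $k$ marked items among $N$ and combining it with the concentration of $|Z_\varphi|$ around a constant to obtain $\Omega(\sqrt{N}) = \Omega(2^{n/2})$, matching Grover's upper bound up to constants. The hardest step will be the analysis of the symmetrized oracle: rigorously showing that symmetrized forward \emph{and} inverse queries collectively yield no more than Grover marking bits requires a careful decomposition of the permutation oracle with respect to the $H \times H$ representations, which is precisely where compressed-oracle approaches have historically struggled for two-way-accessible permutations. I expect this decomposition to rely on the branching and restriction rules for symmetric-group representations implicit in the Young-subgroup structure, converting representation-theoretic bookkeeping into a concrete query-complexity bound.
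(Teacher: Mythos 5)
Your proposal shares the paper's crucial ingredient---the Young subgroup $G_Z \cong S_{|M|} \times S_{N^2 - |M|}$ and the $G_Z \times G_Z$ action by $\varphi \mapsto \omega \circ \varphi \circ \sigma$---but then takes a genuinely different and, I think, substantially harder route. You propose to \emph{twirl the algorithm}, restrict attention to $G_Z \times G_Z$-invariant query states, and then argue by a representation-theoretic decomposition of the oracle that the invariant sector sees only Grover-style marking information about $Z_\varphi$. You yourself flag this as the ``technical heart,'' and I agree: as stated it is a claim, not an argument. In particular, an inverse query on $y\|0^n$ marks membership in $\varphi(M) \cap M$, which is a \emph{different} size-$K$ subset of $M$ than $Z_\varphi \cap M$ (the two are related by the unknown bijection $x \mapsto \varphi(x)$ restricted to zero pairs), and a forward query on a marked $x$ also leaks the partner $\varphi(x)$. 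Showing that these two correlated marking oracles, together with partner information, still cost $\Omega(\sqrt{N})$ queries is exactly the representation-theoretic bookkeeping you defer, and it is where two-way permutation lower bounds have historically collapsed. Nothing in your sketch yet forecloses the possibility that the symmetrized oracle leaks more than marking bits.

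The paper uses the same symmetrization lemma but in the opposite direction, and thereby avoids ever having to analyze a twirled oracle. Rather than symmetrizing the algorithm, it symmetrizes the \emph{instance}: by \Cref{lemma:rerandomizationZeroSearch}, if $\varphi \in S_N^\kappa$ is any fixed permutation with $\kappa$ zero pairs, then $\omega \circ \varphi \circ \sigma$ for independent $\omega, \sigma \sim G_Z$ is uniform on $S_N^\kappa$, so average-case hardness on $S_N^\kappa$ reduces to worst-case hardness for a \emph{single} $\varphi$. The paper then exhibits an explicit worst-case $\varphi$ that is literally a Grover instance: $\varphi(x\|y) = x\|y$ if $f(x)=1$ and $x\|(y\oplus 1^n)$ otherwise, for an arbitrary $f$ with $|f^{-1}(1)|=K$. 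This $\varphi$ is an involution, so $\varphi = \varphi^{-1}$ and the inverse oracle is the forward oracle---the two-way access issue evaporates. Finding a zero pair of $\varphi$ is exactly finding $x$ with $f(x)=1$, so the standard unstructured-search bound (\Cref{cor:unstructuredSearchHardErrorVersion}) applies directly. Combined with the hypergeometric tail bound on $|Z_\varphi|$ (\Cref{thm:X-pairs-uniform-tail-bound}), this yields the tight $O(T^2/2^n)$ bound on success probability without any spectral decomposition of the twirled oracle. In short: where your approach would require proving that the invariant sector ``is Grover,'' the paper constructs an instance that manifestly \emph{is} Grover and is additionally self-inverse, then uses the same Young-subgroup symmetrization purely as a worst-to-average lift. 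I'd encourage you to check whether your twirled-oracle decomposition can actually be carried out---it would be interesting if so---but be aware that the paper's route is a significant shortcut around exactly the step you identify as hardest.
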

Note that the above lower bound is already tight---a simple Grover search allows one to find a zero-pair using $\Theta(2^{n/2})$ many queries, if one exists. Notice that there is a slight subtlety that arises in the statement of \Cref{conj} since not every permutation has a 
zero pair; for example, the permutation $\varphi(x||y) := (x||y \oplus 1^n)$ has no zero pairs since $\varphi(x||0^n) = x||1^n$, for any $x \in \bit^n$, whereas the identity permutation $\varphi := \mathrm{id}$ has precisely $2^n$ zero pairs. However, for a random permutation there is at least one zero pair with constant probability---a property we show in \Cref{fact:Z-pairs-existence-probability}.

While it is not immediately clear how the double-sided zero-search problem from \Cref{conj} relates to the security of the sponge construction, it was suggested by Unruh~\cite{Unruh2021} that it may already provide some evidence for its collision-resistance. The problem has so far also resisted attempts\footnote{This was pointed out by Unruh~\cite{Unruh2021}.} at solving it using
standard techniques from quantum query complexity; for example, the adversary method~\cite{AMBAINIS2002750} or the polynomial method~\cite{10.1145/502090.502097}. Therefore, a resolution to \Cref{conj} may already offer interesting new insights into proving the post-quantum security of the sponge construction with invertible permutations.

\subsection{Our contributions}

We now give an overview of our contributions in this work.

\paragraph{Resolving Unruh's conjecture.}

We prove the double-sided zero-search conjecture (\Cref{conj}) due to Unruh~\cite{Unruh2021,Unruh2023} and show that finding zero-pairs in a random $2n$-bit permutation with constant probability requires at least $\Theta(2^{n/2})$ many queries. Specifically, we show the following in \Cref{thm:uniformZeroSearchHard}.
\begin{theorem*}[Informal]
    Any quantum algorithm for \textsc{Double-Sided Zero-Search} that makes $T$ queries to an invertible permutation succeeds with probability at most
    $O(T^2/2^n)$.
\end{theorem*}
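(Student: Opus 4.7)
The plan is to reduce the double-sided zero-search problem to a Grover-like search for a uniformly random marked element by exploiting the symmetries of the zero-pair structure through a Young subgroup action, and then to apply a BBBV-style hybrid argument adapted to invertible permutations.

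First, by \Cref{fact:Z-pairs-existence-probability} and a Poisson calculation, a random $2n$-bit permutation has exactly one zero-pair with constant probability and $O(1)$ zero-pairs with overwhelming probability, so up to constant losses in the success probability it suffices to lower-bound $\algo A$'s success on permutations conditioned to have a unique zero-pair $(x^*,y^*)$. Next, I would symmetrize: set $A_{in} = \{x\|0^n : x \in \bit^n\}$ and $A_{out} = \{y\|0^n : y \in \bit^n\}$, each of size $2^n$, and let $H_L, H_R \le S_{\bit^{2n}}$ be their setwise stabilizers---Young subgroups, each isomorphic to $S_{2^n} \times S_{2^{2n}-2^n}$. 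The twisting action $\varphi \mapsto \sigma_R \varphi \sigma_L^{-1}$ with $(\sigma_L,\sigma_R) \in H_L \times H_R$ preserves the uniform distribution on permutations (and the conditional distribution from the previous step), while acting transitively on the possible positions of the unique zero-pair inside $A_{in} \times A_{out}$. Consequently, the problem is equivalent to searching for a uniformly random hidden pair $(x^*, y^*) \in \bit^n \times \bit^n$ satisfying $\varphi(x^*\|0^n) = y^*\|0^n$ inside an otherwise random permutation.

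Finally, I would run a BBBV-style hybrid argument. Compare $\algo A$'s execution on $\varphi$ (containing the hidden zero-pair) against a ``swapped'' oracle in which a transposition destroys the pair, and show that each forward or inverse query changes the $\ell_2$-distance between the corresponding algorithm states by $O(2^{-n/2})$ on average over the uniformly random $(x^*,y^*)$; summing over $T$ queries yields a total distance of $O(T \cdot 2^{-n/2})$, and hence success probability at most $O(T^2/2^n)$.

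\emph{Main obstacle.} The crux is the symmetrization step: justifying it rigorously in the presence of both $\varphi$ and $\varphi^{-1}$ queries, despite the fact that no compressed oracle framework is known for two-sided invertible random permutations. This is precisely where the Young subgroup structure must do real work, presumably by decomposing the algorithm's state space along the isotypic components of $H_L \times H_R$ and tracking how progress toward finding the zero-pair can accumulate in each component. A secondary but related subtlety is that swapping values of $\varphi$ necessarily modifies $\varphi^{-1}$ at the swapped points, so the hybrid comparison must couple both oracles consistently; the symmetrization is what ensures the ``donor'' position of the swap is effectively uniform on each side, keeping the per-query perturbation at the $O(2^{-n/2})$ level required for the final bound.
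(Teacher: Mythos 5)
Your high-level plan is sound and tracks the paper's \emph{alternative} proof (Section~4.2, the superposition-oracle framework) quite closely: purify or condition the permutation distribution, symmetrize by composing on both sides with random elements of the Young subgroup that stabilizes the zero-coordinates, and then argue that per-query progress toward locating the unique zero pair is small because, after symmetrization, its position is uniform. The paper's \emph{main} proof, by contrast, goes a different route you did not take: it reduces worst-case \textsc{Unstructured Search} (with $K$ marked elements) to the fixed-$K$ instance of double-sided zero-search via the involutive permutation $\varphi(x\|y) = x\|(y\oplus 1^n)$ or $x\|y$ according to $f(x)$, invokes the Dohotaru--H\o{}yer Grover lower bound, and then uses the symmetrization lemma (composing with random $G_Z$ elements on both sides) as a worst-to-average-case reduction, summing over $K$ with hypergeometric tail bounds. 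That route, and not the hybrid argument, is what yields the $O(T^2/2^n)$ bound.

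The concrete gap in your proposal is the final step. The BBBV/O2H hybrid argument as you describe it bounds the $\ell_2$-distance between the real and the ``swapped'' executions by $O(T\cdot 2^{-n/2})$; this bounds the \emph{distinguishing advantage}, and since no zero pair exists in the swapped world, the search success probability is bounded by that same quantity, i.e.\ $O(T\cdot 2^{-n/2})$ --- linear in $T$. You cannot simply square this distance to obtain $O(T^2/2^n)$; there is no general norm-versus-probability squaring available in this direction. Indeed this is exactly the distinction the paper draws between its two proofs: the superposition-oracle route gives $\epsilon \leq 2(T+1)\sqrt{\kappa/2^n}$, while the Grover reduction gives $\epsilon \leq 8(T+1)^2\kappa/2^n$, which is strictly stronger for $\epsilon\ll 1$. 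For constant success probability both yield $T=\Omega(2^{n/2})$, so your bound would still resolve Unruh's conjecture, but it does not establish the $O(T^2/2^n)$ form you stated.

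A secondary point: the obstacle you flag (handling $\varphi$ and $\varphi^{-1}$ simultaneously during symmetrization) is real, but your proposed remedy --- decomposing the algorithm's state along the isotypic components of $H_L\times H_R$ --- is far heavier than what is needed. The symmetrization lemma (an elementary double-coset counting argument for Young subgroups) already shows that $\omega\circ\varphi\circ\sigma$ with $\omega\sim G_1,\sigma\sim G_2$ is uniform on the double coset $S_N^\kappa$, and a reduction that holds $\sigma,\omega$ as local randomness can simulate both $\varphi^{\mathrm{sym}}$ and $(\varphi^{\mathrm{sym}})^{-1} = \sigma^{-1}\circ\varphi^{-1}\circ\omega^{-1}$ with single queries and then pull a zero pair of $\varphi$ back through $\sigma,\omega^{-1}$. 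In the purified version, one introduces explicit $\Sigma,\Omega$ registers in superposition over $G_Z\times G_Z$ and conjugates the query unitaries accordingly, again with no representation-theoretic decomposition required.
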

In other words, any algorithm must make at least $T = \Omega(\sqrt{\epsilon 2^n})$ many queries in order to succeed with probability $\epsilon$. As we observe in \Cref{cor:grover-dszs}, this immediately yields a tight lower bound of $\Theta(2^{n/2})$ for constant success probability due to Grover, or more generally a tight bound of $\Theta(\sqrt{\epsilon 2^n})$ for any success probability $\epsilon > 0$. 

Our proof takes place in two steps:

\begin{itemize}
    \item \textbf{Reduction from worst-case unstructured search:} 
    Our first insight lies in the fact that we can reduce \textsc{Unstructured Search} with $K$ out of $2^n$ marked elements to a specific instance of \textsc{Double-Sided Zero-Search} without any overhead; namely, one in which the $2n$-bit permutation $\varphi$ has exactly $K$ zero pairs. Suppose we have quantum oracle access to a function $$f:\{0,1\}^n \rightarrow \{0,1\} \quad\quad\text{ with } \quad\quad |f^{-1}(1)|=K.$$ 
    We now construct the following permutation $\varphi:\{0,1\}^{2n} \rightarrow \{0,1\}^{2n}$ such that, for any inputs $x, y \in \{0,1\}^n$, it holds that:
    $$
    \varphi(x||y) =\begin{cases}
        x||y & \text{ if } f(x)=1 \\
        x||(y \oplus 1^n) & \text{ if } f(x)=0 \, .
    \end{cases}
    $$
    Notice that while $\varphi$ is not random, it has exactly $K$ zero pairs on inputs of the form $x||0^n$ whenever $x$ satisfies $f(x)=1$. Furthermore, $\varphi=\varphi^{-1}$ and so backwards queries are not helpful---thus worst case hardness follows.
    
    \item \textbf{Worst-case to average-case reduction via symmetrization:} 
    Our next insight lies in the fact that we can re-randomize any worst-case permutation $\varphi$ with $K$ zero pairs into an average-case permutation $\varphi^{\mathrm{sym}}$ with $K$ zero pairs: \begin{enumerate}
        \item Select two $2n$-bit permutations $\sigma,\omega$ such that both preserve the property of ending in $0^n$, but are otherwise independently random.\footnote{More formally, we require that $\sigma$ and $\omega$ map strings of the form $(x||0^n)$ to $(y||0^n)$, for $x,y \in \bit^n$.}
        \item Symmetrize $\varphi$ by letting $\varphi^{\mathrm{sym}} = \omega \circ \varphi \circ \sigma$.
    \end{enumerate}
    A zero pair of $\varphi$ can be reconstructed from a zero pair of $\varphi^{\mathrm{sym}}$, but the location of the zero pairs and the permutation behaviour everywhere else is now randomized. We have inverse access to $\varphi$, and so we also have inverse access to $\varphi^{\mathrm{sym}}$. Hardness of the average case is now implied by hardness of the worst case.
\end{itemize}
Our symmetrization argument (formally shown in \Cref{lemma:rerandomizationZeroSearch}) uses insights from the theory of \emph{Young subroups}. We give an extensive treatment of the subject in \Cref{sec:subsetPairsSymGroup}.

In \Cref{sec:alternative}, we give an alternative proof of Unruh's double-sided zero-search conjecture (\Cref{conj}) which combines our technique of symmetrization with a more conventional approach rooted in one-way to hiding~\cite{cryptoeprint:2018/904}. The advantage of our alternative proof is that it is more direct---it avoids the two-step template of the previous proof.
At a high level, it uses the superposition oracle framework~\cite{AMBAINIS2002750}
and introduces a \emph{function register} which is outside of the view of the query algorithm, and contains a uniform superposition over permutations. Note that the superposition oracle formalism makes it especially easy to analyze quantum query-algorithms which query permutations both in the forward, as well as in the backward direction.

Once we introduce the function register, we can then ``symmetrize it in superposition''. This framework allows us to directly show that no quantum algorithm can distinguish whether it is querying a random invertible permutation with exactly $K$ zero pairs, or a random
invertible permutation with no zero pairs—unless it makes a large number of queries. 
We remark, however, that contrary to the work of Zhandry~\cite{Zhandry2018}, we do not need to ``compress'' the superposition oracles and use inefficient representations instead.

\paragraph{Subset pairs and double-sided search.}

Motivated by the sponge construction, we generalize these techniques to handle search problems which are non-uniform in the sense of (1) different constraints on preimages vs images and (2) non-uniform distribution on permutations. In particular, we define a variant of two-sided search in which one is given query access to an $n$-bit permutation $\varphi:\bit^n \rightarrow \bit^n$ and its inverse, for an integer $n=r+c$, and asked to produce inputs that end in $c$ many zeros and whose outputs begin in $r$ many zeros. We further consider a class of non-uniform distributions on permutations that weights permutations according to their number of solutions. Notably, under this new distribution a solution is always guaranteed to exist, avoiding a subtlety in the original uniform problem. We show that these modifications do not make the problem significantly easier in \Cref{thm:nonUniformZeroSearchHard}. 
\begin{theorem*}[Informal]
    Any quantum algorithm for \textsc{Non-uniform Double-Sided Search} that makes at most $T$ queries to an invertible (non-uniform) random permutation and succeeds with probability $\epsilon>0$ satisfies
    $\epsilon = O(T^2/2^{\min(r, c)})$.
\end{theorem*}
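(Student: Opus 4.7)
The plan is to extend the two-step template used in Theorem~\ref{thm:uniformZeroSearchHard} to accommodate (i) the asymmetry between the input constraint ($c$ trailing zeros) and the output constraint ($r$ leading zeros), and (ii) the solution-count reweighting of the non-uniform distribution. Without loss of generality I assume $r \le c$ so that $\min(r,c)=r$; the opposite case follows by swapping the roles of forward and inverse queries. Write $I_{\mathrm{in}} = \{x||0^c : x \in \bit^r\}$ and $I_{\mathrm{out}} = \{0^r||z : z \in \bit^c\}$ for the ``interesting'' inputs and outputs, and let $K(\varphi) = |\varphi(I_{\mathrm{in}}) \cap I_{\mathrm{out}}|$ be the number of solutions of $\varphi$.

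The first step is worst-case hardness via a reduction from unstructured search on $\bit^r$. Given $f:\bit^r\to\bit$ with $|f^{-1}(1)|=K$, I would define a permutation $\varphi_f$ by setting $\varphi_f(x||0^c)=0^r||x||0^{c-r}$ when $f(x)=1$ and $\varphi_f(x||0^c)=1^r||x||0^{c-r}$ when $f(x)=0$, and then extending $\varphi_f$ to the remaining inputs by any fixed bijection onto the unused outputs. Each forward or inverse query to $\varphi_f$ is simulable using at most one query to $f$ after pattern-matching on the argument, so a $T$-query algorithm for double-sided search on $\varphi_f$ with success probability $\delta$ yields a $T$-query unstructured-search algorithm, and by optimality of Grover one has $\delta = O(T^2 K / 2^r)$. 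This is a worst-case bound valid over every $\varphi$ with exactly $K$ solutions.

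The second step is a Young-subgroup symmetrization that upgrades this worst-case bound to a bound for the uniform distribution on permutations with solution count $K$. Sample $\sigma$ uniformly from $S_{I_{\mathrm{in}}}\times S_{\bit^n\setminus I_{\mathrm{in}}}$ and $\omega$ uniformly from $S_{I_{\mathrm{out}}}\times S_{\bit^n\setminus I_{\mathrm{out}}}$; the composition $\omega\circ\varphi\circ\sigma$ preserves $K(\varphi)$, since $\sigma$ fixes $I_{\mathrm{in}}$ setwise and $\omega$ fixes $I_{\mathrm{out}}$ setwise. Reusing the Young-subgroup machinery of~\Cref{sec:subsetPairsSymGroup}, this action is transitive on the set of permutations with solution count $K$: any two such permutations induce the same block decomposition $\bit^n = A\sqcup B\sqcup C\sqcup D$ of sizes $K$, $2^r - K$, $2^c - K$, $2^n - 2^r - 2^c + K$ (according to whether an input lies in $I_{\mathrm{in}}$ and whether its image lies in $I_{\mathrm{out}}$), so the symmetrization induces the uniform measure on that set. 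Since forward and inverse queries to $\omega\circ\varphi\circ\sigma$ are simulable with one query to $\varphi$ or $\varphi^{-1}$, the worst-case bound transfers: writing $\epsilon_K$ for the average success probability under the uniform distribution on permutations with $K$ solutions, $\epsilon_K \le O(T^2 K/2^{\min(r,c)})$.

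The final step averages over $K$. Under the uniform distribution on $S_{2^n}$ the expected solution count equals $2^r\cdot 2^c/2^n = 1$, so the non-uniform distribution (mass proportional to $K(\varphi)$) is a probability measure, and drawing from it is equivalent to drawing $K$ with weight $K p_K$ (where $p_K$ is the uniform probability of $K$ solutions) and then a uniform permutation with that solution count. Combining with the previous step,
\begin{equation*}
\epsilon \;=\; \sum_K K\, p_K\, \epsilon_K \;\le\; \frac{O(T^2)}{2^{\min(r,c)}}\, \sum_K K^2 p_K \;=\; \frac{O(T^2)\cdot \E[K^2]}{2^{\min(r,c)}},
\end{equation*}
and a direct second-moment computation gives $\E[K^2] \le 1 + 2^r(2^r-1)2^c(2^c-1)/(2^n(2^n-1)) = O(1)$, yielding the claimed $\epsilon = O(T^2/2^{\min(r,c)})$. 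I expect the main obstacle to be the transitivity claim in the symmetrization, since the asymmetric block structure forces one to track two coordinates simultaneously rather than just one, which is precisely where the Young-subgroup machinery of~\Cref{sec:subsetPairsSymGroup} is most heavily used.
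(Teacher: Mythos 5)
Your proposal follows the paper's three-step template (worst-case hardness via a reduction from unstructured search, Young-subgroup symmetrization, then averaging over the solution count), and the symmetrization step is the same. The gap is in the worst-case reduction: ``extending $\varphi_f$ to the remaining inputs by any fixed bijection onto the unused outputs'' is not automatically query-efficient, because the set of unused outputs depends on $f$ (for instance, $0^r||x||0^{c-r}$ is unused precisely when $f(x)=0$). An arbitrary bijection from $\bit^n \setminus I_{\mathrm{in}}$ onto that $f$-dependent set cannot in general be evaluated---in either direction---with a single query to $f$, and without cheap invertibility the problem you are reducing to is not actually \textsc{Non-Uniform Double-Sided Search}. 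The paper avoids this by defining $\varphi$ on the whole domain directly via a reversal-and-XOR rule, $\varphi(x||y) = (x||y)^R$ if $f(x)=1$ and $(x||y)^R \oplus (1^r||0^c)$ if $f(x)=0$, so that $x$ always appears (reversed) in the last $\min(r,c)$ bits of the image; this makes both $\varphi$ and $\varphi^{-1}$ simulable with a single query to $f$. Your rule on $I_{\mathrm{in}}$ already carries $x$ in the middle bits of the output, which is the right idea, so a valid completion exists (e.g.\ $\varphi_f(x||y) = (b(x)^r \oplus y_{[1:r]}) \,||\, x \,||\, y_{[r+1:c]}$ with $b(x)=1\oplus f(x)$); you need to fix one such completion rather than leave it arbitrary.

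Your final averaging step, on the other hand, is genuinely simpler than the paper's. The paper splits the sum over the solution count $K$ at the threshold $\kappa=6$ and invokes the exponential tail bound \Cref{thm:X-pairs-nonuniform-tail-bound}, which itself rests on the KL-divergence estimate in \Cref{lem:KLDivergenceBound}. You instead observe that the non-uniform distribution reweights the uniform one by $K$, so
$\epsilon = \sum_K K\, p_K\, \epsilon_K \le O(T^2/2^{\min(r,c)})\,\E_{\mathrm{unif}}[K^2]$,
and the second-moment bound $\E[K^2]\le 2$ (\Cref{lem:xPairsSecondMoment}) closes the argument in one line, bypassing the tail-bound machinery for this theorem entirely and even tightening the constant.
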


The argument mirrors that of the uniform case, though with a slightly more complicated worst-case instance.

\paragraph{Quantum one-wayness of the single-round sponge.}

As an application of our techniques, we give a reduction from our \textsc{Non-uniform Double-Sided Search} problem to the one-wayness (formally defined in \Cref{def:one-wayness}) of the single-round sponge. As a corollary of our bound for \textsc{Non-uniform Double-Sided Search}, in \Cref{thm:spongeOneWay} we establish the first arbitrary parameter post-quantum security result for sponge hashing in the random permutation model.

\begin{theorem*}[Informal]
    Any $T$-query quantum algorithm that breaks the quantum one-wayness of the single-round sponge, where the block function is instantiated with an invertible random permutation, has a success probability of at most $\epsilon = O(T^2/2^{\min(r, c)})$.
\end{theorem*}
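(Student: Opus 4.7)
The plan is to reduce sponge one-wayness directly to \textsc{Non-uniform Double-Sided Search}, so that the bound from \Cref{thm:nonUniformZeroSearchHard} transfers with no asymptotic loss. Concretely, from a $T$-query one-wayness adversary $\algo B$, I construct a Double-Sided Search algorithm $\algo A$ as follows. On input $\varphi$ drawn from the non-uniform distribution (weighted by number of solutions), $\algo A$ samples $y^* \in \bit^r$ uniformly, defines the shifted permutation $\tilde\varphi(a||b) := \varphi(a||b) \oplus (y^*||0^c)$, and runs $\algo B^{\tilde\varphi,\tilde\varphi^{-1}}(y^*)$. Queries to $\tilde\varphi$ and $\tilde\varphi^{-1}$ are simulated with no query overhead, using the identity $\tilde\varphi^{-1}(c||d) = \varphi^{-1}((c \oplus y^*)||d)$. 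When $\algo B$ outputs $x'$, its winning condition---that the first $r$ bits of $\tilde\varphi(x'||0^c)$ equal $y^*$---is by construction equivalent to $\varphi(x'||0^c)$ starting with $0^r$. One final query to $\varphi$ lets $\algo A$ extract $(x',z')$ with $\varphi(x'||0^c) = 0^r||z'$, which is exactly a Double-Sided Search solution.

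The heart of the proof is a short distribution-matching claim: the joint law of $(\tilde\varphi, y^*)$ in the simulation coincides exactly with the joint law of $(\varphi', y^{*'})$ in the one-wayness game. Writing $k(\varphi_0, y_0)$ for the number of $x \in \bit^r$ such that the first $r$ bits of $\varphi_0(x||0^c)$ equal $y_0$, a direct calculation gives
\[
  \Pr[(\varphi', y^{*'}) = (\varphi_0, y^*_0)] \;=\; \frac{k(\varphi_0, y^*_0)}{2^r \cdot (2^{r+c})!},
\]
and the simulation assigns the same value, via (i) uniformity of $y^*$, (ii) the XOR-invariance $k(\varphi_0', 0^r) = k(\varphi_0, y^*_0)$ where $\varphi_0'$ is the XOR-shift of $\varphi_0$, and (iii) the counting identity $\sum_\varphi k(\varphi, 0^r) = (2^{r+c})!$, which pins down the normalization of the non-uniform distribution. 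Because the two distributions are identical, $\algo A$'s success probability equals $\algo B$'s; since $\algo A$ uses $T+1$ queries, \Cref{thm:nonUniformZeroSearchHard} then yields $\epsilon = O(T^2/2^{\min(r,c)})$.

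I expect the main obstacle to be bookkeeping rather than novelty: one must pin down the normalization of the non-uniform distribution correctly, and verify that the XOR-shift acts consistently on both the forward and the inverse oracle (a point at which more naive worst-case-to-average-case reductions for permutation problems tend to break). A related subtlety is that the one-wayness game couples $y^*$ to $\varphi$ (since $y^*$ is read off from $\varphi$ at a random preimage), whereas the simulation samples $y^*$ independently of $\varphi$; the distribution-matching claim reconciles these, and this is precisely where the weight-by-solutions feature of the non-uniform distribution becomes essential. Beyond this, no new quantum query-complexity machinery is needed, since all the hardness has already been absorbed into \Cref{thm:nonUniformZeroSearchHard}.
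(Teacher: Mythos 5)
Your proposal is correct and matches the paper's proof essentially verbatim: the XOR-shift $\tilde\varphi = \mathsf{XOR}_{y^*||0^c}\circ\varphi$ is the paper's construction, your distribution-matching claim (including the normalization identity $\sum_\varphi k(\varphi,0^r) = N!$ and the observation that the weight-by-solutions distribution is exactly what reconciles the coupled and independent samplings of $y^*$) is precisely \Cref{lem:oneWayAlternateGameEqual}, and the final invocation of \Cref{thm:nonUniformZeroSearchHard} with $T+1$ queries is the same.
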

By modeling the block function as an invertible random permutation, our theorem shows the post-quantum security of the single-round sponge in the quantum random oracle model~\cite{10.1007/978-3-642-25385-0_3}.
To prove our main theorem, we first switch to an alternative but equivalent characterization of one-wayness of the sponge (\Cref{lem:oneWayAlternateGameEqual}), where the image is chosen uniformly at random before the actual block function is sampled (in this case, $\varphi$). Note that this approach, however, results in a non-uniform distribution over $\varphi$. To complete the proof, we then give a reduction from the \textsc{Non-Uniform Double-Sided Search} problem (formally defined in \Cref{def:one-wayness}).

\paragraph{Combinatorics of subset-pairs.}

Along the way, we work out the combinatorics of the expected number of input-output pairs of any given kind (i.e., for a general \emph{subset-pair}), as well as strong tail bounds in the case where the expected number is small. We show that the number of such pairs decays exponentially in this case for both the uniform distribution in \Cref{thm:X-pairs-uniform-tail-bound} and a certain non-uniform one in \Cref{thm:X-pairs-nonuniform-tail-bound}. 

\begin{theorem*}[Informal]
    Whenever the average number of subset pairs of a random permutation is a constant, the probability that there are more than $K$ subset pairs is at most $\exp(-\Omega(K))$.
\end{theorem*}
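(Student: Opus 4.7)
The plan is to control $N$, the random variable counting subset pairs of a random permutation $\varphi$, via its descending factorial moments $\E[N^{\underline{K}}] := \E[N(N-1)\cdots(N-K+1)]$, combined with the Markov-style inequality $\Pr[N \geq K] \leq \E[N^{\underline{K}}]/K!$. Writing $A, B \subseteq \bit^n$ for the allowed preimage and image sets (so $\mu := \E[N] = |A|\,|B|/2^n$ is the average), the $K$-th factorial moment counts expected ordered $K$-tuples of distinct valid pairs, which for a uniformly random permutation yields the exact formula
\[
\E[N^{\underline{K}}] \;=\; \prod_{i=0}^{K-1} \frac{(|A|-i)\,(|B|-i)}{2^n - i}.
\]

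Each factor is at most $|A|\,|B|/(2^n - K) \leq 2\mu$ for $K \ll 2^n$, and the entire product vanishes identically once $K$ exceeds $\min(|A|,|B|)$. Since $\mu = O(1)$ forces $\min(|A|,|B|) = O(2^{n/2})$, we always have $\E[N^{\underline{K}}] \leq (2\mu)^K$ throughout the relevant range of $K$. Markov's inequality together with $K! \geq (K/e)^K$ then gives $\Pr[N \geq K] \leq (2e\mu/K)^K$, which is $\exp(-\Omega(K))$ once $K$ exceeds a constant multiple of $\mu$, settling the uniform case.

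For the non-uniform distribution, where each permutation $\varphi$ is reweighted proportionally to $N(\varphi)$, one has $\E_{\mathrm{nu}}[f] = \E_{\mathrm{u}}[N \cdot f]/\mu$ for any statistic $f$. Applied to $f = N^{\underline{K}}$ and using the identity $N \cdot N^{\underline{K}} = N^{\underline{K+1}} + K \cdot N^{\underline{K}}$, the non-uniform factorial moment reduces to the uniform ones and satisfies
\[
\E_{\mathrm{nu}}[N^{\underline{K}}] \;\leq\; (2\mu)^K\bigl(2 + K/\mu\bigr).
\]
Applying Markov as before absorbs the extra polynomial-in-$K$ factor into the $K!$ denominator, preserving the $\exp(-\Omega(K))$ tail.

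The main obstacle I anticipate is handling the regime $K \gtrsim \min(|A|,|B|)$: the naïve bound $(|A|-i)(|B|-i)/(2^n-i) \leq \mu$ becomes lossy as $i$ approaches $|A|$ or $|B|$, but the exact factorial moment formula cleanly vanishes in that regime. The bookkeeping must therefore respect this vanishing rather than obscure it through premature estimates; once this is done, the argument is fully elementary combinatorics of uniformly (or reweighted) random permutations and requires none of the symmetrization or superposition-oracle machinery used elsewhere in the paper.
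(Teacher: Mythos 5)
Your proposal is correct, and it takes a genuinely different route from the paper. The paper's proof of the uniform-case tail bound (Theorem~\ref{thm:X-pairs-uniform-tail-bound}) proceeds by recognizing that the subset-pair count under a uniformly random permutation is exactly hypergeometrically distributed, applying Hoeffding's inequality for the hypergeometric distribution (Lemma~\ref{lem:hoeffding}), and then supplying a separate technical lemma (Lemma~\ref{lem:KLDivergenceBound}) to lower-bound the resulting Kullback--Leibler divergence. For the non-uniform case (Theorem~\ref{thm:X-pairs-nonuniform-tail-bound}), the paper uses the pointwise identity $\Pr_{\pi\sim\mathcal{D}_X}[|X_\pi|=\kappa]=\kappa\cdot\Pr_{\sigma\sim S_N}[|X_\sigma|=\kappa]$ (valid under $|X_1|\,|X_2|=N$), applies the uniform tail bound termwise, and bounds the arising geometric-type series. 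Your argument instead bounds $\Pr[N\geq K]$ by $\E\bigl[\binom{N}{K}\bigr]$, computes the falling factorial moment $\E[N^{\underline{K}}]$ in closed form as a product over draws of a permutation, and for the non-uniform case reduces to the uniform moments via the change-of-measure identity $\E_{\mathrm{nu}}[f]=\E_{\mathrm{u}}[Nf]/\mu$ together with the recurrence $N\cdot N^{\underline{K}}=N^{\underline{K+1}}+K\cdot N^{\underline{K}}$. Both proofs yield the same $\exp(-\Omega(K))$ tail in the regime $K\gtrsim\mu$; yours is more elementary and handles the two distributions uniformly through a single mechanism, while the paper's hypergeometric identification is somewhat more transparent about the distributional shape near the mean. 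One small point worth noting explicitly if you write this up: the observation that $\mu=O(1)$ forces $\min(|A|,|B|)=O(2^{n/2})\ll 2^n$, so that the denominators $2^n-i$ never shrink appreciably over the nonvanishing range of the product, is doing real work in guaranteeing the per-factor bound of $2\mu$; without it the estimate $(|A|-i)(|B|-i)/(2^n-i)\leq 2\mu$ could fail. You flagged this correctly.
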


In the uniform case, the number of subset pairs of permutations follows a so-called \emph{hypergeometric distribution}---a well studied distribution corresponding to sampling without replacement. We relate our non-uniform distribution, in which permutations are favored proportional to their number of subset pairs, to the uniform distribution, and show how to derive strong tail bounds in this case as well.

\subsection{Open questions}


This work does not address post-quantum security of the many-round sponge, which is a natural and important next direction. In particular, we do not address whether the many-round sponge when instantiated with an invertible random permutation is one-way. 
However, we believe that our techniques could potentially open up new directions towards proving the post-quantum security of the many-round sponge, especially when combined with insights from the non-invertible case~\cite{Czajkowski2017}.

Another natural question is whether one can show stronger security properties of the sponge with invertible permutations, even in the single-round case. For example, whether the single-round sponge is \emph{collision-resistant} (or more generally, \emph{collapsing}) when the block function is instantiated with an invertible random permutation. This is not addressed by our work. However, in this case, it is conceivable that there exist direct reductions from collision-resistance (or collapsing) to natural query problems involving invertible permutations, similar to our reduction for one-wayness. In such a case, our techniques would be well suited to analyzing this type of problem.

\subsection{Related work}
\label{sec-related}

Czajkowski, Bruinderink, Hülsing, Schaffner and Unruh~\cite{Czajkowski2017} showed that the many-round sponge is \emph{collapsing}---a strengthening of collision-resistance---in the case when the block function is modeled as a random function or a random (non-invertible) permutation. 
In the same model, Czajkowski, Majenz, Schaffner and Zur~\cite{Czajkowski2019} proved the quantum indifferentiability of the (many-round) sponge. However, in contrast with the former work, the latter relies on Zhandry's compressed oracle technique~\cite{Zhandry2018}. Inspired by Zhandry's compressed oracle method, Rosmanis~\cite{rosmanis2022tight} recently also gave tight bounds for the permutation inversion problem. We remark that the results in the aforementioned works do not apply in the case where the adversary has access to an invertible random permutation.
Zhandry~\cite{Zhandry21} showed that the single-round sponge (in the special case when the message length is roughly half the block length of the permutation) is quantumly indifferentiable from a random oracle (even if the adversary has access to the inverse of the permutation). Because indifferentiability implies that the function behaves ``just like a random oracle'', this immediately implies many desirable properties for this particular  parameter range, such as one-wayness and collision-resistance.
Our results, however, are incomparable to those of Zhandry. While~\cite{Zhandry21} obtains a much more general result, the advantage of our techniques is that we do not require any constraints on the message/block length. Moreover, the restriction to half the block length in~\cite{Zhandry21} is necessary, as the strong form of \emph{reset-indifferentiability} which is achieved is not possible if the message exceeds half the block length. Finally, we remark that
our techniques yield tight lower bounds for the one-wayness of the single-round sponge, whereas the techniques in~\cite{Zhandry21} seem insufficient for this purpose.
Unruh proposed the notion of
\emph{compressed permutation oracles}~\cite{Unruh2021,Unruh2023} which are aimed at dealing with invertible permutations. However, the soundness of the construction is still unresolved and currently relies on a conjecture. Unruh also put forward the double-sided zero-search conjecture~\cite{Unruh2021,Unruh2023} as a simple and non-trivial query problem in the context of invertible random permutations which may potentially have implications for collision-resistance.
Alagic, Bai, Poremba and Shi~\cite{alagic2023twosided} showed space-time trade-offs for two-sided permutation inversion---the task of inverting a random but invertible permutation on a random input. While the inverter does in fact receive access to an inverse oracle, it is crucial that such an oracle is \emph{punctured} at the challenge input; otherwise, the inversion task would become trivial. Although this work does not seem to have any immediate implications for the single-round sponge, it does suggest that permutation inversion remains hard in the presence of inverse oracles.

\paragraph{Concurrent work.} After completing our work, we became aware of a different manuscript on the single-round sponge by Majenz, Malavolta and Walter~\cite{cryptoeprint:2024/1140} which was prepared independently. They develop a generalization of Zhandry’s compressed oracle technique in the setting where the query algorithm has access to a permutation and its inverse, and they also obtain query lower bounds for both sponge inversion and Unruh's double-sided zero-search. While their framework also applies to permutation inversion with respect to arbitrary relations on (single)
input-output pairs, it does not yield tight query lower bounds as in our work.

\paragraph{Acknowledgments.} We thank Gorjan Alagic, Chen Bai, Luke Schaeffer and Kaiyan Shi for many useful discussions.
JC acknowledges funding from the Department of Defense.
AP is supported by the U.S. Department of Energy, Office of Science, National Quantum Information Science Research Centers, Co-design Center for Quantum Advantage (C2QA) under contract number DE-SC0012704.

\subsection{Organization}

In \Cref{sec:prelims}, we review the fundamentals of quantum computing and quantum search lower bounds. In \Cref{sec:subsetPairsSymGroup}, we review basic group theory---with a particular focus on the theory of Young subgroups---and state the key group-theoretic lemma for our symmetrization argument. We also introduce the concept of a \emph{zero pair} as well as a \emph{subset pair}, which constitute the main objects in the query problems we study. Moreover, we analyze many general combinatorial properties of these objects.

In \Cref{sec:queryLowerBounds}, we prove tight lower bounds for the \textsc{Double-Sided Zero-Search} problem using a worst-case to average-case reduction, as well as a direct proof based on the superposition oracle framework. We also define and show lower bounds for a non-uniform variant of this problem which we call \textsc{Non-uniform Double-Sided Search}---a problem which we naturally encounter later in \Cref{sec:spongeOneWay}. In the final section, we give a reduction from \textsc{Non-uniform Double-Sided Search} to the one-way game of the single-round sponge with invertible permutations.


\section{Preliminaries}
\label{sec:prelims}

Let us first introduce some basic notation and relevant background.

\paragraph{Notation.}
For $N\in \N$, we use $[N] = \{1,2,\dots,N\}$ to denote the set of integers up to $N$. The symmetric group on $[N]$ is denoted by $S_N$. 
In slight abuse of notation, we oftentimes identify elements $x \in [N]$ with bit strings $x \in \bit^n$ via their binary representation whenever $N=2^n$ and $n \in \N$. Similarly, we identify permutations $\pi \in S_N$ with permutations $\pi: \bit^{n} \rightarrow 
\bit^n$ over bit strings of length $n$.

We write $\negl(\cdot)$ to denote any \emph{negligible} function, which is a function $f$ such that, for every constant $c \in \mathbb{N}$, there exists an integer $N$ such that for all $n > N$, $f(n) < n^{-c}$.

\paragraph{Probability theory.} 

The notation $x \sim X$ describes that an element $x$ is drawn uniformly at random from the set $X$. Similarly, if $\algo D$ is a distribution, we let
$x \sim \algo D$ denote sampling $x$ according to $\algo D$.
We denote the expectation value of a random variable $X$ 
by $\mathbb{E}[X] = \sum_{x} x \Pr[ X = x] $. We make use of the following distribution.

\begin{definition}[Hypergeometric distribution]\label{defn:hypergeometricDistribution}
    A hypergeometric distribution describes the number of marked elements found by drawing $T$ many objects from a total of $N$, of which $K$ are marked. More formally, for $X \sim \text{Hypergeometric}(N,K,T)$, we have
    \begin{align*}
\Pr[X=k]=\frac{\binom{K}{k}\binom{N-K}{T-k}}{\binom{N}{T}}.
    \end{align*}
\end{definition}

We use the following tail bound for the hypergeometric distribution.
\begin{lemma}[Hoeffding's inequality, \cite{409cf137-dbb5-3eb1-8cfe-0743c3dc925f, Chvatal79hypergeometric}]\label{lem:hoeffding} Let
$X \sim \text{Hypergeometric}(N,K,T)$ and define $p = K/N$. Then, for any $t \in \mathbb{R}$ with $0 \leq t < 1-p$, it holds that
$$
\Pr\left[X \geq (p+t) \cdot T \right] \leq \exp\big(- T \cdot D_{\mathrm{KL}}(p +t || p)  \big) \, ,
$$
where $D_{\mathrm{KL}}(q || p)$ denotes the Kullback-Leibler divergence with
$$
D_{\mathrm{KL}}(q || p) = q \cdot \ln\left(\frac{q}{p} \right) + (1-q) \cdot \ln\left(\frac{1-q}{1-p} \right).
$$
\end{lemma}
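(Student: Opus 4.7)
The plan is to prove this via the standard Chernoff/moment-generating-function method, reducing the hypergeometric case to the (much easier) binomial case through Hoeffding's classical sampling-without-replacement comparison. The whole proof boils down to three steps. First, apply Markov's inequality to $e^{sX}$ for an arbitrary $s > 0$, yielding
$$
\Pr[X \geq (p+t)T] \leq e^{-s(p+t)T}\, \mathbb{E}[e^{sX}].
$$
So it suffices to control $\mathbb{E}[e^{sX}]$.

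The core step is a comparison lemma due to Hoeffding: if $Y \sim \mathrm{Binomial}(T, p)$ with $p = K/N$, then for every convex function $\phi$ one has $\mathbb{E}[\phi(X)] \leq \mathbb{E}[\phi(Y)]$. I would prove this by viewing the hypergeometric sample as the first $T$ entries of a uniformly random permutation of the population, and showing via a coupling/swapping argument (or equivalently, a martingale conditioning argument on the successive draws) that passing from sampling ``without'' to ``with'' replacement can only increase a convex functional of the sum, since the conditional mean of each new indicator is preserved but the conditional variance only grows. Specializing to $\phi(u) = e^{su}$ yields the clean binomial MGF bound $\mathbb{E}[e^{sX}] \leq (1 - p + p e^s)^T$.

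Finally, plugging this into the Chernoff estimate gives $\Pr[X \geq (p+t)T] \leq \bigl(e^{-s(p+t)}(1 - p + p e^s)\bigr)^T$, and minimizing the base over $s > 0$ via calculus gives the optimal choice $s^* = \ln\bigl(\tfrac{(p+t)(1-p)}{(1-p-t)\,p}\bigr)$. A direct substitution and simplification shows that at this value the exponent collapses to exactly $-D_{\mathrm{KL}}(p+t \,\|\, p)$, which is the desired bound.

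The main obstacle is the comparison lemma itself — the Chernoff setup and the one-variable optimization are routine, but Hoeffding's reduction requires a genuinely careful coupling/exchangeability argument to handle the negative correlation inherent in sampling without replacement. If one wanted to avoid the reduction entirely, an alternative route (following Chv\'atal) is to estimate the hypergeometric tail sum $\sum_{k \geq (p+t)T} \binom{K}{k}\binom{N-K}{T-k}/\binom{N}{T}$ directly by bounding ratios of consecutive terms and summing a geometric series; this recovers the same KL-divergence exponent but trades the elegance of the MGF reduction for a more technical combinatorial calculation.
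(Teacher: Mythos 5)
The paper does not prove this lemma at all: it is an imported result, cited to Hoeffding's 1963 paper and to Chv\'atal's 1979 note, and used as a black box in \Cref{thm:X-pairs-uniform-tail-bound}. So there is no internal proof to compare against. That said, your outline is correct and is essentially the argument in the cited sources. The Chernoff/MGF setup and the one-variable optimization (which indeed yields $s^* = \ln\!\bigl(\tfrac{(p+t)(1-p)}{(1-p-t)p}\bigr)$ and collapses the exponent to $-T\,D_{\mathrm{KL}}(p+t\,\|\,p)$) are routine once you have the comparison $\mathbb{E}[e^{sX}] \leq \mathbb{E}[e^{sY}]$ with $Y \sim \mathrm{Binomial}(T,p)$; that comparison is exactly Hoeffding's Theorem~4, proved via the convex-ordering/exchangeability argument you sketch. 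Chv\'atal's alternative — directly bounding the hypergeometric tail sum by term-ratio estimates — is the route you correctly flag as the other standard option, and it recovers the same exponent. In short, your proposal reconstructs the standard proof that the paper silently defers to its references; no gap.
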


\paragraph{Quantum computing.} A finite-dimensional complex Hilbert space is denoted by $\algo H$, and we use subscripts to distinguish between different systems (or registers); for example, we let $\algo H_{A}$ be the Hilbert space corresponding to a system $A$. 
The tensor product of two Hilbert spaces $\algo H_{A}$ and $\algo H_{B}$ is another Hilbert space which we denote by $\algo H_{AB} = \algo H_{A} \otimes \algo H_{B}$.  We let $\algo L(\algo H)$ denote the set of linear operators over $\algo H$. A quantum system over the $2$-dimensional Hilbert space $\algo H = \mathbb{C}^2$ is called a \emph{qubit}. For $n \in \mathbb{N}$, we refer to quantum registers over the Hilbert space $\algo H = \big(\mathbb{C}^2\big)^{\otimes n}$ as $n$-qubit states. We use the word \emph{quantum state} to refer to both pure states (unit vectors $\ket{\psi} \in \algo H$) and density matrices $\rho \in \algo D(\algo H)$, where we use the notation $\algo D(\algo H)$ to refer to the space of positive semidefinite linear operators of unit trace acting on $\algo H$. 

The \emph{trace distance} between two density matrices $\rho,\sigma \in \mathcal{D}(\mathcal{H)}$ is given by
$$
\mathsf{TD}(\rho,\sigma) = \frac{1}{2} \Tr\left[ \sqrt{ (\rho - \sigma)^\dag (\rho - \sigma)}\right].
$$
Note that the trace distance between two pure states $\ket{\psi},\ket{\phi} \in \big(\mathbb{C}^2\big)^{\otimes n}$ satisfies
$$
\mathsf{TD}(\proj{\psi},\proj{\phi}) \leq \big\| \hspace{-0.8mm}\ket{\psi} - \ket{\phi} \hspace{-0.8mm}\big\| \, ,
$$
where $\| \cdot \|$ denotes the Euclidean distance over the vector space $\big(\mathbb{C}^2\big)^{\otimes n}$.

When the dimensions are clear from context, we use $\Id$ to denote the identity matrix. A \emph{unitary} $U: \algo L (\algo H_{A}) \to \algo L(\algo H_{A})$ is a linear operator such that $U^\dagger U = U U^\dagger = \Id_A$. A \emph{projector} $\Pi$ is a Hermitian operator such that $\Pi^2 = \Pi$. Oftentimes, we use the shorthand notation $\bar{\Pi} = \Id - \Pi$.

A quantum algorithm is a uniform family of quantum circuits $\{\algo A_\lambda\}_{\lambda \in \mathbb{N}}$, where each circuit $\algo A_\lambda$ is described by a sequence of unitary gates and measurements; moreover, for each $\lambda \in \mathbb{N}$, there exists a deterministic Turing machine that, on input $1^\lambda$, outputs a circuit description of $\algo A_\lambda$.
We say that a quantum algorithm $\mathcal{A}$ has oracle access to a classical function $f: \{0,1 \}^{n} \rightarrow \{0,1 \}^m$, denoted by $\mathcal{A}^f$, if $\mathcal{A}$ is allowed to use a unitary gate $O^f$ at unit cost in time. The unitary $O^f$ acts as follows on the computational basis states of a Hilbert space $\mathcal{H}_X \otimes \mathcal{H}_Y$ of $n+m$ qubits:
$$
O^f: \quad
\ket{x}_X \otimes \ket{y}_Y \longrightarrow \ket{x}_X \otimes \ket{y \oplus f(x)}_Y,
$$
where the operation $\oplus$ denotes bit-wise addition modulo $2$. Oracles with quantum query-access have been studied extensively, for example in the context of quantum complexity theory~\cite{Bennett_1997}, as well as in cryptography~\cite{10.1007/978-3-642-25385-0_3,cryptoeprint:2018/904,cryptography4010010,}.

\subsection{Lower bounds for unstructured search}

The quantum query complexity of worst-case \textsc{Unstructured Search} is one of the most fundamental results in quantum computing~\cite{Grover96algorithm,Bennett_1997,Boyer_1998}. In this problem, a quantum algorithm is given query access to a function $f:[N] \rightarrow \bit$ and is tasked with finding an $x \in [N]$ such that $f(x)=1$. We will consider this problem subject to the promise that there are exactly $K$ preimages of $1$, i.e., $|f^{-1}(1)|=K$. Note also that only ever consider the worst-case complexity of this problem, meaning the success probability is only over the internal randomness of the algorithm.

The exact query complexity of this problem is well-understood \cite{Grover96algorithm,Zalka99optimal,10.5555/2011791.2011803}. We use the following lower bound, which is a direct corollary of the work of Dohotaru and H\o yer.

\begin{theorem}[\cite{10.5555/2011791.2011803}, Theorem 8] Any algorithm which solves \textsc{Unstructured Search} with $K$ out of $N$ marked elements with worst-case success probability $\epsilon >0$ requires at least
$$
T \geq \frac{\sqrt{N/K}}{2\sqrt{2}} \left(1 + \sqrt{\epsilon} - \sqrt{1 -\epsilon} - \frac{2}{\sqrt{N/K}} \right)
$$
many queries.
    \label{thm:unstructuredSearchHard}
\end{theorem}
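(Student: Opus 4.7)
The plan is to derive this bound as a direct algebraic consequence of Theorem 8 of Dohotaru and H\o yer, which establishes the tight query complexity of $K$-out-of-$N$ search. Since the statement is quoted essentially verbatim from that source, my approach is to first recapitulate the structure of their argument so that the odd-looking closed form $\sqrt{N/K}/(2\sqrt{2})\cdot(1+\sqrt{\epsilon}-\sqrt{1-\epsilon} - 2/\sqrt{N/K})$ is not mysterious, and then explain the rearrangement that produces this particular expression.

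The backbone of the argument is a potential/progress-function analysis in the standard oracle model. Fix a quantum algorithm making $T$ queries, let $|\psi_t\rangle$ denote its state immediately after the $t$-th oracle call on an instance $f:[N]\to\{0,1\}$ with $|f^{-1}(1)|=K$, and let $P_M$ be the projector onto computational basis states whose query register contains an element of the marked set $M=f^{-1}(1)$. The marked amplitude $p_t := \|P_M|\psi_t\rangle\|^2$ must reach at least $\epsilon$ by step $T$, by the correctness requirement together with the fact that a correct output can only come from a measurement concentrated on $M$. The key technical step is to bound the per-query growth of $\sqrt{p_t}$ via a Cauchy--Schwarz-type inequality, yielding an inequality roughly of the form $\sqrt{p_{t+1}} \leq \sqrt{p_t} + c\sqrt{K/N}$ for a specific constant $c$. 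Iterating over the $T$ queries and inverting gives a lower bound on $T$ in terms of $\epsilon$ and $K/N$.

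To get the precise form stated here, I would pass to the continuous-time / Hamiltonian relaxation of the query model, in which the question becomes the minimum time to rotate the initial state to a state of $P_M$-overlap $\sqrt{\epsilon}$ under a driving term of operator norm $\sqrt{K/N}$. This is a geodesic problem on the Bloch sphere of a two-dimensional invariant subspace (spanned by the ``marked'' and ``unmarked'' directions under uniform symmetrization over $M$), and the optimal time is of the form $(\arcsin\sqrt{\epsilon}-\arcsin\sqrt{K/N})/\sqrt{K/N}$. Using the identity $\arcsin(\sqrt{\epsilon})+\arcsin(\sqrt{1-\epsilon}) = \pi/2$ to symmetrize the expression, together with the linear lower bound $\arcsin(x)\geq x$, one obtains exactly the stated polynomial form after collecting terms; the subtracted $2/\sqrt{N/K}$ corrects for the discrepancy between the discrete query count and the continuous-time optimum.

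The main obstacle will be nailing down the exact constant $1/(2\sqrt{2})$ and the correction $-2/\sqrt{N/K}$ without slack: a casual hybrid/BBBV argument loses a constant factor in the Cauchy--Schwarz step, and a direct adversary-method calculation likewise only recovers the $\Omega(\sqrt{\epsilon N/K})$ scaling up to constants. Getting tightness requires the refined ``two-amplitude'' tracking (marked mass and unmarked mass separately, related by an explicit rotation) of Dohotaru--H\o yer; I would reproduce that accounting rather than try to shortcut it, and then finish by the elementary algebraic manipulation described above to convert their inequality into the closed form stated in the theorem.
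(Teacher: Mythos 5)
The statement you are trying to prove is an \emph{imported theorem}: the paper does not prove it at all, it simply quotes Theorem~8 of Dohotaru and H\o yer verbatim and cites the original work. There is therefore no ``paper's own proof'' to compare against, and a blind reconstruction of the proof is not what the paper expects of the reader here; the paper's downstream use of this result is entirely via the rearranged consequence in \Cref{cor:unstructuredSearchHardErrorVersion}, whose proof is a two-line algebraic manipulation.

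Independent of that mismatch, your sketch does not actually constitute a proof. You describe a BBBV-style progress argument and then a ``continuous-time / Hamiltonian relaxation'' with a ``geodesic problem on the Bloch sphere,'' but you concede in your final paragraph that neither of these recovers the precise constant $1/(2\sqrt{2})$ or the $-2/\sqrt{N/K}$ correction, and that you would need to ``reproduce the accounting'' of Dohotaru--H\o yer rather than derive it. That is an acknowledgement that the proposal is not self-contained. The specific algebraic route you suggest is also doubtful: combining the identity $\arcsin\sqrt{\epsilon}+\arcsin\sqrt{1-\epsilon}=\pi/2$ with the bound $\arcsin(x)\ge x$ gives $\arcsin\sqrt{\epsilon}\ge\sqrt{\epsilon}$ and, separately, $\arcsin\sqrt{\epsilon}\ge\frac{\pi}{2}(1-\sqrt{1-\epsilon})$ (via the convexity bound $\arcsin(x)\le\frac{\pi}{2}x$), but neither inequality, nor a convex combination of them, produces the unweighted sum $1+\sqrt{\epsilon}-\sqrt{1-\epsilon}$ that appears in the statement without further, unexplained steps. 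The honest move for this statement is to treat it as a black-box citation, as the paper does, rather than to attempt a from-scratch derivation.
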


We can rearrange the expression in \Cref{thm:unstructuredSearchHard} to obtain a bound on the worst-case success probability of a $T$-query quantum algorithm.

\begin{corollary}
    Any algorithm which solves \textsc{Unstructured Search} with $K$ marked elements out of $N$ with worst-case success probability $\epsilon >0$, satisfies the inequality
$$
\epsilon \leq \frac{8(T+1)^2K}{N}.
$$
    \label{cor:unstructuredSearchHardErrorVersion}
\end{corollary}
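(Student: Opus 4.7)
The plan is a direct algebraic manipulation of the bound in \Cref{thm:unstructuredSearchHard}. Starting from
\[
T \geq \frac{\sqrt{N/K}}{2\sqrt{2}} \left(1 + \sqrt{\epsilon} - \sqrt{1-\epsilon} - \frac{2}{\sqrt{N/K}} \right),
\]
I would first clear the prefactor and move the constant term to the left, obtaining
\[
2\sqrt{2}\,T + 2 \;\geq\; \sqrt{N/K}\,\bigl(1 + \sqrt{\epsilon} - \sqrt{1-\epsilon}\bigr).
\]

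Next, I would use the trivial bound $\sqrt{1-\epsilon} \leq 1$ (valid for $\epsilon \in [0,1]$) to conclude $1 + \sqrt{\epsilon} - \sqrt{1-\epsilon} \geq \sqrt{\epsilon}$, which yields
\[
2\sqrt{2}\,T + 2 \;\geq\; \sqrt{N/K}\cdot\sqrt{\epsilon}.
\]
Squaring both sides (both are nonnegative, assuming the right-hand side of the original bound is nonnegative; otherwise the claim is vacuous) gives
\[
\epsilon \;\leq\; \frac{(2\sqrt{2}\,T+2)^2\,K}{N}.
\]

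To produce the clean form stated in the corollary, I would observe that $2\sqrt{2}\,T + 2 \leq 2\sqrt{2}\,(T+1)$, since $2 \leq 2\sqrt{2}$, so $(2\sqrt{2}\,T + 2)^2 \leq 8(T+1)^2$, yielding the claimed inequality
\[
\epsilon \;\leq\; \frac{8(T+1)^2\,K}{N}.
\]

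There is essentially no obstacle here: the entire argument is one-line algebra together with the elementary estimate $\sqrt{1-\epsilon}\leq 1$ and the constant-loss bound $2 \leq 2\sqrt{2}$ to get a tidy final expression. The only point worth a sentence in the writeup is noting that the inequality is vacuously true when the right-hand side of the hypothesis in \Cref{thm:unstructuredSearchHard} is negative, so squaring is justified in the nontrivial regime.
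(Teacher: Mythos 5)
Your proposal is correct and follows essentially the same route as the paper: both proofs drop the nonnegative $1 - \sqrt{1-\epsilon}$ term from the bound in \Cref{thm:unstructuredSearchHard}, rearrange, use that $1/\sqrt{2} < 1$ (equivalently $2 \leq 2\sqrt{2}$) to absorb the additive constant into $T+1$, and square. The vacuity caveat you raise is actually unnecessary, since after applying $\sqrt{1-\epsilon}\leq 1$ the intermediate inequality $2\sqrt{2}\,T + 2 \geq \sqrt{N\epsilon/K}$ has both sides manifestly nonnegative, so squaring is always legitimate.
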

\begin{proof}
    We have \begin{align*}
        T \geq& \frac{\sqrt{N/K}}{2\sqrt{2}} \left(1 + \sqrt{\epsilon} - \sqrt{1 -\epsilon} - \frac{2}{\sqrt{N/K}} \right) & \text{(\Cref{thm:unstructuredSearchHard})} \\
         \geq& \frac{\sqrt{N/K}}{2\sqrt{2}} \left(\sqrt{\epsilon} - \frac{2}{\sqrt{N/K}} \right).
    \end{align*}
    This implies that $T+1 \geq \sqrt{N\epsilon/K}/\sqrt{8}$, which yields the desired inequality.
\end{proof}
This bound is no longer exact, but will suffice for our purposes.

\section{Subset Pairs and the Symmetric Group}
\label{sec:subsetPairsSymGroup}

In this section, we recall some basic facts from the theory of Young subgroups. Due to the nature of the query problems in \Cref{sec:queryLowerBounds}, we oftentimes encounter subgroups of the symmetric group $S_N$ consisting of permutations that fix certain subsets of $[N]$; for example, zero pairs in the case of \textsc{Double-Sided Zero-Search} in \Cref{prob:doubleSidedZero}.
This forces us to analyze certain subgroups of $S_N$---called Young subgroups---which are direct products of subgroups on fewer elements that form a partition of $[N]$. 

We use the machinery of Young subgroups to prove a key lemma which shows our symmetrization procedure is sound, and to offer an algebraic characterization of the input/output pairs of a permutation satisfying certain constraints.

\subsection{Group Theory}

We first define relevant notions in group theory, and recall known results about the symmetric group $S_N$ on $N$ elements. For a more complete overview of the subject, we refer the reader to the work of James \cite{James_1984}.


Let $S_N$ denote the symmetric group consisting of permutations which act on the set $[N]:=\{1, ...,N\}$. For a subset $A \subset [N]$, let $S_A$ denote the maximal subgroup of $S_N$ which fixes every element in the complement of $A$, i.e. $[N] \setminus A$. Now let $A_1,...,A_l$ be a partition of $[N]$ such that the disjoint union satisfies $\bigsqcup_{i \in [\ell]} A_i =[N]$.

\begin{definition}[Young subgroup]
    A subgroup $H$ of the symmetric group $S_N$ is a Young subgroup if it can be expressed as $H=S_{A_1} \times ... \times S_{A_l}$, where $\times$ denotes the internal direct product and the collection of subsets $\{A_i\}_{i \in [\ell]}$ forms a partition of $[N]$. \label{definition:youngSubgroup}
\end{definition}

The concept of a double coset, which we review below, will also be relevant.

\begin{definition}[Double cosets] Let $H,K$ be subgroups of a group $G$. The double cosets of $G$ under $(H, K)$, denoted $H \diagdown G \diagup K$, are the sets of elements which are invariant under left multiplication by $H$ and right multiplication by $K$. In particular, 
$$
H \diagdown G \diagup K = \big\{\{hxk\, : \,h \in H, k \in K\}\, : \,x \in G\big\}.$$ \label{definition:doubleCoset}
\end{definition}
\vspace{-2mm}
It is well-known that $G$ is the disjoint union of its double cosets for any subgroups $H,K \leq G$. We focus on the double cosets of the symmetric group $S_N$, specifically those generated by Young subgroups. These subgroups admit the following characterization, adapted from Jones \cite{Jones1996DoubleCosets} and James \cite{James_1984}.

\begin{theorem}[\cite{Jones1996DoubleCosets}, Theorem 2.2]
    Let $H, K$ be Young subgroups of $S_N$, with corresponding partitions $A_1,...,A_l$ (for $H$) and $B_1,...,B_m$ (for $K$). Let $\pi \in S_N$ and $C=H\pi K$ be the corresponding double coset. Any other permutation $\pi' \in S_N$ is in $C$ if and only if for all $i \leq l, j \leq m$ we have $|A_i \cap \pi' B_j| = |A_i \cap \pi B_j|$.  \label{theorem:characterizationYoungDoubleCosets}
\end{theorem}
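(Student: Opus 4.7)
The plan is to prove both directions by unwinding what it means for $h \in H$ and $k \in K$ to act on the partitions $\{A_i\}$ and $\{B_j\}$: namely, $h$ preserves each block $A_i$ setwise, and $k$ preserves each block $B_j$ setwise. The intersection numbers $|A_i \cap \pi B_j|$ are then essentially the invariants that distinguish double cosets.

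For the forward direction, suppose $\pi' = h\pi k$ with $h \in H$ and $k \in K$. Since $k$ preserves each $B_j$ setwise, $k B_j = B_j$, so $\pi' B_j = h \pi B_j$. Intersecting with $A_i$ and applying $h^{-1}$, which preserves $A_i$, gives
\[
|A_i \cap \pi' B_j| \;=\; |A_i \cap h\pi B_j| \;=\; |h^{-1}A_i \cap \pi B_j| \;=\; |A_i \cap \pi B_j|.
\]

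For the reverse direction, the strategy is to construct $h$ explicitly and then show that the forced value $k := \pi^{-1} h^{-1} \pi'$ lies in $K$. Let $C_{ij} := A_i \cap \pi B_j$ and $C'_{ij} := A_i \cap \pi' B_j$. Since $\{\pi B_j\}_j$ and $\{\pi' B_j\}_j$ both partition $[N]$, each $A_i$ is partitioned by $\{C_{ij}\}_j$ and also by $\{C'_{ij}\}_j$. By hypothesis $|C_{ij}| = |C'_{ij}|$ for all $i,j$, so for each fixed $i$ we may pick an arbitrary bijection $A_i \to A_i$ sending $C_{ij}$ to $C'_{ij}$ for every $j$; gluing these together across $i$ produces a permutation $h \in S_{A_1} \times \cdots \times S_{A_\ell} = H$ satisfying $h(C_{ij}) = C'_{ij}$.

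It remains to verify $k := \pi^{-1} h^{-1} \pi' \in K$, i.e.\ $k B_j \subseteq B_j$ for every $j$. Fix $j$; since $h^{-1}$ preserves each $A_i$, we compute
\[
h^{-1}(\pi' B_j) \;=\; \bigsqcup_i h^{-1}(\pi' B_j \cap A_i) \;=\; \bigsqcup_i h^{-1}(C'_{ij}) \;=\; \bigsqcup_i C_{ij} \;=\; \pi B_j.
\]
Applying $\pi^{-1}$ gives $k B_j = \pi^{-1} h^{-1} \pi' B_j = B_j$, so $k \in K$ and $\pi' = h\pi k \in H\pi K$, as desired.

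The main conceptual point (and the only step requiring a bit of care) is the backward direction's construction of $h$: one needs to recognize that the equality of intersection cardinalities lets us choose a piecewise bijection $A_i \to A_i$ respecting the two partitions simultaneously, and that this single choice automatically forces $k$ into the Young subgroup $K$. The rest is bookkeeping with partitions and the setwise-invariance properties of $H$ and $K$.
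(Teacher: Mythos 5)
Your proof is correct and complete: the forward direction uses setwise invariance under $H$ and $K$ to show the intersection numbers are unchanged, and the backward direction correctly constructs $h\in H$ blockwise from the cardinality data and verifies that $k:=\pi^{-1}h^{-1}\pi'$ necessarily lands in $K$. Note that the paper itself gives no proof of this statement---it is stated as an imported theorem and attributed to Jones (Theorem 2.2), so there is no argument in the paper to compare against; your argument is the standard one for characterizing double cosets of Young subgroups by intersection matrices.
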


Intuitively, the above characterization says that the double cosets defined by Young subgroups $(H, K)$ correspond to sets of permutations which look the same if one only considers how they distribute the elements of each $B_j$ among the different $A_i$'s. Two permutations $\pi, \pi'$ are in the same double coset if and only if for every $A_i, B_j$ both $\pi$ and $\pi'$ send the same number of elements from $B_j$ to $A_i$. This characterization will prove useful when showing the random self-reducibility of two-sided search problems. The following lemma and proof is a key component of this reduction, and is based heavily on the approach of Wildon \cite{wildonSymGroupRep}, Theorem 4.4.

\begin{lemma} For any subgroups $H, K$ of a (finite) group $G$ with $x, g \in G$ both in the same $(H, K)$ double coset, there are exactly $|x^{-1}Hx \cap K|$ ways of choosing $h \in H$ and $k \in K$ such that $g=h x k$. \label{lemma:cardinalityDoubleCosets}
\end{lemma}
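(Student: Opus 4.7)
The plan is to use a change-of-variables argument that fixes one solution as a reference and then parameterizes all other solutions by an element of $x^{-1}Hx \cap K$. Since $g$ lies in the double coset $HxK$, there exists at least one pair $(h_0, k_0) \in H \times K$ with $g = h_0 x k_0$; I would begin by fixing such a pair once and for all. The goal is then to set up an explicit bijection between the set $\{(h,k) \in H \times K : g = hxk\}$ and the group $x^{-1}Hx \cap K$.

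For the forward direction, given any solution $(h,k)$, the equation $hxk = h_0 x k_0$ rearranges to $x^{-1}(h^{-1} h_0) x = k k_0^{-1}$. The left-hand side lies in $x^{-1}Hx$ and the right-hand side lies in $K$, so the common value $b := k k_0^{-1}$ lies in $x^{-1}Hx \cap K$. This defines a map $\Phi : (h,k) \mapsto k k_0^{-1}$. For the reverse direction, given any $b \in x^{-1}Hx \cap K$, I would write $b = x^{-1} a x$ for a (necessarily unique) $a \in H$, and set $\Psi(b) := (h_0 a^{-1},\, b k_0)$. A direct computation gives
\[
(h_0 a^{-1}) \cdot x \cdot (b k_0) \;=\; h_0 a^{-1} (x b x^{-1}) x k_0 \;=\; h_0 a^{-1} a \, x k_0 \;=\; h_0 x k_0 \;=\; g,
\]
so $\Psi(b)$ is indeed a valid pair.

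The last step is to verify that $\Phi$ and $\Psi$ are mutually inverse. One direction, $\Phi \circ \Psi = \mathrm{id}$, is immediate from the definition: $\Phi(\Psi(b)) = (b k_0) k_0^{-1} = b$. For the other direction, starting from a solution $(h,k)$ and applying $\Phi$ yields $b = k k_0^{-1}$; then applying $\Psi$ returns $(h_0 a^{-1}, k)$ where $a = x b x^{-1} = x k k_0^{-1} x^{-1}$, and the identity $hxk = h_0 x k_0$ forces $h_0 a^{-1} = h$. Once this bijection is established, the count $|\{(h,k): g = hxk\}| = |x^{-1}Hx \cap K|$ follows immediately.

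I do not anticipate a serious obstacle here, since the argument is purely algebraic manipulation in the group $G$ and uses no properties of the symmetric group. The only subtlety worth being careful about is that $a$ is genuinely well-defined by $b$ (i.e.\ that conjugation by $x$ is a bijection $H \to xHx^{-1}$, which holds in any group), and that the existence of a reference pair $(h_0,k_0)$ is exactly the hypothesis that $g$ and $x$ lie in the same double coset.
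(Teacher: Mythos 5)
Your proposal is correct and uses essentially the same idea as the paper's proof: both parameterize the set of solutions $(h,k)$ with $g=hxk$ by the deviation from a fixed reference solution, and show this deviation ranges bijectively over $x^{-1}Hx\cap K$. Your write-up is somewhat more explicit (you spell out the map $\Phi$, its inverse $\Psi$, and verify both compositions), but the underlying argument is the same.
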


\begin{proof}   

Fix some $x \in G$, and consider any $g\in G$ in the same double coset as $x$. Then, there exist elements $h \in H$, $k \in K$ such that \begin{align}
    g =& h x k \nonumber \\
    =& x (x^{-1} h x) k. \nonumber
\end{align}
Any alternative $k' \in K$ can be written as $k'=u_kk$ for $u_k \in K$, and an alternative $h' \in H$ can be written as $x^{-1}h'x=(x^{-1}hx)u_h$ for $u_h \in x^{-1}H x$. These new elements satisfy \begin{align}
    g' =& h' x k' \nonumber \\
    =& x (x^{-1} h x) u_hu_kk \nonumber
\end{align}
It is apparent that $g'=g$ holds if and only if $u_h = u_k^{-1}$. This implies that 
$$
u_k, u_h \in K \cap x^{-1}Hx.
$$
Thus, there are $|x^{-1}Hx \cap K|$ alternative ways to choose a satisfying $h \in H$ and $k \in K$.

\end{proof}

\subsection{Subset Pairs}
In this section, we introduce the notion of a subset pair, which will be the main object of study. Specifically, we are interested in the complexity of finding input/output pairs of a permutation which lie within certain pairs of subsets.

\paragraph{Zero pairs.} Before giving the general definition of a subset pair, we first define a zero pair as a specific and relevant example of a subset pair.
\begin{example}[Zero pair]\label{example:zeroPair}
Let $Z \subset [2^{2n}]$ be the subset consisting of all $z \in [2^{2n}]$ with binary decomposition $(z_1,z_2,\dots,z_{2n}) \in \bit^{2n}$ such that 
\begin{align*}
z_{n+1} = z_{n+2} = \hdots = z_{2n} = 0.
\end{align*}
\end{example}
Then, $Z$ is a subset of size $2^n$ and, when identifying $S_{2^{2n}}$ with permutations over $2n$-bit strings, $(x||0^n)$ and $(y||0^n)$ form a $Z$-pair for a permutation $\pi: \bit^{2n} \rightarrow \bit^{2n}$, if $\pi(x||0^n) = y ||0^n$. For brevity, we sometimes call $(x,y)$ a zero pair of the permutation $\pi$. We show the following concerning zero pairs in random permutations.

\begin{lemma} 
    A random permutation $\pi \sim S_{2^{2n}}$ contains at least one zero pair with probability $$\underset{\pi \sim S_{2^{2n}}}{\Pr}[|Z_\pi| \geq 1] = 1-1/e+o(1).$$
    Here, $|Z_\pi|$ denotes the random variable for the number of zero pairs of the permutation $\pi$.
    \label{fact:Z-pairs-existence-probability}
\end{lemma}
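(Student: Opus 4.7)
The plan is to recognize that $|Z_\pi|$ follows a hypergeometric distribution and then directly estimate its zero-probability via elementary asymptotics. First I would observe that, for a uniformly random permutation $\pi \sim S_{2^{2n}}$, the image $\pi(\{x||0^n : x \in \bit^n\})$ is a uniformly random $2^n$-element subset of $[2^{2n}]$. Hence $|Z_\pi|$ counts the intersection of this random subset with the fixed $2^n$-element target set $Z$ of all strings ending in $0^n$, so $|Z_\pi| \sim \mathrm{Hypergeometric}(2^{2n}, 2^n, 2^n)$ in the notation of \Cref{defn:hypergeometricDistribution}. Its mean is $KT/N = 1$, which is exactly the regime in which a Poisson(1) limit is expected, matching the target probability $1 - 1/e$.

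Second, I would compute $\Pr[|Z_\pi| = 0]$ directly from the hypergeometric mass function and pass to the limit. The probability equals
\begin{align*}
\Pr[|Z_\pi|=0] \;=\; \frac{\binom{2^{2n}-2^n}{2^n}}{\binom{2^{2n}}{2^n}} \;=\; \prod_{i=0}^{2^n - 1}\left(1 - \frac{2^n}{2^{2n} - i}\right).
\end{align*}
Taking logarithms and Taylor-expanding via $\ln(1-x) = -x + O(x^2)$---which is valid uniformly because each factor $x_i = 2^n/(2^{2n}-i)$ is $O(2^{-n})$---the quadratic remainders contribute at most $O(2^n \cdot 2^{-2n}) = o(1)$ to $\ln \Pr[|Z_\pi|=0]$. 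The linear term is $-\sum_{i=0}^{2^n-1} 2^n/(2^{2n}-i) = -2^n\bigl(H_{2^{2n}} - H_{2^{2n}-2^n}\bigr)$, and the standard harmonic-number asymptotic $H_M - H_{M-T} = \ln\bigl(M/(M-T)\bigr) + O(1/(M-T))$ applied with $M = 2^{2n}$ and $T = 2^n$ evaluates this to $-1 + o(1)$. Combining gives $\Pr[|Z_\pi|=0] = e^{-1} + o(1)$, and the claim follows by taking complements.

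No serious obstacle arises; the only care needed is to check that the Taylor error and the harmonic-sum correction are both $o(1)$, which they are because there are precisely $2^n$ factors each of size $1-\Theta(2^{-n})$. An equivalent and arguably cleaner route, if preferred, is the method of moments: for every fixed $k$, the $k$-th factorial moment $\E\bigl[|Z_\pi|^{(k)}\bigr] = (2^n)_k^2/(2^{2n})_k$ (computed by summing over ordered $k$-tuples of distinct inputs and using that $\pi$ sends any fixed $k$ distinct points into $Z$ with probability $(2^n)_k/(2^{2n})_k$) tends to $1$, matching all factorial moments of Poisson(1). Method-of-moments convergence in distribution then yields $\Pr[|Z_\pi|=0] \to e^{-1}$ directly.
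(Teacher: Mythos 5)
Your proof is correct and rests on the same core identity as the paper: both reduce $\Pr_{\pi}[|Z_\pi| = 0]$ to the ratio $\binom{N-\sqrt{N}}{\sqrt{N}}/\binom{N}{\sqrt{N}}$ with $N = 2^{2n}$, you by recognizing the hypergeometric structure, the paper by directly counting permutations that map $Z$ entirely outside $Z$. The one genuine value you add is the explicit asymptotic analysis (log-product, Taylor expansion, harmonic-number estimate) establishing the limit $1/e$, which the paper only asserts as a known monotone convergence; your method-of-moments alternative via Poisson$(1)$ is a clean second route to the same conclusion but is not needed once the direct computation is done.
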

\begin{proof}
Let $N=2^{2n}$ and let $Z \subset [N]$ be the subset from \Cref{example:zeroPair}, where $|Z| = \sqrt{N}$.
It suffices the estimate the probability that a random permutation $\pi \sim S_{N}$ has no zero pairs. The total number of permutations over $[N]$ which have no zero pairs is
$$
\frac{(N-\sqrt{N})!}{(N-2\sqrt{N})!} \cdot (N-\sqrt{N})! \,\,.
$$
This is because we need to match all $\sqrt{N}$ elements in the subset $Z$ exclusively with elements outside of $Z$. Therefore, we get
\begin{align*}
\underset{\pi \sim S_N}{\Pr}[|Z_\pi| \geq 1] &= 1- \underset{\pi \sim S_N}{\Pr}[|Z_\pi| =0]\\
&= 1 - \frac{(N-\sqrt{N})! \cdot (N-\sqrt{N})!}{N! \cdot (N-2\sqrt{N})!}\\
&= 1- \binom{N - \sqrt{N}}{\sqrt{N}}/\binom{N}{\sqrt{N}}.
\end{align*}
Therefore, the claim follows from the fact that
$\binom{N - \sqrt{N}}{\sqrt{N}}/\binom{N}{\sqrt{N}}$ is a monotonically increasing sequence which approaches the limit $1/e$ as $N \rightarrow  \infty$.  
\end{proof}

We will use the notation $G_Z$ to denote the subgroup $G_Z \leq S_{2^{2n}}$ of permutations which preserve membership in $Z$.

\begin{definition}[Subgroup $G_Z$]
    A permutation $\pi$ is a member of the subgroup $G_Z \leq S_{2^{2n}}$ if and only if $\pi$ fixes the subset $Z$ such that $\pi(Z)=Z$. In other words, for every $x \in \bit^n$, there exists $y \in \bit^n$ such that $\pi(x||0^n)=y||0^n$.
    \label{defn:zeroPairSubgroup}
\end{definition}
Note that the subgroup $G_Z$ decomposes as $G_Z \cong S_{Z} \times S_{[N] \setminus Z}$. 
The algebraic structure of the double cosets induced by $G_Z \times G_Z$ will be central in proving the hardness of \textsc{Double-Sided Zero-Search} in \Cref{prob:doubleSidedZero}.

\paragraph{General statements.} We now turn to defining a subset pair in full generality, and prove a key ``symmetrization lemma'' using the theory of Young subgroups.

\begin{definition}[Subset pair]\label{def:subset-pair} Let $N \in \N$, let $X_1, X_2 \subseteq [N]$ be subsets and $X=(X_1,X_2)$. We say that a pair $(i,j)$ is an $X$-pair for a permutation $\pi \in S_N$ if $i\in X_1, j \in X_2$ and $\pi(i) = j$.
We denote the set of all $X$-pairs of a permutation $\pi \in S_N$ by
$$
X_\pi = \{ (i,j) \, | \,  i\in X_1, j \in X_2 \, \text{ and } \, \pi(i) = j\}.
$$
\end{definition}

\begin{remark} Whenever $X= (X_1,X_2)$ is a pair of identical subsets with $X_1=X_2=Z$, for some $Z \subseteq [N]$, we sometimes refer to the pairs $(i,j)$ in \Cref{def:subset-pair} as $Z$-pairs instead.
This is the case for the zero pair subsets
from \Cref{example:zeroPair}.
\end{remark}

For an $X=(X_1,X_2)$, define the subgroups $G_1 \leq S_N$ and $G_2 \leq S_N$ with
\begin{align*}
    G_1 :=& \{ \pi \in S_{N} \, : \, \pi(X_1) = X_1\} \\
    G_2 :=& \{ \pi \in S_{N} \, : \, \pi(X_2) = X_2\}.
\end{align*}
In other words, $G_i$ consists of all permutations in $S_N$ that fix $X_i$. Moreover, it follows from Definition \ref{definition:doubleCoset} that $G_1$ and $G_2$ are Young subgroups of $S_N$.

\begin{fact} The Young subgroups $G_1 \leq S_N$ and $G_2 \leq S_N$ decompose as
$$
G_i \cong S_{X_i} \times S_{[N] \setminus X_i}, \quad\quad \text{ for } i \in \{1,2\}.
$$
\end{fact}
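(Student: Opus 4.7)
The plan is to exhibit an explicit isomorphism $\Phi : G_i \to S_{X_i} \times S_{[N]\setminus X_i}$ given by restriction to the two blocks of the partition $[N] = X_i \sqcup ([N]\setminus X_i)$. The key observation that makes the restriction maps well-defined is that any $\pi \in G_i$ must also satisfy $\pi([N]\setminus X_i) = [N]\setminus X_i$. This follows from bijectivity of $\pi$: if some $y \in [N]\setminus X_i$ had $\pi(y) \in X_i$, then $y = \pi^{-1}(\pi(y)) \in \pi^{-1}(X_i) = X_i$ (using $\pi(X_i)=X_i$), a contradiction. Consequently $\pi|_{X_i} \in S_{X_i}$ and $\pi|_{[N]\setminus X_i} \in S_{[N]\setminus X_i}$ are both genuine permutations of the respective blocks.

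Next I would define $\Phi(\pi) := (\pi|_{X_i}, \pi|_{[N]\setminus X_i})$ and verify the three standard properties. The map is a group homomorphism because composition of permutations respects restriction to invariant subsets: for $\pi,\pi' \in G_i$ and any $x \in X_i$, $(\pi\pi')|_{X_i}(x) = \pi(\pi'(x)) = \pi|_{X_i} \circ \pi'|_{X_i}(x)$, and similarly on the complement. Injectivity follows because any $\pi \in G_i$ is completely determined by its action on the partition blocks, so $\Phi(\pi) = (\mathrm{id},\mathrm{id})$ forces $\pi = \mathrm{id}$. Surjectivity is by direct construction: given $(\sigma,\tau) \in S_{X_i} \times S_{[N]\setminus X_i}$, define $\pi$ on $[N]$ by gluing $\sigma$ on $X_i$ and $\tau$ on $[N]\setminus X_i$; since $X_i$ and its complement partition $[N]$, this $\pi$ is a bona fide element of $S_N$, it lies in $G_i$ by construction, and $\Phi(\pi) = (\sigma,\tau)$.

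There is essentially no obstacle here: the statement is a direct specialization of \Cref{definition:youngSubgroup} to the two-block partition $\{X_i, [N]\setminus X_i\}$, and the only non-definitional input is the bijectivity argument showing that preserving $X_i$ setwise automatically preserves its complement. This same argument will be invoked repeatedly in what follows whenever we restrict attention to permutations fixing a distinguished subset (such as the subgroup $G_Z$ from \Cref{defn:zeroPairSubgroup}), so it is worth isolating as its own fact even though the proof is brief.
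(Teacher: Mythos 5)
Your proof is correct and takes the only natural route: you make explicit the isomorphism $\pi \mapsto (\pi|_{X_i}, \pi|_{[N]\setminus X_i})$ that the paper treats as immediate from \Cref{definition:youngSubgroup} (the paper states this as a Fact with no proof at all). The one substantive step you identify — that $\pi(X_i) = X_i$ together with bijectivity forces $\pi([N]\setminus X_i) = [N]\setminus X_i$ — is exactly what makes the restriction maps land in the right symmetric groups, and your verification of homomorphism, injectivity, and surjectivity is routine and complete.
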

We introduce the following notation for the double cosets of $S_N$ under $(G_1,G_2)$. 
\begin{definition}    \label{defn:youngDoubleCosetsNotation}
    For $\kappa \in \mathbb{N} \cup \{0\}$, and $X=(X_1, X_2)$ with $X_1, X_2 \subset [N]$, define the subset
    \begin{align*}
         S_{N}^{\kappa, X} := \{\pi \in S_N \, : \, \text{$\pi$ has exactly $\kappa$ many $X$-pairs}\}.   
    \end{align*}
\end{definition} 
We oftentimes omit the superscript $X$ when it is clear from context. Note that $S_N = \bigsqcup_{\kappa=0}^{|X_1|} S_N^\kappa$, i.e., $S_N$ is the disjoint union over all the $S_N^\kappa$. Note, however, that whenever we have $\kappa > |X_1|$ or $\kappa > |X_2|$, it must be that $S_N^\kappa=\emptyset$. The following corollary comes from the characterization in Theorem \ref{theorem:characterizationYoungDoubleCosets}, since $G_1$ and $G_2$ are both Young subgroups.
\begin{corollary}\label{corrolary:zeroSearchYoungSubgroups}
The sets $\{S_N^\kappa\}$ are the double cosets of $S_N$ under $(G_1, G_2)$; formally, 
  \begin{align*}
 G_1 \diagdown S_N \diagup G_2 = \{S_N^\kappa \,|\, \kappa \in \{0,1,\dots,|X_1|\}\}.
  \end{align*}
\end{corollary}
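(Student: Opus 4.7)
The plan is to deduce the corollary as a direct consequence of the Young double coset characterization in Theorem~\ref{theorem:characterizationYoungDoubleCosets}. First I would verify that both $G_1$ and $G_2$ are Young subgroups of $S_N$: indeed, $G_1 \cong S_{X_1}\times S_{[N]\setminus X_1}$ corresponds to the two-part partition $\{X_1,[N]\setminus X_1\}$, and similarly $G_2 \cong S_{X_2}\times S_{[N]\setminus X_2}$ corresponds to $\{X_2,[N]\setminus X_2\}$. Theorem~\ref{theorem:characterizationYoungDoubleCosets} then says that two permutations $\pi,\pi'\in S_N$ lie in the same $(G_1,G_2)$-double coset if and only if $|A\cap \pi(B)|=|A\cap \pi'(B)|$ for every pair $(A,B)$ drawn from the two respective partitions.

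Next, I would observe that the resulting four intersection counts are highly constrained. Arranged as a $2\times 2$ contingency table, the row sums are $|X_1|$ and $|[N]\setminus X_1|$, and the column sums are $|X_2|$ and $|[N]\setminus X_2|$, all of which are fixed independently of $\pi$. Hence the entire table is determined by any single cell, and so the $(G_1,G_2)$-double cosets of $S_N$ are parameterized by a single nonnegative integer.

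Finally, I would identify this integer with $\kappa=|X_\pi|$, the number of $X$-pairs of any representative $\pi$ of the coset. By Definition~\ref{defn:youngDoubleCosetsNotation}, this immediately yields $G_1 \diagdown S_N \diagup G_2 = \{S_N^\kappa : \kappa \in \{0,1,\ldots,|X_1|\}\}$ as claimed, with the understanding that $S_N^\kappa$ is empty whenever $\kappa > \min(|X_1|,|X_2|)$ or $\kappa < \max(0,|X_1|+|X_2|-N)$. I expect the main bookkeeping step to be this last identification, since $|X_\pi|$ is most naturally expressed as a preimage intersection count $|X_1\cap \pi^{-1}(X_2)|$ whereas Theorem~\ref{theorem:characterizationYoungDoubleCosets} produces the image count $|X_1\cap \pi(X_2)|$; the contingency-table view, together with care about which of $G_1,G_2$ acts on the left versus the right, makes it straightforward to convert between these two quantities and complete the argument.
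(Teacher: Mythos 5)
Your overall approach matches the paper's: apply Theorem~\ref{theorem:characterizationYoungDoubleCosets} to $G_1, G_2$, observe the four intersection counts form a $2\times 2$ contingency table with fixed marginals so that a single entry determines the double coset, and identify that entry with $\kappa$. The one place your sketch falls short of a proof is precisely where you flag uncertainty, and your proposed resolution there does not hold up. With $H=G_1$ on the left and $K=G_2$ on the right, the theorem parameterizes $G_1\pi G_2$ by $|X_1 \cap \pi(X_2)|$, whereas $|X_\pi| = |X_1\cap\pi^{-1}(X_2)| = |\pi(X_1)\cap X_2|$. The contingency-table argument does \emph{not} convert between these: the table whose entries are $|A_i \cap \pi(B_j)|$ is determined by $|X_1\cap\pi(X_2)|$ and the marginals, but $|\pi(X_1)\cap X_2|$ is an entry of the table for $\pi^{-1}$, a genuinely different object. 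For a concrete failure, take $N=3$, $X_1=\{1\}$, $X_2=\{2\}$, and $\pi=(1\,2\,3)$: then $|X_1\cap\pi(X_2)|=0$ but $|\pi(X_1)\cap X_2|=1$, and indeed $\pi \in G_1 e G_2$ even though $e$ has no $X$-pairs while $\pi$ has one.

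The correct fix is the second half of your closing remark, pushed to its conclusion: apply the theorem with $H=G_2$ on the left and $K=G_1$ on the right, so the invariant comes out directly as $|X_2\cap\pi(X_1)| = |X_\pi|$. This is what the paper's proof implicitly does (it writes the invariant as $|\varphi(X_1)\cap X_2|$ from the outset). It also means the double cosets $\{S_N^\kappa\}$ are, strictly speaking, $G_2\diagdown S_N\diagup G_1$ rather than $G_1\diagdown S_N\diagup G_2$ as typeset in the corollary---a harmless labeling issue in the zero-pair setting where $G_1=G_2$, and one that the symmetrization lemma absorbs as long as the permutation applied on the input side fixes $X_1$ and the one on the output side fixes $X_2$. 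But you should resolve this by choosing the right order in the theorem, not by hoping the contingency table bridges the gap.
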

\begin{proof}
From \Cref{theorem:characterizationYoungDoubleCosets}, a permutation $\varphi$'s double coset is determined by the data $(|\varphi(X_1) \cap X_2|, |\varphi(X_1) \cap ([N] \setminus X_2)|, |\varphi([N] \setminus X_1) \cap  X_2|, |\varphi([N] \setminus X_1) \cap ([N] \setminus X_2)|)$. Note that once the number of $X$-pairs (equivalently, $|\varphi(X_1) \cap X_2|$) is fixed the remaining quantities are determined. In particular, we have \begin{align*}
    |\varphi(X_1) \cap ([N] \setminus X_2)| =& |X_1| - |\varphi(X_1) \cap X_2|, \\
    |\varphi([N] \setminus X_1) \cap  X_2| =& |X_2| - |\varphi(X_1) \cap X_2|, \\
    |\varphi([N] \setminus X_1) \cap ([N] \setminus X_2)| =& N-|X_1|-|X_2| + |\varphi(X_1) \cap X_2|,
\end{align*}
and hence the number of $X$ pairs uniquely determines $\varphi$'s double coset.
\end{proof}

We are now ready to prove the main tool for analyzing random double-sided search problems, such as \textsc{Double-Sided Zero-Search} in \Cref{prob:doubleSidedZero}.

\begin{lemma}[Symmetrization lemma]
Consider any fixed permutation $\pi \in S_N^\kappa$, for some $\kappa \in \mathbb{N} \cup \{0\}$, and suppose that $\omega \sim G_1, \sigma \sim G_2$ are sampled independently and uniformly at random. Then, $\omega \circ \pi \circ \sigma$ is a uniformly random element of $S_N^\kappa$. \label{lemma:rerandomizationZeroSearch}
\end{lemma}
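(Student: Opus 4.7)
The plan is to combine the double-coset characterization of $S_N^\kappa$ from \Cref{corrolary:zeroSearchYoungSubgroups} with the counting identity of \Cref{lemma:cardinalityDoubleCosets}. These two ingredients give, respectively, that the map $\Phi: G_1 \times G_2 \to S_N$ sending $(\omega, \sigma) \mapsto \omega \circ \pi \circ \sigma$ surjects onto $S_N^\kappa$, and that every element of $S_N^\kappa$ has the same number of $\Phi$-preimages; uniformity of the induced distribution on $S_N^\kappa$ then follows immediately.

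First I would dispose of containment and surjectivity. By construction, $\omega \circ \pi \circ \sigma \in G_1 \pi G_2$ for every $\omega \in G_1$ and $\sigma \in G_2$. Since \Cref{corrolary:zeroSearchYoungSubgroups} identifies the double coset $G_1 \pi G_2$ with $S_N^\kappa$, on one hand the image of $\Phi$ lies inside $S_N^\kappa$, and on the other hand every $\pi' \in S_N^\kappa$ lies in the very same double coset $G_1 \pi G_2$ and therefore admits at least one decomposition $\pi' = \omega \circ \pi \circ \sigma$ with $\omega \in G_1, \sigma \in G_2$.

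Second, I would invoke \Cref{lemma:cardinalityDoubleCosets} with $G = S_N$, $H = G_1$, $K = G_2$, and $x = \pi$: for any $\pi' \in G_1 \pi G_2$, the number of valid factorizations $\pi' = \omega \circ \pi \circ \sigma$ with $\omega \in G_1, \sigma \in G_2$ equals $|\pi^{-1} G_1 \pi \cap G_2|$, a quantity depending only on $\pi$ and the two subgroups---not on the target $\pi'$. Consequently, every $\pi' \in S_N^\kappa$ has exactly the same number of $\Phi$-preimages, and sampling $(\omega, \sigma)$ uniformly from $G_1 \times G_2$ pushes forward to the uniform distribution on $S_N^\kappa$. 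I do not anticipate a substantive obstacle here; the only step that warrants care is matching the orientation of the double coset in \Cref{lemma:cardinalityDoubleCosets} to the composition order $\omega \circ \pi \circ \sigma$ appearing in the statement, so that the invariant count $|\pi^{-1} G_1 \pi \cap G_2|$ is manifestly independent of $\pi'$.
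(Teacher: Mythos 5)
Your proposal is correct and follows essentially the same argument as the paper: invoke \Cref{corrolary:zeroSearchYoungSubgroups} to identify $S_N^\kappa$ with the double coset $G_1 \pi G_2$, then use \Cref{lemma:cardinalityDoubleCosets} to see that each $\varphi \in S_N^\kappa$ has exactly $|\pi^{-1} G_1 \pi \cap G_2|$ preimages under $(\omega,\sigma) \mapsto \omega \circ \pi \circ \sigma$, so the pushforward distribution is uniform. You are if anything slightly more explicit than the paper about the surjectivity step, and you correctly write the invariant as $|\pi^{-1}G_1\pi \cap G_2|$ (the paper's proof has a minor typo, writing $|\pi G_1 \pi^{-1} \cap G_2|$, which does not match the orientation in \Cref{lemma:cardinalityDoubleCosets}).
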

\begin{proof}
    From Lemma \ref{lemma:cardinalityDoubleCosets}, it follows that for any fixed $\pi \in S_N^\kappa$ and some $\varphi \in S_N^\kappa$ the number of $\omega, \sigma$ pairs with $\omega \in G_1, \sigma \in G_2$ such that $\varphi=\omega \circ \pi \circ \sigma$ is $|\pi G_1\pi^{-1} \cap G_2|$, which notably is independent of $\varphi$. In particular, we have \begin{align*}
        \Pr_{\omega \sim G_1, \sigma \sim G_2}[\omega \circ \pi \circ \sigma =\varphi] = \frac{|\pi G_1\pi^{-1} \cap G_2|}{\sum_{\varphi' \in S_N^k} |\pi G_1\pi^{-1} \cap G_2|} = \frac{1}{|S_N^\kappa|}.
    \end{align*}
The probability is the same for all elements $\varphi \in S_N^\kappa$, and so the element $\omega \circ \pi \circ \sigma$ is uniformly random among elements of $S_N^\kappa$.
\end{proof}

\subsection{Combinatorics of Subset Pairs}
\label{sec:combinatoricsSubsetPairs}

In this section, we work out the combinatorics behind the expected number of subset pairs (on average over the choice of permutation), as well derive strong tail bounds in the case where the expected number is small. We show that the number of subset pairs decays exponentially for the uniform distribution (see \Cref{thm:X-pairs-uniform-tail-bound}), as well as for a particular choice of non-uniform distribution (see \Cref{thm:X-pairs-nonuniform-tail-bound}). 

\paragraph{Uniform case.}
We begin by considering the case of uniformly random $\pi \sim S_N$ and derive the expected number of $X$-pairs as well as tail bounds. We show the following:

\begin{theorem}
Let $N \in \N$ and let $X_1, X_2 \subseteq [N]$ be subsets with $X=(X_1,X_2)$. Then, on average over the uniform choice of $\pi \sim S_N$, the expected number of $X$-pairs equals
$$
\underset{\pi \sim S_N}{\E}[|X_\pi|]=\frac{|X_1||X_2|}{N}.
$$
    \label{thm:X-pairs-uniform}
\end{theorem}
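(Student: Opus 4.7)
The plan is to prove this by a direct linearity-of-expectation argument, which is the natural approach for counting fixed patterns in a uniformly random permutation. There is no need for the Young subgroup machinery here; the statement is essentially the first-moment version of the more delicate tail bounds that will come later in \Cref{thm:X-pairs-uniform-tail-bound}.

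First I would write $|X_\pi|$ as a sum of indicators. For each pair $(i,j) \in X_1 \times X_2$, define the indicator random variable
\[
Y_{i,j} \;=\; \mathbf{1}\!\left[\pi(i) = j\right],
\]
so that by \Cref{def:subset-pair} we have the identity
\[
|X_\pi| \;=\; \sum_{i \in X_1} \sum_{j \in X_2} Y_{i,j}.
\]

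Next I would compute each single-pair probability. For a uniformly random $\pi \sim S_N$ and any fixed $i,j \in [N]$, the image $\pi(i)$ is uniformly distributed over $[N]$, so $\Pr[\pi(i) = j] = 1/N$. (Equivalently, there are $(N-1)!$ permutations mapping $i \to j$ out of $N!$ total.) Hence $\mathbb{E}[Y_{i,j}] = 1/N$ for every pair $(i,j) \in X_1 \times X_2$.

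Finally, by linearity of expectation,
\[
\underset{\pi \sim S_N}{\mathbb{E}}\bigl[|X_\pi|\bigr]
\;=\; \sum_{i \in X_1} \sum_{j \in X_2} \mathbb{E}[Y_{i,j}]
\;=\; |X_1|\,|X_2| \cdot \frac{1}{N},
\]
which is the claimed identity. There is no substantive obstacle — linearity of expectation handles the dependence between the $Y_{i,j}$'s automatically, so the argument does not need to invoke any independence between indicators, and it holds for arbitrary (possibly overlapping or equal) subsets $X_1, X_2 \subseteq [N]$.
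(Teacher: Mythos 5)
Your proof is correct and takes essentially the same approach as the paper — a straightforward linearity-of-expectation argument with indicator variables. The only cosmetic difference is that you decompose $|X_\pi|$ at the level of individual pairs $(i,j) \in X_1 \times X_2$ with $\E[Y_{i,j}] = 1/N$, whereas the paper sums over $x \in X_1$ a single indicator for the event $\pi(x) \in X_2$ with expectation $|X_2|/N$; these are the same computation after one application of linearity.
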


\begin{proof}
    For a uniform random $\pi \sim S_N$, we let $I_x$ denote the indicator variable which equals $1$, if $\pi(x) \in X_1$, and $0$ otherwise. Thus, we have
    \begin{align*}
        \E[I_x] =& \frac{|X_2|}{N}, &
        |X_\pi| =& \sum_{x \in X_1} I_x \, . 
    \end{align*}  From the above, it then follows that
    \begin{align*}
        \underset{\pi \sim S_N}{\E}[|X_\pi|] =& \sum_{x \in X_1} \E[I_x] & \text{(Linearity of Expectation)} \\
        =& \frac{|X_1||X_2|}{N}.
    \end{align*}
\end{proof}

A simple corollary of this bound is an alternative characterization of the total number of subset pairs over all permutations.

\begin{corollary}
Let $N \in \N$ and fix arbitrary subsets $X_1, X_2 \subseteq [N]$ with $X=(X_1,X_2)$. Then, there exist precisely $$
    \sum_{\pi \in S_N}|X_\pi|=\frac{|X_1||X_2|}{N} \cdot N!$$ 
    many $X$-pairs over all permutations $\pi \in S_N$.
    \label{corollary:total-X-pairs}
\end{corollary}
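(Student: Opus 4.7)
The plan is to observe that the stated corollary is a direct consequence of \Cref{thm:X-pairs-uniform} together with the definition of expectation under the uniform distribution on $S_N$. Concretely, for a uniformly random $\pi \sim S_N$, each individual permutation in $S_N$ is sampled with probability $1/N!$, so
\[
\underset{\pi \sim S_N}{\E}[|X_\pi|] \;=\; \sum_{\pi \in S_N} \frac{1}{N!} \cdot |X_\pi| \;=\; \frac{1}{N!}\sum_{\pi \in S_N}|X_\pi|.
\]
Rearranging this identity and substituting the value of the expectation given by \Cref{thm:X-pairs-uniform} immediately yields
\[
\sum_{\pi \in S_N}|X_\pi| \;=\; N! \cdot \underset{\pi \sim S_N}{\E}[|X_\pi|] \;=\; \frac{|X_1||X_2|}{N}\cdot N!,
\]
which is exactly the claim.

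There is essentially no obstacle here; the statement is a bookkeeping reformulation of the expectation bound in \Cref{thm:X-pairs-uniform}. The only thing to be careful about is ensuring the correspondence between ``expected value under the uniform distribution on $S_N$'' and ``total count divided by $N!$,'' which is just the definition of the uniform expectation on a finite set. For completeness, one could also give a direct double-counting proof, counting pairs $(\pi,(i,j))$ where $i \in X_1$, $j \in X_2$, and $\pi(i)=j$: for each fixed such pair $(i,j)$, the number of permutations $\pi \in S_N$ with $\pi(i)=j$ is $(N-1)!$, so the total count equals $|X_1|\cdot|X_2|\cdot (N-1)! = \tfrac{|X_1||X_2|}{N}\cdot N!$. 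Either route gives the corollary in a single line.
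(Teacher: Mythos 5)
Your proof is correct and takes essentially the same route as the paper: both rewrite the total count as $N! \cdot \E_{\pi \sim S_N}[|X_\pi|]$ and then substitute the expectation from \Cref{thm:X-pairs-uniform}. The direct double-counting argument you sketch at the end is a pleasant self-contained alternative, but the main line of reasoning matches the paper's proof.
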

    
\begin{proof}
    This is easily verified as follows: \begin{align*}
        \sum_{\pi \in S_N}|X_\pi| =& \underset{\pi \sim S_N}{\mathbb{E}}[|X_\pi|]\cdot|S_N| \\
        =& \frac{|X_1||X_2|}{N} \cdot N! \, . &\text{(\Cref{thm:X-pairs-uniform})}
    \end{align*}
\end{proof}

We now state and prove tail bounds, which will be sufficiently tight for the case where the expected number of subset pairs is small.

\begin{theorem}
   Let $N \in \N$ and $X_1, X_2 \subseteq [N]$ be subsets with $X=(X_1,X_2)$. Then, for any real number $u \geq 6 \, \mathbb{E}_{\sigma \sim S_N}[|X_\sigma|]$, it holds that \begin{align*}
    \underset{\pi \sim S_N}{\Pr}\Big[|X_\pi|\geq \underset{\sigma \sim S_N}{\mathbb{E}}[|X_\sigma|] + u \Big] \leq \exp\left(-\frac{3}{4}u\right).
    \end{align*}
    \label{thm:X-pairs-uniform-tail-bound}
\end{theorem}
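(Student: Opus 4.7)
The plan is to identify $|X_\pi|$ as a hypergeometric random variable and then apply Hoeffding's inequality (\Cref{lem:hoeffding}) together with a direct lower bound on the Kullback-Leibler divergence.

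First, I would show that $|X_\pi| \sim \text{Hypergeometric}(N, |X_2|, |X_1|)$. The key observation is that for $\pi \sim S_N$, the image $\pi(X_1)$ is a uniformly random subset of $[N]$ of cardinality $|X_1|$: for every fixed $Y \subseteq [N]$ with $|Y| = |X_1|$, the number of permutations with $\pi(X_1) = Y$ is $|X_1|!\,(N-|X_1|)!$, independent of $Y$. Since $|X_\pi| = |\pi(X_1) \cap X_2|$ counts how many of the $|X_2|$ ``marked'' elements fall into the random size-$|X_1|$ subset $\pi(X_1)$, this is precisely the sampling-without-replacement model of \Cref{defn:hypergeometricDistribution}.

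Next, I would set $p := |X_2|/N$ and $T := |X_1|$, so that $\mathbb{E}[|X_\pi|] = pT$ by \Cref{thm:X-pairs-uniform}, and apply Hoeffding with $t := u/T$ to get
$$
\underset{\pi \sim S_N}{\Pr}\bigl[|X_\pi| \geq pT + u\bigr] \leq \exp\bigl(-T \cdot D_{\mathrm{KL}}(p + t \,||\, p)\bigr),
$$
valid whenever $t < 1 - p$. If instead $u \geq T(1-p)$, then $pT + u \geq T \geq \min(|X_1|, |X_2|)$, so the event either has probability zero or is a boundary case handled by taking $q \to 1$ (where the KL-divergence term extends by continuity to $\ln(1/p)$, matching the trivial bound $\Pr[|X_\pi| = T] \leq p^T$). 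I will therefore focus on the main regime where Hoeffding applies directly.

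The remaining work is to lower bound $T \cdot D_{\mathrm{KL}}(q \,||\, p) \geq \tfrac{3}{4}u$, where $q := p + u/T \geq 7p$ by the hypothesis $u \geq 6pT$. I would bound the two summands of $D_{\mathrm{KL}}(q \,||\, p) = q\ln(q/p) + (1-q)\ln\bigl((1-q)/(1-p)\bigr)$ separately. For the first, $q\ln(q/p) \geq q \ln 7$ directly. For the second, the elementary inequality $\ln y \geq 1 - 1/y$ applied to $y = (1-q)/(1-p)$ yields $(1-q)\ln\bigl((1-q)/(1-p)\bigr) \geq (1-q) - (1-p) = p - q$. Combining gives $D_{\mathrm{KL}}(q \,||\, p) \geq q(\ln 7 - 1) + p \geq q(\ln 7 - 1)$. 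The numerical fact $\ln 7 > 7/4$ then implies $q(\ln 7 - 1) \geq \tfrac{3}{4}q \geq \tfrac{3}{4}(q-p)$, so $T \cdot D_{\mathrm{KL}}(q \,||\, p) \geq \tfrac{3}{4}(q-p)T = \tfrac{3}{4}u$, as desired. The main delicate point is making the constants align cleanly: the factor $6$ in the hypothesis is precisely what is needed for $q/p \geq 7$, so that $\ln(q/p) - 1 \geq \ln 7 - 1$ leaves enough slack to dominate the negative contribution from the second KL term.
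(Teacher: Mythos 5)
Your proposal is correct and follows the same high-level strategy as the paper: identify $|X_\pi|$ as hypergeometric, apply the Hoeffding tail bound for that distribution, and then lower-bound $T \cdot D_{\mathrm{KL}}(p+t\,\|\,p)$ by $\tfrac{3}{4}u$. The place where you diverge from the paper is the internal KL-divergence estimate (the paper's \Cref{lem:KLDivergenceBound}): for the second summand $(1-q)\ln\bigl((1-q)/(1-p)\bigr)$, the paper proves the lower bound $-t$ by computing the first and second derivatives in $t$ and arguing convexity, whereas you obtain the identical bound $p-q = -t$ in one line from the elementary inequality $\ln y \geq 1 - 1/y$, which is cleaner. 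For the first summand, you keep the full factor $q$ (using $q \geq 7p$ to get $q\ln(q/p) \geq q \ln 7$) while the paper drops to $t$ (using $t \geq 6p$ to get $t\ln(t/p) \geq t\ln 6$); both of $\ln 6 - 1$ and $\ln 7 - 1$ exceed $3/4$, so either yields the constant. You also explicitly flag the boundary case $t \geq 1-p$ that falls outside the range of Hoeffding's inequality as stated in \Cref{lem:hoeffding}; the paper is silent on this. Your one-sentence treatment of it is not a full proof, but it is a genuine loose end in the original argument that is easy to close (in that regime $pT+u \geq |X_1|$, and the residual probability $\Pr[|X_\pi| = |X_1|] \leq p^{|X_1|}$ is dominated by $\exp(-\tfrac{3}{4}u)$ using $p \leq 1/7$), so pointing it out is a small improvement rather than a gap.
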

\begin{proof}
   The number of permutations having $\kappa$ many $X$-pairs for $0 \leq \kappa\leq |X_1|, |X_2|$ is hypergeometrically distributed (see \Cref{defn:hypergeometricDistribution}). In other words,
   \begin{align*}
        \Pr_{\pi \sim S_N}[|X_{\pi}|=\kappa] =& \frac{\binom{|X_2|}{\kappa}\binom{N-|X_2|}{|X_1|-\kappa}}{\binom{N}{|X_1|}}.
    \end{align*}
    This is because each image of $X_1$ can be assigned without replacement by a random permutation to one of $N$ images. Next, we want to count how many images there are in $X_2$, which we interpret as a ``success''. Let us consider the hypergeometric distribution $\text{Hypergeometric}(N,K,T)$ with the following parameters: \begin{align*}
        N :=& N & \text{(Total Objects)} \\
        K :=& |X_2| & \text{(Success Objects)} \\
        T :=& |X_1| & \text{(Trials)}
    \end{align*}
    Let $p=|X_2|/N$ denote the probability that the first draw is a success. Applying Hoeffding's inequality (see~\Cref{lem:hoeffding}), we get \begin{align*}
        \Pr_{\pi \sim S_N}\Big[|X_\pi|\geq (p+t)|X_1| \Big] \leq& \exp\left(-|X_1|D_{\mathrm{KL}}(p+t||p)\right) \\
        \leq& \exp\left(-\frac{3}{4}|X_1|t\right) & \text{(From \Cref{lem:KLDivergenceBound})}.
    \end{align*}
    By substituting $u=|X_1|t$ and applying \Cref{thm:X-pairs-uniform}, we obtain
    \begin{align*}
        \underset{\pi \sim S_N}{\Pr}\Big[|X_\pi|\geq \underset{\sigma \sim S_N}{\mathbb{E}}[|X_\sigma|] + u \Big] \leq \exp\left(-\frac{3}{4}u\right) && \text{(When $u\geq 6\underset{\sigma \sim S_N}{\mathbb{E}}[|X_\sigma|]$)}.
    \end{align*}
\end{proof}

We use the following technical lemma.

\begin{lemma}
    Let $p, t \in (0,1)$ such that $t \geq 6p$ and $p+t<1$. Then, we can lower bound the Kullback-Leibler divergence as follows:
    \begin{align*}
        D_{\mathrm{KL}}(p +t || p) > 3t/4.
    \end{align*}
    \label{lem:KLDivergenceBound}
\end{lemma}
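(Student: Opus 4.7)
The plan is to split the Kullback-Leibler divergence into its two natural pieces and bound each separately, using only elementary logarithmic inequalities. Writing out the definition gives
\[
D_{\mathrm{KL}}(p+t\,\|\,p) = (p+t)\ln\!\frac{p+t}{p} + (1-p-t)\ln\!\frac{1-p-t}{1-p},
\]
where the first term is positive and the second is negative. I would handle them in turn.

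For the first (positive) term, the hypothesis $t \geq 6p$ immediately gives $(p+t)/p \geq 7$, so $\ln((p+t)/p) \geq \ln 7$. Since additionally $p+t \geq t$, I obtain the clean bound
\[
(p+t)\ln\!\frac{p+t}{p} \;\geq\; t \ln 7.
\]

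For the second (negative) term, I would use the standard inequality $\ln(1-x) \geq -x/(1-x)$ valid for all $x \in [0,1)$, which follows in one line by integrating $-1/(1-s) \geq -1/(1-x)$ over $s \in [0,x]$. Applying this with $x = t/(1-p) \in (0,1)$ (which is legitimate since $p+t<1$), one gets
\[
\ln\!\frac{1-p-t}{1-p} \;=\; \ln\!\left(1 - \frac{t}{1-p}\right) \;\geq\; -\frac{t/(1-p)}{1 - t/(1-p)} \;=\; -\frac{t}{1-p-t},
\]
so multiplying through by the nonnegative factor $1-p-t$ gives $(1-p-t)\ln\frac{1-p-t}{1-p} \geq -t$.

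Combining the two bounds yields $D_{\mathrm{KL}}(p+t\,\|\,p) \geq t(\ln 7 - 1)$. Since $\ln 7 > 1.94$, we have $\ln 7 - 1 > 0.94 > 3/4$, giving the strict inequality $D_{\mathrm{KL}}(p+t\,\|\,p) > 3t/4$ as claimed. There is no real obstacle here; the only point to be careful about is picking bounds tight enough that the slack $\ln 7 - 1 - 3/4 \approx 0.196$ is positive, which is exactly what the hypothesis $t \geq 6p$ (forcing the ratio $(p+t)/p$ up to at least $7$) is engineered to guarantee.
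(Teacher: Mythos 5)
Your proof is correct and follows the same two-term decomposition of $D_{\mathrm{KL}}(p+t\,\|\,p)$ as the paper, bounding the positive logarithmic term below and the negative one by $-t$, then combining. The only minor differences are that you use the elementary inequality $\ln(1-x)\geq -x/(1-x)$ for the negative term (where the paper instead argues via the sign of the second derivative and the tangent line at $t=0$), and you obtain the slightly sharper constant $\ln 7$ rather than the paper's $\ln 6$ for the positive term; neither change alters the substance of the argument.
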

\begin{proof}
    Recall that the Kullback-Leibler divergence $D_{\mathrm{KL}}(p +t || p)$ is defined as
    
    \begin{align*}
        D_{\mathrm{KL}}(p+t||p) =& \underbrace{(p+t)\ln\left(\frac{p+t}{p}\right)}_{=D_1}
        + \underbrace{(1-p-t)\ln\left(\frac{1-p-t}{1-p}\right)}_{=D_2}.
    \end{align*}
    We now bound $D_1$ and $D_2$ separately. 
    Beginning with the first term, we get \begin{align*}
        D_1 > t \ln\left(\frac{p+t}{p}\right)> t \ln\left(\frac{t}{p}\right) \geq t \ln(6) \, ,
    \end{align*}
    where we used that $t\geq 6p$.
    We now continue with the second term. First, we consider the following first-order derivatives with respect to $t$: \begin{align*}
        \frac{\partial}{\partial t} \left(D_2\right)=& -\ln\left(\frac{p+t-1}{p-1}\right)-1 \\
        \frac{\partial}{\partial t} \left(D_2\right) \big |_{t=0} =& -1 \, .
    \end{align*}
 Calculating the second-order derivatives, we also find that
 \begin{align*}
        \frac{\partial^2}{\partial t^2} \left(D_2\right)=&-\frac{1}{p+t-1} \\
        \frac{\partial^2}{\partial t^2} \left(D_2\right) >& \, 0 \, . & \text{(From $p+t<1$)}.
    \end{align*}
    It follows that $D_2$ is concave upwards in the relevant regime, and the derivative at $t=0$ is $-1$. We further have $D_2|_{t=0}=0$, so when $p+t<1$ and $t \geq 0$ we have $ D_2 > -t$.
    Combining these results, we obtain the desired inequality:$$
        D(p+t||p) > t \ln (6) - t > 3t/4.
 $$
\end{proof}

\paragraph{Non-uniform case.}

In this section, we work out the combinatorics for the number of subset pairs with respect to a particular \emph{non-uniform} distribution which assigns more weight to permutations that have a large amount of $X$-pairs.

\begin{definition}
Let $N \in \N$. The non-uniform distribution $\algo D_X$ over permutations in $S_N$ is parameterized by a pair of subsets $X=(X_1,X_2)$, and is defined as follows:
    $$
    \underset{\Phi \sim \algo D_X}{\Pr}[\Phi = \varphi] = \frac{|X_\varphi|}{\sum_{\, \sigma \in S_N} |X_\sigma|} \,\,, \,\quad \text{ for } \, \varphi \in S_N.$$
    \label{defn:nonUniformDist}
\end{definition}

When $X$ is clear from the context, we sometimes denote $\algo D_X$ by $\algo D$. We begin by bounding the average number of $X$-pairs.

\begin{theorem}
    Let $N \in \N$ and $X_1, X_2 \subset [N]$ be non-empty subsets. Let $|X_\pi|$ denote the number of $X$-pairs with respect to $\pi \in S_N$. Then, the average number of $X$-pairs over the choice of permutation from $\pi\sim\algo D_X$ satisfies
    $$
    1  \, \leq \, \underset{\pi \sim \algo D_X}{\E}[|X_\pi|] \,\leq \, 1 + \frac{|X_1||X_2|}{N}.
    $$
    \label{thm:X-pairs-nonuniform}
\end{theorem}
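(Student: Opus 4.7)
The plan is to reduce the expectation under the non-uniform distribution to a second-moment calculation under the uniform distribution, and then bound the resulting ratio. By unfolding \Cref{defn:nonUniformDist}, we have
\begin{align*}
\underset{\pi \sim \algo D_X}{\E}[|X_\pi|] \;=\; \sum_{\pi \in S_N} \frac{|X_\pi|^2}{\sum_{\sigma \in S_N} |X_\sigma|} \;=\; \frac{\underset{\pi \sim S_N}{\E}\bigl[|X_\pi|^2\bigr]}{\underset{\sigma \sim S_N}{\E}\bigl[|X_\sigma|\bigr]}.
\end{align*}
The denominator is already handled by \Cref{thm:X-pairs-uniform}, which gives $|X_1||X_2|/N$. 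So the task reduces to computing (or tightly bounding) the second moment $\E_{\pi \sim S_N}[|X_\pi|^2]$.

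For the numerator, I would expand $|X_\pi|^2$ via indicator variables. Writing $I_x = \ind[\pi(x) \in X_2]$ and $|X_\pi| = \sum_{x \in X_1} I_x$, linearity of expectation yields
\begin{align*}
\underset{\pi \sim S_N}{\E}\bigl[|X_\pi|^2\bigr] \;=\; \sum_{x \in X_1} \Pr[\pi(x) \in X_2] \;+\; \sum_{\substack{x, x' \in X_1 \\ x \neq x'}} \Pr[\pi(x), \pi(x') \in X_2].
\end{align*}
Since $\pi$ is a uniformly random permutation, the diagonal probabilities are $|X_2|/N$ and the off-diagonal joint probabilities are $|X_2|(|X_2|-1)/(N(N-1))$, because once $\pi(x)$ lands in $X_2$, a second distinct image must land in one of the remaining $|X_2|-1$ elements of $X_2$ among the remaining $N-1$ image slots. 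Summing gives
\begin{align*}
\underset{\pi \sim S_N}{\E}\bigl[|X_\pi|^2\bigr] \;=\; \frac{|X_1||X_2|}{N} + \frac{|X_1|(|X_1|-1)\,|X_2|(|X_2|-1)}{N(N-1)}.
\end{align*}
Dividing by $|X_1||X_2|/N$ produces the clean closed form
\begin{align*}
\underset{\pi \sim \algo D_X}{\E}[|X_\pi|] \;=\; 1 + \frac{(|X_1|-1)(|X_2|-1)}{N-1}.
\end{align*}

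The final step is to sandwich this expression between $1$ and $1+|X_1||X_2|/N$. The lower bound is immediate because the added term is non-negative whenever $|X_1|, |X_2| \geq 1$. For the upper bound I would show $(|X_1|-1)(|X_2|-1)/(N-1) \leq |X_1||X_2|/N$, which after clearing denominators reduces to $|X_1||X_2| + N \leq N(|X_1|+|X_2|)$. Since $|X_1|, |X_2| \leq N$, we have $N|X_1| \geq |X_1||X_2|$, so $N(|X_1|+|X_2|) - |X_1||X_2| \geq N|X_2| \geq N$, which gives the inequality. I do not anticipate any serious obstacle here: the only subtle point is ensuring that the joint probability $\Pr[\pi(x),\pi(x')\in X_2]$ for $x \neq x'$ is computed correctly using the without-replacement structure of a random permutation, rather than treating the images as independent.
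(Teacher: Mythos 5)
Your proof is correct and takes essentially the same route as the paper: rewrite $\E_{\pi\sim\algo D_X}[|X_\pi|]$ as the ratio $\E_{\pi\sim S_N}[|X_\pi|^2]\big/\E_{\sigma\sim S_N}[|X_\sigma|]$ and then control the second moment via indicator variables. Your version is marginally cleaner in two respects---you obtain the ratio identity directly from the definition of $\algo D_X$ (the paper detours through the $S_N^\kappa$ partition), and you compute the second moment exactly, giving the sharper closed form $1 + \tfrac{(|X_1|-1)(|X_2|-1)}{N-1}$ rather than the paper's one-sided bound---but the underlying idea is identical.
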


\begin{proof}
Recall that the subsets $\{S_N^\kappa\}$ form a partition $S_N = \bigsqcup_{\kappa=0}^{|X_1|} S_N^\kappa$, i.e., $S_N$ is the disjoint union over all the $S_N^\kappa$, we have $N! = \sum_{\kappa=0}^{|X_1|} |S_N^\kappa|$, and thus
$$
\underset{\sigma \sim S_N}{\E}[|X_\sigma|] = \sum_{\kappa = 0}^{|X_1|} \kappa \cdot \Pr_{\sigma \sim S_N}[\sigma \in S_N^\kappa] = \sum_{\kappa = 0}^{|X_1|} \kappa \cdot \frac{|S_N^\kappa|}{N!}.
$$
Therefore, we obtain the following identity,
\begin{align}\label{eq:identity-sum-Z}
\sum_{\, \sigma \in S_N} |X_\sigma| = \sum_{\kappa=0}^{|X_1|} \sum_{\sigma \in S_N^\kappa} |X_\sigma| = \sum_{\kappa=0}^{|X_1|} \kappa  \cdot |S_N^\kappa| = N! \cdot \underset{\sigma \sim S_N}{\E}[|X_\sigma|].
\end{align}
We can now write the average number of $X$-pairs as follows:
\begin{align*}
\underset{\pi \sim \algo D_X}{\E}[|X_\pi|]
&= \sum_{\kappa = 0}^{|X_1|} \kappa \cdot \Pr_{\pi \sim \algo D_X}[\pi \in S_N^\kappa]\\
&= \sum_{\kappa = 0}^{|X_1|} \kappa \cdot \sum_{\pi \in S_N^\kappa}\Pr_{\Pi \sim \algo D_X}[\Pi=\pi]\\
&= \sum_{\kappa = 0}^{|X_1|} \kappa \cdot \frac{\sum_{\pi \in S_N^\kappa} |X_\pi|}{\sum_{\sigma \in S_N} |X_\sigma|}\\
&= \sum_{\kappa = 0}^{|X_1|} \kappa \cdot \frac{\kappa \cdot |S_N^\kappa|}{\sum_{\sigma \in S_N} |X_\sigma|}\\
&= \left(\underset{\sigma \sim S_N}{\E}[|X_\sigma|]\right)^{-1} \sum_{\kappa = 0}^{|X_1|} \kappa^2 \cdot \frac{|S_N^\kappa|}{N!} & \text{(By \Cref{{eq:identity-sum-Z}})}\\
&=\frac{\underset{\sigma \sim S_N}{\E}[|X_\sigma|^2]}{\underset{\sigma \sim S_N}{\E}[|X_\sigma|]}. \numberthis \label{eqn:second-moment-identity}
\end{align*}
In other words, the average number of zero pairs in the non-uniform case is the ratio between the second moment and the first moment in the uniform case. We also have \begin{align*}
    \frac{|X_1||X_2|}{N} \leq \underset{\sigma \sim S_N}{\E}[|X_\sigma|^2] \leq & \frac{|X_1||X_2|}{N} + \frac{|X_1|^2|X_2|^2}{N^2} & \text{(\Cref{lem:xPairsSecondMoment})} \\
    \underset{\sigma \sim S_N}{\E}[|X_\sigma|] =& \frac{|X_1||X_2|}{N}. & \text{(\Cref{thm:X-pairs-uniform})}
\end{align*}
Putting everything together, we find that \begin{align*}
    1 \leq \underset{\pi \sim \algo D}{\E}[|X_\pi|] \leq 1 + \frac{|X_1||X_2|}{N}.
\end{align*}
\end{proof}

Recall that the previous lemma relied on the following fact:

\begin{lemma}
    Let $N \in \N$ and $X_1, X_2 \subset [N]$ be non-empty subsets. Let $|X_\pi|$ denote the random variable for the number of $X$-pairs for $\pi \in S_N$. Then, the second moment satisfies 
    $$
    \left(\frac{|X_1||X_2|}{N}\right) \leq \underset{\pi \sim S_N}{\E}[|X_\pi|^2] \leq \left(\frac{|X_1||X_2|}{N} + \frac{|X_1|^2|X_2|^2}{N^2}\right).$$
    \label{lem:xPairsSecondMoment}
\end{lemma}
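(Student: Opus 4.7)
The plan is to exploit the standard indicator-variable decomposition of $|X_\pi|^2$ and reduce the problem to computing a pair probability for a uniformly random permutation. For each $x \in X_1$, let $I_x$ be the indicator of the event $\pi(x) \in X_2$, so that $|X_\pi| = \sum_{x \in X_1} I_x$. Squaring and splitting the diagonal from the off-diagonal contributions gives
\begin{align*}
|X_\pi|^2 \,=\, \sum_{x \in X_1} I_x^2 \,+\, \sum_{\substack{x,y \in X_1 \\ x \neq y}} I_x I_y \,=\, |X_\pi| \,+\, \sum_{\substack{x,y \in X_1 \\ x \neq y}} I_x I_y.
\end{align*}

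Next I would take expectations term-by-term. By \Cref{thm:X-pairs-uniform} the first sum contributes $|X_1||X_2|/N$. For the second sum, the key observation is that for any fixed distinct $x,y \in X_1$, the pair $(\pi(x),\pi(y))$ is uniformly distributed among ordered pairs of distinct elements of $[N]$, so
$$
\Pr_{\pi \sim S_N}\bigl[\pi(x) \in X_2 \,\wedge\, \pi(y) \in X_2\bigr] \,=\, \frac{|X_2|(|X_2|-1)}{N(N-1)}.
$$
Summing over the $|X_1|(|X_1|-1)$ ordered pairs of distinct elements of $X_1$ gives
$$
\underset{\pi \sim S_N}{\E}[|X_\pi|^2] \,=\, \frac{|X_1||X_2|}{N} \,+\, \frac{|X_1|(|X_1|-1)|X_2|(|X_2|-1)}{N(N-1)}.
$$

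The lower bound is then immediate, since the second summand is non-negative. For the upper bound, I need to show that this second summand is at most $|X_1|^2|X_2|^2/N^2$. This follows from two elementary inequalities: first, $|X_1|-1 \leq |X_1|$, and second, $(|X_2|-1)/(N-1) \leq |X_2|/N$ (which holds whenever $|X_2| \leq N$, as cross-multiplication reduces it to $|X_2| \leq N$). Chaining these two bounds multiplies the estimate by the desired factor and yields $|X_1|^2|X_2|^2/N^2$, completing the proof.

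There is really no major obstacle here—the whole lemma is an elementary second-moment calculation for a sum of exchangeable indicators. The only point that requires a moment's care is the upper-bound step, where naively replacing $(|X_2|-1)(|X_1|-1)$ by $|X_1||X_2|$ without also relaxing the denominator $N(N-1)$ to $N^2$ would not produce the claimed shape; both relaxations are needed simultaneously, and both are legitimate exactly because $X_1$ and $X_2$ are subsets of $[N]$.
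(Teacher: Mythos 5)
Your proof is correct and follows essentially the same approach as the paper: the indicator-variable decomposition $|X_\pi|=\sum_{x\in X_1}I_x$, the diagonal/off-diagonal split, and the bound $\frac{|X_2|(|X_2|-1)}{N(N-1)}\leq\frac{|X_2|^2}{N^2}$ on the off-diagonal covariance terms. The only cosmetic difference is that you carry the exact count $|X_1|(|X_1|-1)$ of off-diagonal pairs before relaxing to $|X_1|^2$, whereas the paper bounds the number of summands by $|X_1|^2$ immediately; the two are equivalent.
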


\begin{proof}
    For a uniform random $\pi \sim S_N$ and some $x \in [N]$, we let $I_x$ be the indicator variable which equals $1$, if $\pi(x) \in X_2$, and $0$ otherwise. Then, we find that
    \begin{align*}
        \E[I_x] =& \frac{|X_2|}{N}, &
        |X_\pi|^2 =& \left(\sum_{x \in X_1} I_x \right)^2
    \end{align*} Therefore, we can express the second moment as follows:
    \begin{align*}
        \underset{\pi \sim S_N}{\E}[|X_\pi|^2] =& \sum_{x, y \in X} \E[I_xI_y] & \text{(Linearity of Expectation)} \\
        =& \sum_{x \in X_1} \E[I_x] + \sum_{x \neq y \in X_1} \E[I_xI_y].
    \end{align*}
    Note that the the first term is equal to  $$
    \underset{\sigma \sim S_N}{\E}[|X_\sigma|] = \frac{|X_1||X_2|}{N}.
    $$
    The second term is obviously positive, hence the lower bound. The second term has less than $|X_1|^2$ summands, and each summand can be bounded above via
    $$
    \frac{|X_2|(|X_2|-1)}{N(N-1)} < \frac{|X_2|^2}{N^2}.$$
    Multiplying this through, we can upper bound the term as $\frac{|X_1|^2|X_2|^2}{N^2}$. This gives the desired inequality.
\end{proof}

We can now give a tail bound for the number of $X$-pairs in the non-uniform case. Our bound is tailored to the case where $|X_1| \cdot |X_2|=N$.

\begin{theorem} Let $N \in \N$ and let
$X=(X_1,X_2)$ be a pair of subsets $X_1, X_2 \subseteq [N]$ such that $|X_1|\cdot |X_2|=N$. Then, for any real number $u \geq 6\underset{\sigma \sim \algo D_X}{\E}[|X_\sigma|]$, it holds that
\begin{align*}
        \underset{\pi \sim \algo D_X}{\Pr} \Big[|X_\pi|\geq \underset{\sigma \sim \algo D_X}{\E}[|X_\sigma|] + u \Big] \leq 3\exp\left(-4u/9\right).
    \end{align*}
    \label{thm:X-pairs-nonuniform-tail-bound}
\end{theorem}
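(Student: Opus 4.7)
The plan is to leverage the fact that, under the hypothesis $|X_1|\cdot|X_2|=N$, the non-uniform distribution $\algo D_X$ has an exceptionally clean relationship with the uniform distribution on $S_N$. From \Cref{thm:X-pairs-uniform} we have $\E_{\sigma\sim S_N}[|X_\sigma|]=|X_1||X_2|/N=1$, so using the identity \eqref{eq:identity-sum-Z} the density reduces to
\[
\Pr_{\pi\sim\algo D_X}[|X_\pi|=k]=\frac{k\cdot|S_N^k|}{\sum_{\sigma\in S_N}|X_\sigma|}=\frac{k\cdot|S_N^k|}{N!}=k\cdot\Pr_{\sigma\sim S_N}[|X_\sigma|=k].
\]
In other words, the non-uniform distribution is just the uniform distribution re-weighted by $k$. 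Consequently, writing $v=\E_{\sigma\sim\algo D_X}[|X_\sigma|]+u$, the quantity to bound is
\[
\Pr_{\pi\sim\algo D_X}[|X_\pi|\geq v]=\sum_{k\geq v} k\cdot\Pr_{\sigma\sim S_N}[|X_\sigma|=k].
\]

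Next, I would apply an Abel-summation (layer-cake) rewriting to expose tail probabilities of the uniform distribution, writing $\sum_{k\geq v}k\Pr[X=k]=v\Pr[X\geq v]+\sum_{j\geq 1}\Pr[X\geq v+j]$ with $X=|X_\sigma|$ under $\sigma\sim S_N$. For each tail I would then invoke \Cref{thm:X-pairs-uniform-tail-bound}: since $\E_{\sigma\sim S_N}[|X_\sigma|]=1$, the hypothesis ``$u\geq 6\E$'' from that theorem translates to ``$v-1\geq 6$,'' which is implied here because $\E_{\sigma\sim\algo D_X}[|X_\sigma|]\geq 1$ by \Cref{thm:X-pairs-nonuniform} and $u\geq 6\E_{\sigma\sim\algo D_X}[|X_\sigma|]\geq 6$. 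Thus each $\Pr[X\geq v+j]$ is bounded by $\exp(-\tfrac{3}{4}(v-1+j))\leq \exp(-\tfrac{3}{4}(u+j))$, and the sum over $j\geq 1$ collapses into a geometric series.

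Combining these pieces yields an estimate of the form
\[
\Pr_{\pi\sim\algo D_X}[|X_\pi|\geq v]\leq v\,e^{-3u/4}+\frac{e^{-3/4}}{1-e^{-3/4}}\,e^{-3u/4}.
\]
Since \Cref{thm:X-pairs-nonuniform} also gives the upper bound $\E_{\sigma\sim\algo D_X}[|X_\sigma|]\leq 1+|X_1||X_2|/N=2$, we have $v\leq u+2$, so the right-hand side is at most $(u+C)\,e^{-3u/4}$ for an explicit constant $C$. The main obstacle is to absorb this lingering linear factor of $u$ into the exponential, which is why the stated bound has to weaken the rate from $3/4$ to $4/9$: using $3/4-4/9=11/36>0$, a direct comparison shows $(u+C)\,e^{-3u/4}\leq 3\,e^{-4u/9}$ for all $u\geq 6$, which is exactly the regime covered by the hypothesis. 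Putting the pieces together gives the claimed tail bound, and the only arithmetic to check is that the constants $3$ and $4/9$ are chosen generously enough to swallow $C$ uniformly over $u\geq 6$.
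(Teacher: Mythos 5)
Your proposal is correct, and it takes a genuinely different route from the paper. Both start from the same re-weighting identity $\Pr_{\pi\sim\mathcal D_X}[|X_\pi|=k]=k\cdot\Pr_{\sigma\sim S_N}[|X_\sigma|=k]$ (a consequence of \Cref{corollary:total-X-pairs} and the hypothesis $|X_1|\,|X_2|=N$), but the subsequent summation is organized differently. The paper bounds each \emph{point} probability $\Pr_{\mathcal D_X}[|X_\pi|=\kappa]\leq\kappa\exp(-3\kappa/4)$ (using $\Pr_{S_N}[X=\kappa]\leq\Pr_{S_N}[X\geq\kappa]$), absorbs the linear factor $\kappa$ into the exponent via $\kappa\leq 6^{\kappa/6}$ for $\kappa\geq 6$, and then sums a single geometric series. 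You instead apply an Abel-summation/layer-cake rewriting, $\sum_{k\geq v}k\Pr_{S_N}[X=k]=v\Pr_{S_N}[X\geq v]+\sum_{j\geq1}\Pr_{S_N}[X\geq v+j]$, bound each uniform tail with \Cref{thm:X-pairs-uniform-tail-bound}, sum the geometric series in $j$, and absorb the lingering $v$-factor (via $v\leq u+2$ from \Cref{thm:X-pairs-nonuniform}) into the weakened exponential rate. Your arithmetic checks out: with $C=2+e^{-3/4}/(1-e^{-3/4})\approx2.9$, the function $(u+C)e^{-11u/36}$ is decreasing on $u\geq 6$ and takes value $\approx 1.42<3$ at $u=6$. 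Your version is arguably a touch more careful about the shift between $|X_\pi|\geq\kappa$ and $|X_\pi|\geq\E_{S_N}+u$ in applying the uniform tail bound (you write $\exp(-\tfrac34(v-1+j))$ where the paper writes $\exp(-3\kappa/4)$, glossing over the $-1$); the paper's version is slightly shorter because it never introduces a separate prefactor to absorb. Both yield the stated constants $3$ and $4/9$.
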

\begin{proof}
    
    Note that because $|X_1|\cdot |X_2|=N$ we can write $\algo D_X$ as 
    \begin{align*}
        \underset{\Phi \sim \algo D_X}{\Pr}[\Phi = \varphi] =& \frac{|X_\varphi|}{\sum_{\, \sigma \in S_N} |X_\sigma|} \\
        =& \frac{|X_\varphi|}{N!}&\text{(\Cref{corollary:total-X-pairs})}
    \end{align*}
   Using the above equation, we can now write the probability distribution on $X$-pairs as follows:\begin{align*}
        \Pr_{\pi \sim \algo D} [|X_\pi|= \kappa] =& \kappa \cdot \Pr_{\sigma \sim S_N} [|X_\sigma|=\kappa],
    \end{align*}
for any $\kappa \in \{0,1,\dots,|X_1|\}$. In particular, for $\kappa>6$, we have the bound\begin{align*}
        \Pr_{\pi \sim S_N} [|X_\pi|\geq \kappa] \leq& \exp(-3\kappa/4). & \text{(\Cref{thm:X-pairs-uniform-tail-bound})}
    \end{align*}
    Therefore, we obtain \begin{align*}
        \Pr_{\pi \sim \algo D_X} [|X_\pi|= \kappa] \leq& \kappa \cdot \exp(-3\kappa/4) \\
        \leq& \exp\left(-\left(\frac{9-2\ln(6)}{12}\right)\kappa\right). & \text{(From $\kappa \geq 6$)}.
    \end{align*}
    Putting everything together and bounding the geometric series, we get that \begin{align*}
        \Pr_{\pi \sim \algo D_X} [|X_\pi|\geq \kappa] =& \sum_{i=\kappa}^{|X_1|} \Pr_{\pi \sim \algo D} [|X_\pi|= i] \\
        \leq& \sum_{i= \kappa}^{\infty} \exp\left(-\left(\frac{9-2\ln(6)}{12}\right)i\right) \\
        \leq& \left(1-\exp\Big(-\frac{9-2\ln(6)}{12}\Big)\right)^{-1} \exp\left(-\left(\frac{9-2\ln(6)}{12}\right) \kappa \right)\\
        \leq& 3\exp\left(\frac{-4 \kappa}{9}\right).
    \end{align*}
\end{proof}

\section{Query Problem Lower Bounds}
\label{sec:queryLowerBounds}

Unruh~\cite{Unruh2021,Unruh2023} proposed the task of ``double-sided zero-search''---a simple query problem dealing with invertible permutations which seems to go beyond the scope of current techniques, and may potentially offer new insights into the post-quantum security of the sponge construction. Unruh conjectured that it requires at least $\Omega(2^{n/2})$ many queries to solve the problem with constant success probability---and this is tight due to Grover's algorithm.

In this section, we prove query-lower bounds for Unruh's original double-sided zero-search problem, as well as a non-uniform variant that will be useful in proving the one-wayness of the single-round sponge with invertible permutations. We remark that our lower bounds on double-sided zero-search are tight up to a constant factor, and therefore resolve Unruh's original conjecture.



\subsection{Double-Sided Zero-Search}

In light of the difficulty in proving the post-quantum security of the sponge with invertible permutations,
Unruh~\cite{Unruh2021,Unruh2023} proposed the following simple query problem which seems beyond the scope of current techniques.

\begin{problem}[\textsc{Double-Sided Zero-Search}]
Given quantum query-access to a uniformly random permutation $\varphi : \{0,1\}^{2n} \rightarrow \{0,1\}^{2n}$ as well as its inverse $\varphi^{-1}$, output a pair of strings $x, y \in \{0,1\}^n$ such that $\varphi(x||0^n)=y||0^n$.   
\label{prob:doubleSidedZero}
\end{problem}
Recall that we call $x,y \in \bit^n$ which satisfy $\varphi(x||0^n)=y||0^n$ a \emph{zero pair} of $\varphi$. 

We prove that any algorithm must make at least $T = \Omega(\sqrt{\epsilon 2^n})$ many queries in order to find a zero pair for a random $2n$-bit permutation with probability $\epsilon$. At a high level, the first step of the proof is showing that there is some worst case instances on which the problem is hard for any fixed number of zero pairs, in \Cref{lem:doubleSidedZeroWorstCase}. Then we give a worst to average case reduction to show hardness for a random permutation with a fixed number of zero pairs in \Cref{lem:doubleSidedZeroAverageCase}. Finally, we give a bound for the general problem using the aforementioned bounds as well as tail bounds on the number of zero pairs.

\begin{theorem}
    Any quantum algorithm for \textsc{Double-Sided Zero-Search} that makes $T$ queries to a random $2n$-bit permutation or its inverse and succeeds with probability $\epsilon>0$ satisfies
    $$\epsilon \leq \frac{50(T+1)^2}{2^n}.$$  \label{thm:uniformZeroSearchHard}
\end{theorem}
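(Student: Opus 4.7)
The plan is to execute the two-step strategy already sketched by the authors: first establish a worst-case bound for an instance with exactly $K$ zero pairs by reducing from \textsc{Unstructured Search}, next transfer this to the uniform distribution on permutations with exactly $K$ zero pairs via the symmetrization lemma (\Cref{lemma:rerandomizationZeroSearch}), and finally average over $K$ using the combinatorics of \Cref{sec:combinatoricsSubsetPairs}.

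First I would construct, from any $f:\{0,1\}^n \to \{0,1\}$ with $|f^{-1}(1)|=K$, the involution $\varphi_f$ defined as in the introduction by $\varphi_f(x\|y)=x\|y$ when $f(x)=1$ and $\varphi_f(x\|y)=x\|(y\oplus 1^n)$ when $f(x)=0$. Because $\varphi_f$ is self-inverse, a quantum algorithm with access to $\varphi_f$ and $\varphi_f^{-1}$ can be simulated using the $f$-oracle at unit cost, so by \Cref{cor:unstructuredSearchHardErrorVersion} any $T$-query attack that outputs a zero pair of $\varphi_f$ succeeds with probability at most $8(T+1)^2 K/2^n$. Note $\varphi_f$ lies in $S_N^{K,Z}$ (using the notation of \Cref{defn:youngDoubleCosetsNotation} with $X=(Z,Z)$).

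Second, I would invoke \Cref{lemma:rerandomizationZeroSearch} with the Young subgroup $G_Z$ of \Cref{defn:zeroPairSubgroup}: sampling $\omega,\sigma \sim G_Z$ independently and uniformly, the permutation $\omega\circ\varphi_f\circ\sigma$ is uniformly distributed over $S_N^{K,Z}$, and its inverse is $\sigma^{-1}\circ\varphi_f\circ\omega^{-1}$. Hence each forward or inverse query to the symmetrized permutation costs exactly one query to $\varphi_f$ (with classical pre- and post-processing by $\sigma^{\pm 1},\omega^{\pm 1}$, which are free in the query model), and any zero pair of $\omega\circ\varphi_f\circ\sigma$ yields one for $\varphi_f$ by undoing $\omega,\sigma$ on the $Z$-preserving coordinates. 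Therefore any $T$-query algorithm on a uniformly random element of $S_N^{K,Z}$ also succeeds with probability at most $8(T+1)^2 K/2^n$.

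Third, I average over the number of zero pairs. By \Cref{thm:X-pairs-uniform} with $|X_1|=|X_2|=2^n$ and $N=2^{2n}$, we have $\mathbb{E}_{\pi\sim S_N}[|Z_\pi|]=1$. Writing $\epsilon_K$ for the success probability conditioned on $|Z_\varphi|=K$,
\begin{align*}
\epsilon \;=\; \sum_{K\geq 0} \Pr_{\varphi\sim S_N}[|Z_\varphi|=K]\,\epsilon_K \;\leq\; \sum_{K\geq 0} \Pr[|Z_\varphi|=K]\cdot\min\!\left(1,\tfrac{8(T+1)^2 K}{2^n}\right).
\end{align*}
The cleanest finish is to bound this by $\tfrac{8(T+1)^2}{2^n}\,\mathbb{E}[|Z_\varphi|]=\tfrac{8(T+1)^2}{2^n}$, which already beats the stated $50(T+1)^2/2^n$. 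If a tail-bound presentation is preferred, I would instead split the sum at some threshold $K_0=O(1)$: the contribution from $K\leq K_0$ is at most $8(T+1)^2 K_0/2^n$, while the tail $\sum_{K>K_0}\Pr[|Z_\varphi|=K]$ is controlled by \Cref{thm:X-pairs-uniform-tail-bound} (since $\mathbb{E}[|Z_\varphi|]=1$), and decays exponentially so contributes only an additive $o(1/2^n)$ term after rebalancing. The main conceptual obstacle is step two: verifying that simulation of $\omega\circ\varphi_f\circ\sigma^{\pm 1}$ on the algorithm's queries introduces no query overhead and that the rerandomization is exactly uniform on $S_N^{K,Z}$, both of which are handed to us by the double-coset analysis in \Cref{sec:subsetPairsSymGroup}.
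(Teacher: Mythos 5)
Your proposal uses the same two core lemmas as the paper (reduction from worst-case unstructured search to establish hardness for a permutation with exactly $K$ zero pairs, then symmetrization via \Cref{lemma:rerandomizationZeroSearch} to transfer to the average case over $S_N^{K,Z}$), so the heart of the argument is identical. The difference is in the final averaging over $K$. The paper splits the sum at $\kappa=6$: it bounds the $\kappa\leq 6$ contribution by $48(T+1)^2/2^n$ using \Cref{lem:doubleSidedZeroAverageCase} at $K=6$, bounds the tail $\kappa\geq 7$ by multiplying $8(T+1)^2\kappa/2^n$ by the tail probability $\exp(-3\kappa/4)$ from \Cref{thm:X-pairs-uniform-tail-bound}, sums an arithmetico-geometric series to get $2(T+1)^2/2^n$, and adds to obtain $50(T+1)^2/2^n$. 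Your ``cleanest finish'' avoids the split entirely: since $\epsilon_K \leq 8(T+1)^2 K/2^n$ for $K\geq 1$ and $\epsilon_0=0$, you have $\epsilon \leq (8(T+1)^2/2^n)\,\mathbb{E}[\,|Z_\varphi|\,] = 8(T+1)^2/2^n$ by \Cref{thm:X-pairs-uniform}. This is simpler and gives a strictly tighter constant than the paper's $50$, so it is a genuine (if modest) improvement in presentation.

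One caveat worth flagging: your sketch of the ``tail-bound presentation'' as an alternative is incorrect as stated. You write that the tail $\sum_{K>K_0}\Pr[|Z_\varphi|=K]$ ``decays exponentially so contributes only an additive $o(1/2^n)$ term after rebalancing,'' but for fixed $K_0=O(1)$ this tail probability is a constant in $n$ (it decays exponentially in $K_0$, not in $n$), so multiplying by the trivial bound $\epsilon_K\leq 1$ gives a $\Theta(1)$ contribution, not $o(1/2^n)$. The fix---which is exactly what the paper does---is to keep the $K$-dependent factor $8(T+1)^2K/2^n$ inside the tail sum, which produces a convergent arithmetico-geometric series and a total of $O((T+1)^2/2^n)$. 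Since your primary route via linearity of expectation is correct and strictly stronger, this does not affect the validity of the proposal.
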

\begin{proof}
    Let $\varphi$ be a uniformly random $2n$-bit permutation and $K=|Z_\varphi| \in \{0,1,\dots,2^n\}$ be the random variable corresponding to the number of zero pairs in $\varphi$. By the law of total probability, we can write the success probability $\epsilon$ of a $T$-query algorithm as
    \begin{align*}
        \epsilon =& \sum_{\kappa=0}^{2^n} \Pr[K=\kappa] \cdot \Pr[success | K=\kappa] \\
        =& \underbrace{\sum_{\kappa=1}^{6} \Pr[K=\kappa] \cdot \Pr[success | K=\kappa]}_{=P_1} + \underbrace{\sum_{\kappa=7}^{2^n} \Pr[K=\kappa] \cdot \Pr[success| K=\kappa]}_{=P_2}.
    \end{align*}
    We bound each term separately---note that splitting the sum at $\kappa=6$ is chosen to give the tightest bound. Beginning with the first, we get \begin{align*}
        P_1 \leq& \max_{\kappa \in [6]}\left\{\Pr[success | K=\kappa]\right\} & \text{(By Convexity)} \\
        \leq& \frac{48(T+1)^2}{2^n}. & \text{(By \Cref{lem:doubleSidedZeroAverageCase})}
    \end{align*}
    Continuing with the second, we find that \begin{align*}
        P_2 \leq& \sum_{\kappa=7}^{2^n} \Pr[success | K=\kappa] \cdot \exp(-3\kappa/4) & \text{(By \Cref{thm:X-pairs-uniform-tail-bound})} \\
        \leq& \sum_{\kappa=7}^{2^n} \frac{8(T+1)^2\kappa}{2^n} \cdot \exp(-3\kappa/4) & \text{(By \Cref{lem:doubleSidedZeroAverageCase})} \\
        \leq& \frac{2(T+1)^2}{2^n}. & \text{(Arithmetico-geometric series)}
    \end{align*}
    Therefore, we obtain the following bound on total success probability:  \begin{align*}
        \epsilon \leq \frac{50(T+1)^2}{2^n}
    \end{align*}
\end{proof}

We observe that this bound is tight up to constant factors for any $\epsilon$, as there is at least single zero pair with $\Omega(1)$ probability (\Cref{fact:Z-pairs-existence-probability}), which can be found using Grover's algorithm \cite{Grover96algorithm} with probability $\Theta(T^2/2^n)$ after $T$ queries. In particular, we have the following corollary.
\begin{corollary}\label{cor:grover-dszs}
    Given quantum oracle-access to a random $2n$-bit permutation and its inverse, the query complexity of outputting a zero pair with probability $\epsilon$ is $\Theta(\sqrt{\epsilon2^{n}})$.
\end{corollary}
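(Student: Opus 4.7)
The corollary combines the lower bound already established in \Cref{thm:uniformZeroSearchHard} with a matching Grover-based upper bound, so the proof reduces to verifying both directions. For the lower bound, I would simply rearrange the inequality $\epsilon \leq 50(T+1)^2/2^n$ of \Cref{thm:uniformZeroSearchHard}: solving for $T$ yields $T+1 \geq \sqrt{\epsilon 2^n/50}$, hence $T = \Omega(\sqrt{\epsilon 2^n})$, so every algorithm achieving success probability at least $\epsilon$ must make $\Omega(\sqrt{\epsilon 2^n})$ queries.

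For the matching upper bound, I would exhibit an explicit Grover-based algorithm. Define a boolean predicate $g : \{0,1\}^n \to \{0,1\}$ by setting $g(x) = 1$ iff the last $n$ bits of $\varphi(x||0^n)$ equal $0^n$; each query to $g$ costs exactly one forward query to $\varphi$. Run standard amplitude amplification on $g$ for $T$ iterations, measure the input register, and accept if the outcome $x^\star$ satisfies $g(x^\star) = 1$ (recovering the zero pair $(x^\star, y^\star)$ with $\varphi(x^\star||0^n) = y^\star||0^n$ by one additional query). The standard analysis gives success probability $\sin^2\bigl((2T+1)\theta\bigr)$ with $\sin\theta = \sqrt{K/2^n}$, where $K = |Z_\varphi|$; in the pre-rotation regime $T \leq \sqrt{2^n/K}/2$ this is $\Omega(T^2 K / 2^n)$. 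Combining with \Cref{fact:Z-pairs-existence-probability}, which guarantees $\Pr_\varphi[K \geq 1] = 1 - 1/e + o(1) = \Omega(1)$, and averaging over $\varphi$ gives an overall success probability of $\Omega(T^2/2^n)$, so choosing $T = c\lceil \sqrt{\epsilon 2^n} \rceil$ for a sufficiently large absolute constant $c$ achieves success probability at least $\epsilon$.

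The only subtlety is that $K$ is random and unknown to the algorithm, so naively running Grover for a fixed $T$ could over-rotate on the (rare) event that $K$ is atypically large. I would handle this by conditioning on the constant-probability event $\{1 \leq K \leq C\}$ for a fixed constant $C$: by \Cref{fact:Z-pairs-existence-probability} paired with the tail bound \Cref{thm:X-pairs-uniform-tail-bound}, this event still carries $\Omega(1)$ probability, and conditioned on it the choice $T = O(\sqrt{\epsilon 2^n})$ sits safely inside $T \leq \sqrt{2^n/K}/2$ whenever $\epsilon \leq 1$. Since this bookkeeping only alters constants and the lower bound and upper bound are both of the form $\Theta(\sqrt{\epsilon 2^n})$, the two directions match and the corollary follows.
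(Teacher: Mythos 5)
Your overall approach matches the paper's, which treats the corollary as an immediate consequence of \Cref{thm:uniformZeroSearchHard} together with a Grover upper bound and \Cref{fact:Z-pairs-existence-probability}; the paper does not spell out the upper-bound calculation, so your proof supplies detail it omits. You also correctly flag the genuine subtlety that $K = |Z_\varphi|$ is random and unknown to the algorithm. However, your resolution contains a quantitative slip: after conditioning on $1 \leq K \leq C$ and choosing $T = c\lceil\sqrt{\epsilon 2^n}\rceil$, the pre-rotation constraint $T \leq \sqrt{2^n/K}/2$ becomes $\epsilon \leq 1/(4c^2 K)$, and since $c$ must be chosen large to absorb the hidden constant in the $\Omega(T^2 K/2^n)$ success-probability bound, this forces $\epsilon$ to lie below a (possibly small) absolute constant rather than merely satisfying $\epsilon \leq 1$ as you assert. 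For $\epsilon$ above that threshold (up to the maximum achievable value $1 - 1/e + o(1)$, beyond which the problem is infeasible), one instead takes $T = \Theta(2^{n/2}) = \Theta(\sqrt{\epsilon\, 2^n})$ and handles the unknown-$K$ over-rotation problem by the standard device of running Grover for a uniformly random number of iterations in $\{0,\dots,T\}$, which attains constant success probability for every $K \geq 1$. With this small patch the two directions match and the corollary follows exactly as the paper intends.
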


\paragraph{Reduction from worst-case unstructured search.}
In the following lemma, we prove a query-lower bound for a worst-case variant of  \textsc{Double-Sided Zero-Search} in \Cref{prob:doubleSidedZero}, where we are given the promise that the underlying permutation has exactly $K$ zero pairs. We show the following.

\begin{lemma}
    Any $T$-query quantum algorithm for the \textsc{Double-Sided Zero-Search} problem succeeding with probability $\epsilon>0$ on a $2n$-bit worst-case permutation with exactly $K>0$ zero pairs must satisfy the inequality
    $$\epsilon \leq \frac{8(T+1)^2K}{2^n}.$$
    \label{lem:doubleSidedZeroWorstCase}
\end{lemma}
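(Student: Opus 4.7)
The plan is to reduce worst-case \textsc{Unstructured Search} (\Cref{thm:unstructuredSearchHard}) to worst-case \textsc{Double-Sided Zero-Search} on instances with exactly $K$ zero pairs, following the blueprint sketched in the introduction. Given quantum oracle access to a function $f : \bit^n \to \bit$ with $|f^{-1}(1)| = K$, I would construct the $2n$-bit permutation
\[
\varphi(x \| y) = \begin{cases} x \| y & \text{if } f(x) = 1, \\ x \| (y \oplus 1^n) & \text{if } f(x) = 0. \end{cases}
\]
First, observe that $\varphi$ is an involution ($\varphi \circ \varphi = \mathrm{id}$), so it is a permutation and $\varphi^{-1} = \varphi$; in particular, the backward oracle coincides with the forward oracle and inverse queries yield no additional information. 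The zero pairs of $\varphi$ are precisely $\{(x,x) : f(x)=1\}$, a set of size exactly $K$, so this is a valid worst-case instance.

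The second step is to simulate one quantum query to the standard oracle $O_\varphi$ using exactly one query to $O_f$. Writing the target register of $O_\varphi$ as $(z_1, z_2) \in \bit^{n} \times \bit^{n}$, the update is $z_1 \mapsto z_1 \oplus x$ (free) and $z_2 \mapsto z_2 \oplus y \oplus 1^n \oplus f(x)\cdot 1^n$; only the last summand involves $f$. To XOR $f(x)$ simultaneously into all $n$ bits of $z_2$ using a single oracle call, I conjugate one call of $O_f$ (targeting the first bit of $z_2$) by a CNOT cascade from that bit into each of the remaining $n-1$ bits. A direct calculation shows that the net effect is $z_{2,i} \mapsto z_{2,i} \oplus f(x)$ for every $i$, achieving the required update with one query.

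Putting these pieces together, any $T$-query algorithm $\algo A^{\varphi, \varphi^{-1}}$ that produces a zero pair of $\varphi$ with probability $\epsilon$ can be converted into a $T$-query algorithm for \textsc{Unstructured Search} with $K$ marked elements out of $N = 2^n$ and success probability $\epsilon$: simulate each query of $\algo A$ via one call to $O_f$ as above, and from a returned zero pair $(x,y)$ output $x$, which by construction satisfies $f(x)=1$. The stated inequality then follows by plugging $N = 2^n$ into \Cref{cor:unstructuredSearchHardErrorVersion}. The main subtlety I anticipate is the overhead-free simulation of $O_\varphi$: a naive compute/uncompute implementation would cost two queries to $O_f$ and worsen the constant, so the CNOT-cascade trick is what makes the reduction genuinely query-preserving and matches the constant $8$ in the corollary exactly.
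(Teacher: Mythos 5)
Your proof is correct and follows the same reduction as the paper: constructing the self-inverse permutation $\varphi$ from the search oracle $f$, observing that zero pairs correspond exactly to marked elements, and invoking \Cref{cor:unstructuredSearchHardErrorVersion}. The only difference is that you spell out the CNOT-cascade conjugation needed to simulate one $O_\varphi$ query with a single $O_f$ call, a detail the paper states as ``apparent''; your implementation is correct and nicely makes the query-preservation explicit.
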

\begin{proof}
    We reduce the worst-case \textsc{Unstructured Search} problem with $K$ out of $2^n$ marked elements to a specific instance of \textsc{Double-Sided Zero-Search} for a $2n$-bit permutation $\varphi$ with exactly $K$ zero pairs without any query overhead.
    
    Suppose we are given quantum query access to a function $$f:\{0,1\}^n \rightarrow \{0,1\}$$ such that $|f^{-1}(1)|=K$. We now construct a permutation $\varphi:\{0,1\}^{2n} \rightarrow \{0,1\}^{2n}$, where $\varphi$ is defined as follows for any pair of strings $x, y \in \{0,1\}^n$:
    $$
    \varphi(x||y) =\begin{cases}
        x||y & \text{ if } f(x)=1 \\
        x||(y \oplus 1^n) & \text{ if } f(x)=0.
    \end{cases}
    $$
    It is apparent that we can implement a query to $\varphi$ using a single query to $f$. Further, we have  $\varphi=\varphi^{-1}$ since $\varphi$ is a composition of $1$ and $2$-cycles; hence, we can also implement queries to $\varphi^{-1}$. Now let $x||0^n$ be a zero pair of $\varphi$. It follows that $f(x)=1$, as otherwise $\varphi(x||0^n)=x||1^n$. Similarly, if, for any $x'$, we have $f(x')=1$, then $x'||0^n$ is a zero pair of $\varphi$. There are therefore $K$ zero pairs in $\varphi$ and, given a zero pair of $\varphi$, it is straight forward to find a preimage of $1$ under $f$, which can be used to solve \textsc{Unstructured Search}. The claim now follows from \Cref{cor:unstructuredSearchHardErrorVersion}.
\end{proof}

\paragraph{Worst-case to average-case reduction.}
In the following lemma, we prove a query-lower bound for an average-case variant of  \textsc{Double-Sided Zero-Search} in \Cref{prob:doubleSidedZero}, where we are given the promise that the underlying permutation has exactly $K$ zero pairs. In this case, average refers to a uniform random permutation over all that have exactly $K$ zero pairs. We show the following.

\begin{lemma}
    Any quantum algorithm for \textsc{Double-Sided Zero-Search} on a uniform random permutation $\varphi$, subject to the constraint that $\varphi$ has exactly $K>0$ zero pairs, making $T$ queries to $\varphi, \varphi^{-1}$ and succeeding with probability $\epsilon>0$ satisfies the inequality
    $$\epsilon \leq \frac{8(T+1)^2K}{2^n}$$
    \label{lem:doubleSidedZeroAverageCase}
\end{lemma}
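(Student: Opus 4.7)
The plan is a worst-case to average-case reduction using the symmetrization lemma (\Cref{lemma:rerandomizationZeroSearch}) specialized to $X_1=X_2=Z$, where $Z\subset\{0,1\}^{2n}$ is the set of strings ending in $0^n$. Note that in this symmetric case $G_1=G_2=G_Z$, and the number of $Z$-pairs of $\omega\circ\varphi\circ\sigma$ equals the number of $Z$-pairs of $\varphi$ whenever $\omega,\sigma\in G_Z$: since $\sigma(Z)=Z$ and $\omega(Z)=Z$, we have $(\omega\circ\varphi\circ\sigma)(Z)\cap Z = \omega(\varphi(Z))\cap Z = \omega\bigl(\varphi(Z)\cap Z\bigr)$.

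Given any average-case adversary $\algo{A}$ that makes $T$ queries and succeeds with probability $\epsilon$ on a uniform $\varphi'\sim S_{2^{2n}}^{K}$, I would build a worst-case adversary $\algo{B}$ as follows. On input oracle access to a fixed $\varphi\in S_{2^{2n}}^{K}$ and $\varphi^{-1}$, the adversary $\algo{B}$ first samples $\omega,\sigma\sim G_Z$ uniformly and independently (these are drawn from the Young subgroup $S_Z\times S_{[2^{2n}]\setminus Z}$, so sampling is straightforward combinatorially; we only care about query complexity, not runtime). Then $\algo{B}$ runs $\algo{A}$, simulating each forward oracle call to $\varphi'=\omega\circ\varphi\circ\sigma$ by the sequence of in-place unitaries $\ket{x}\mapsto\ket{\sigma(x)}\mapsto\ket{\varphi(\sigma(x))}\mapsto\ket{\omega(\varphi(\sigma(x)))}$, which uses exactly one query to $\varphi$; inverse queries are simulated symmetrically via $\sigma^{-1}\circ\varphi^{-1}\circ\omega^{-1}$ using one query to $\varphi^{-1}$. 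Finally, when $\algo{A}$ outputs a candidate pair $(x',y')$, $\algo{B}$ computes $\sigma(x'\|0^n)=x\|0^n$ and $\omega^{-1}(y'\|0^n)=y\|0^n$ (both lie in $Z$ since $\omega,\sigma\in G_Z$) and outputs $(x,y)$.

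By the symmetrization lemma, the simulated permutation $\omega\circ\varphi\circ\sigma$ is uniformly distributed over $S_{2^{2n}}^{K}$, so $\algo{A}$ succeeds in $\algo{B}$'s simulation with probability exactly $\epsilon$. Whenever $\algo{A}$ outputs a valid zero pair $(x',y')$ of $\omega\circ\varphi\circ\sigma$, the identity $\varphi(\sigma(x'\|0^n))=\omega^{-1}(y'\|0^n)$ shows that $(x,y)$ is a zero pair of $\varphi$, so $\algo{B}$ succeeds with probability $\epsilon$ as well. Since $\algo{B}$ makes exactly $T$ queries to $\varphi,\varphi^{-1}$, \Cref{lem:doubleSidedZeroWorstCase} yields $\epsilon\leq 8(T+1)^2K/2^n$, which is the desired bound.

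The only subtlety I expect is confirming that the simulation introduces no query overhead in the quantum setting: this requires using the in-place permutation oracle model (which is standard for invertible permutation queries) together with the fact that classical permutations $\omega,\sigma$ on $\{0,1\}^{2n}$ can be applied as unitaries without consulting $\varphi$. Once that is cleanly stated, the rest is a direct combination of the symmetrization lemma with the worst-case lower bound.
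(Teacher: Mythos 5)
Your proof is correct and follows the same worst-case to average-case reduction the paper uses: sample $\omega,\sigma\sim G_Z$, symmetrize to $\omega\circ\varphi\circ\sigma$, invoke \Cref{lemma:rerandomizationZeroSearch} for uniformity over $S_{2^{2n}}^K$, convert the recovered zero pair back via $\sigma$ and $\omega^{-1}$, and appeal to \Cref{lem:doubleSidedZeroWorstCase}. Your note about needing the in-place oracle model to get a genuinely overhead-free simulation is a legitimate observation that the paper glosses over (it states the XOR model in the preliminaries but asserts the simulation costs one query); in the strict XOR model the simulation of $\omega\circ\varphi\circ\sigma$ would consume a forward and an inverse query to cleanly apply the outer permutation $\omega$, costing only a constant factor.
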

\begin{proof}
Let $N=2^{2n}$ and let $Z \subset [N]$ denote the set of elements whose binary decomposition ends in $0^n$. Let $G_Z \leq S_N$ be the subgroup of permutations that fix $Z$, i.e.,
$$
G_Z \cong S_{Z} \times S_{[N] \setminus Z}.
$$
    We can re-randomize any worst-case permutation $\varphi : \bit^{2n} \rightarrow \bit^{2n}$ with $K$ zero pairs into an average-case permutation $\varphi^{\mathrm{sym}}$ with $K$ zero pairs as follows. \begin{enumerate}
        \item Randomly and independently sample permutations $\omega, \sigma \sim G_Z$.
        \item Define the symmetrized permutation $\varphi^{\mathrm{sym}} = \omega \circ \varphi \circ \sigma$.
    \end{enumerate}
    It follows from Lemma \ref{lemma:rerandomizationZeroSearch} that $\varphi^{\mathrm{sym}}$ is uniform random among permutations with $K$ zero pairs. Let $x, y \in Z$ be a zero pair of $\varphi^{\mathrm{sym}}$ such that $\varphi^{\mathrm{sym}}(x)=y$. We have that \begin{align}
        x' =& \, \sigma(x), & y'=\omega^{-1}(y)
    \end{align}
    satisfy $x', y' \in Z$ and $\varphi(x')=y'$, hence a zero pair of $\varphi$ can be constructed for free from a zero pair of $\varphi^{\mathrm{sym}}$. We can simulate queries to $\varphi^{\mathrm{sym}}$ as well as its inverse with a single query to $\varphi, \varphi^{-1}$ respectively, so the reduction incurs no overhead. Therefore, the claim follows from Lemma \ref{lem:doubleSidedZeroWorstCase}.
\end{proof}

\subsection{Alternative proof in the superposition-oracle framework}\label{sec:alternative}

In this section, we give an alternative proof for \Cref{lem:doubleSidedZeroAverageCase}, i.e., the query lower bound for \textsc{Double-Sided Zero-Search}. Because this lemma is a key component of \Cref{thm:uniformZeroSearchHard}, it presents an alternative approach for resolving \Cref{conj}.

We work with two-way accessible superposition oracles for invertible permutations.
The superposition oracle framework is a powerful tool which has been used to prove query lower bounds in a variety of settings, e.g.,~\cite{Ambainis_2011,Zhandry2018,rosmanis2022tight,Bai2022}.
Zhandry~\cite{Zhandry2018} also introduced the notion of compressed oracles as a means to ``record'' quantum queries to a random oracle. However, contrary to the work of Zhandry~\cite{Zhandry2018}, we do not need to ``compress'' the oracles and use inefficient representations instead. First, we introduce some relevant notation.

\paragraph{Superposition oracles.}
Let $\algo F_{n} = \{f: \bit^n \rightarrow \bit^n\}$ be the family of $n$-bit functions. The \emph{function register} for $f \in \algo F_n$ is a collection $F = \{F_x\}_{x \in \bit^n}$ with
$$
\ket{f}_F = \bigotimes_{x \in \bit^n} \ket{f(x)}_{F_x}.
$$
A query to $f$ in the superposition oracle framework amounts to the operation $O$ with
$$
O_{XYF} \ket{x}_X \ket{y}_Y \ket{f}_F = \mathsf{CNOT}_{F_x : Y} \ket{x}_X \ket{y}_Y \ket{f}_F = \ket{x}_X \ket{y \oplus f(x)}_Y \ket{f}_F.        
$$
We can also model inverse queries in the superposition oracle framework. This amounts to the operation $O^{-1}$ which is defined as the unitary
$$
O_{XYF}^{-1} \ket{x}_X \ket{y}_Y \ket{f}_F = \ket{x \oplus (\oplus_{x' : f(x') = y} \, x')}_X \ket{y}_Y \ket{f}_F.
$$
For example, if $f$ is a permutation, the inverse oracle amounts to the operation
$$
O_{XYF}^{-1} \ket{x}_X \ket{y}_Y \ket{f}_F = \ket{x \oplus f^{-1}(y)}_X \ket{y}_Y \ket{f}_F.
$$

\paragraph{Symmetrization.} Let $N=2^{2n}$ and consider a random permutation $\varphi \in S_N^\kappa$ which has exactly $\kappa$ zero pairs. In the superposition oracle framework, we can
model queries to such a permutation using a function register of the form
\begin{align}
\ket{\Phi_\kappa}_F = |S_N^\kappa|^{-\frac{1}{2}} \sum_{\varphi \in S_N^\kappa} \ket{\varphi}_F. \label{eq:F-register-kappa}
\end{align}
We now switch to an alternative characterization of the function registers which uses our technique of symmetrization via subset pairs of the symmetric group.
Let $Z \subset [N]$ denote the set of strings whose binary decomposition ends in $0^n$ many zeroes. Let $G_Z \leq S_N$ be the subgroup of permutations in $S_N$ that fix $Z$, i.e.,
$$
G_Z \cong S_{Z} \times S_{[N] \setminus Z}.
$$
Define the product group $G = G_Z \times G_Z$, where $G$ consists of pairs of permutations $(\sigma,\omega)$ such that each permutation fixes $Z$, i.e.,  $\sigma(Z)=Z=\omega(Z)$. Define the unitary representation $U: G \rightarrow \mathrm{GL}(\mathbb{C}^{\algo F})$ with $(\sigma,\omega) \mapsto U_{\sigma,\omega}$ and
$$
U_{\sigma,\omega} \ket{f}_F = \ket{\omega \circ f \circ \sigma}_F, \quad \text{ for } f \in \algo F_{2n}.
$$
Using symmetrization, we can now equivalently instantiate the function register $F$ in \Cref{eq:F-register-kappa} via the extension
$$
\ket{\Phi_\kappa}_F \mapsto \ket{\pi_\kappa^{\mathrm{sym}}}_{\Sigma \Omega F} = \frac{1}{\sqrt{|G|}} \sum_{(\sigma,\omega) \in G} \ket{\sigma}_\Sigma \ket{\omega}_\Omega U_{\sigma,\omega} \ket{\pi_\kappa}_F,
$$
where
$\pi_\kappa$ is an \emph{arbitrary} fixed permutation in $S_N^\kappa$ with exactly $\kappa$ zero pairs.
To see why this is a valid purification, we can use our symmetrization argument in~\Cref{lemma:rerandomizationZeroSearch} to argue that---once we trace out registers $\Sigma \Omega$---the reduced state equals
$$
 \frac{1}{|G|} \sum_{(\sigma,\omega) \in G} U_{\sigma,\omega} \proj{\pi_\kappa}_F U_{\sigma,\omega}^\dag = \frac{1}{|S_N^\kappa|} \sum_{\varphi \in S_N^\kappa} \proj{\varphi}_F.
$$

\paragraph{Double-sided zero-search revisited.}

To give an alternative proof for \Cref{lem:doubleSidedZeroAverageCase} in the superposition oracle framework, we first prove a lower bound for the decision problem.
We show that no quantum algorithm can distinguish whether it is querying a random invertible permutation with exactly $\kappa$ zero pairs, or a random invertible permutation with no zero pairs---unless it makes a large number of queries. Our proof is rooted in the hybrid argument~\cite{Bennett_1997} and one-way to hiding~\cite{cryptoeprint:2018/904}, and allows the query algorithm to query two oracles: one in the forward direction and one in the backward direction. We spell out the full details on the hybrid argument in terms of our notation for the convenience of the reader.

\begin{theorem}
Let $N=2^{2n}$ and let $\kappa \in \{0,1,\dots,\sqrt{N}\}$. Then, for any quantum algorithm $\algo D$ which makes at most $T$ many queries, it holds that
$$
\Big| 
\Pr_{\varphi \sim S_N^\kappa}[\algo D^{\varphi,\varphi^{-1}}(1^n) = 1] - \Pr_{\varphi \sim S_N^0}[\algo D^{\varphi,\varphi^{-1}}(1^n) = 1] 
\Big|   \,  \leq \, 2 T  \sqrt{\frac{\kappa}{2^n}}.
$$
\label{thm:decisionDSZSSuperposProof}
\end{theorem}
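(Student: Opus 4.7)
The plan is to combine the symmetrization framework from \Cref{lemma:rerandomizationZeroSearch} with a BBBV-style hybrid argument~\cite{Bennett_1997}. The key insight is that after purifying via $\ket{\pi_\kappa^{\mathrm{sym}}}$ and $\ket{\pi_0^{\mathrm{sym}}}$, the underlying oracle distributions differ only on a small, ``randomly located'' subset of $Z$, so any bounded-query adversary can only accumulate a small distinguishing advantage.

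First, I would fix representatives $\pi_0 \in S_N^0$ and $\pi_\kappa \in S_N^\kappa$ that agree on all but a set $S_0 \subseteq Z$ of size at most $2\kappa$. This is achievable by starting from any zero-pair-free $\pi_0$ and applying $\kappa$ transpositions, each of which creates a single zero pair while modifying the permutation at exactly two positions (both in $Z$). By \Cref{lemma:rerandomizationZeroSearch}, sampling a uniform $\varphi \sim S_N^\kappa$ is equivalent to sampling $(\sigma,\omega) \sim G_Z \times G_Z$ and setting $\varphi = \omega \circ \pi_\kappa \circ \sigma$, and similarly for $S_N^0$.

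Second, for each fixed $(\sigma,\omega)$, the oracles $f^{\sigma,\omega}_\kappa := \omega \circ \pi_\kappa \circ \sigma$ and $f^{\sigma,\omega}_0 := \omega \circ \pi_0 \circ \sigma$ differ, in the forward direction, only on $\sigma^{-1}(S_0)$ and, in the backward direction, only on $\omega(\pi_\kappa(S_0) \cup \pi_0(S_0))$. The standard BBBV hybrid bounds the Euclidean distance between the adversary's states under these two oracles by a sum over queries of $\sqrt{q_t(\mathrm{diff})}$, where $q_t$ denotes the amplitude weight of the $t$-th query on the relevant distinguishing set. Taking the expectation over $(\sigma,\omega) \sim G_Z \times G_Z$, these distinguishing sets become uniformly distributed among subsets of $Z$ of their fixed size, so for any query state the expected weight on the distinguishing set is at most $2\kappa/|Z| = 2\kappa/2^n$. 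Applying Jensen's inequality across the $T$ queries and averaging, together with the elementary bound $\mathsf{TD}(\rho,\sigma) \leq \|\ket{\psi}-\ket{\phi}\|$ from the preliminaries, yields a distinguishing advantage of the order $T\sqrt{\kappa/2^n}$, which matches the claimed bound up to constants.

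The main obstacle I anticipate is handling the forward and backward queries symmetrically in the hybrid: the unitaries $O$ and $O^{-1}$ act differently on the function register, so one must verify that the randomization induced by $\omega$ on the output side controls backward queries in exactly the same way that $\sigma$ controls forward queries. Once this symmetry is established, the remainder is careful bookkeeping of the hybrid combined with the averaging step. An alternative, more direct route would be to phrase the above entirely within the superposition-oracle picture (as in \Cref{sec:alternative}), representing the query $s$-step ``progress'' as the projection of the joint state onto function-register components in which the symmetrization has moved a distinguishing position into the queried slot; the bound $2\kappa/2^n$ then arises directly from the dimension of $Z$ relative to the size of $S_0$.
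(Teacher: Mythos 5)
Your high-level plan is aligned with the paper's: both approaches combine the symmetrization lemma with a BBBV/one-way-to-hiding hybrid argument, and the paper does indeed carry this out in the superposition-oracle picture as your alternative suggests. However, there are two issues, one minor and one substantive.

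\textbf{Minor factual error.} You claim $\pi_0$ and $\pi_\kappa$ can be chosen to differ on a set $S_0 \subseteq Z$, with each zero-pair-creating transposition modifying two positions ``both in $Z$''. This is impossible: since $\pi_0$ maps $Z$ entirely outside $Z$, redirecting some $a \in Z$ to a target $b \in Z$ necessarily also redirects the collateral point $\pi_0^{-1}(b)$, which lies \emph{outside} $Z$. So $S_0$ has $\kappa$ elements in $Z$ and $\kappa$ outside $Z$, and likewise for the backward-difference set $\pi_0(S_0)$. This is not fatal (the points outside $Z$ contribute only $O(\kappa/2^{2n})$ to the weight and the dominant contribution is still $\approx \kappa/2^n$), but the setup as written is wrong.

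\textbf{The substantive gap} is in the averaging step. You assert that taking the expectation over $(\sigma,\omega) \sim G_Z \times G_Z$, ``for any query state the expected weight on the distinguishing set is at most $2\kappa/|Z|$.'' This implicitly treats the adversary's state at time $t$ as fixed and the set $\sigma^{-1}(S_0)$ as fresh randomness. But your hybrid runs the adversary against the $(\sigma,\omega)$-dependent oracle $\omega \circ \pi_0 \circ \sigma$, so $\ket{\psi^{\sigma,\omega}_t}$ is correlated with $(\sigma,\omega)$, and the naive counting argument does not apply. To repair this you would need a conditioning argument: given the realized oracle $f = \omega\pi_0\sigma$, the set $\sigma^{-1}(S_0 \cap Z)$ remains uniformly distributed over $\kappa$-subsets of $Z$. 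This in fact holds because the stabilizer $G_Z \cap \pi_0 G_Z \pi_0^{-1}$ contains all of $S_Z$ (as $\pi_0(Z) \cap Z = \emptyset$), so the residual freedom in $\sigma$ given $f$ still acts transitively on $Z$. Without spelling this out, the step is not justified.

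The paper avoids this subtlety by a different choice of what to randomize: rather than fix $\pi_\kappa$ and average over $(\sigma,\omega)$, it fixes $\pi_0$, keeps $(\sigma,\omega)$ in a quantum purification register, and introduces a fresh classical random variable $K \subseteq \bit^n$ of size $\kappa$, setting $\pi_K$ to differ from $\pi_0$ on the prefix set $\{x_1||x_2 : x_1\in K\}$ (which, notably, is of size $\kappa \cdot 2^n$, much larger than your $|S_0| = 2\kappa$, but has density $\kappa/2^n$ in the first $n$ bits). The adversary's state $\ket{\psi_i}$ then evolves under the $K$-independent null oracle, so $\E_K[\|\bar P_i^K \ket{\psi_i}\|^2] = \kappa/2^n$ holds directly for the actual query state, with no conditioning argument required. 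This is the cleaner way to cash in your intuition.
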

\begin{proof}
Suppose that a distinguisher $\algo D$ makes a total amount of $T$ oracle queries. In the superposition oracle framework, we can model the adversary/oracle interaction by introducing a function register $F$ which is outside of the view of $\algo D$.
Suppose the register $F$ is initially in the state $\ket{\Phi^0}_F$.
Using symmetrization, we can now equivalently instantiate the register $F$ as the uniform superposition
$$
\ket{\Phi^0}_F \mapsto \ket{\pi_0^{\mathrm{sym}} }_{\Sigma \Omega F} = \frac{1}{\sqrt{|G|}} \sum_{(\sigma,\omega) \in G} \ket{\sigma}_\Sigma \ket{\omega}_\Omega U_{\sigma,\omega} \ket{\pi_0}_F,
$$
where
$\pi_0(x_1||x_2) = (x_1||x_2 \oplus 1^n)$ is a fixed permutation in $S_N^0$ with no zero pairs.

For the remainder of the proof, we will analyze the effect of swapping out the symmetrization of
$\pi_0 \in S_N^0$
with symmetrization of a different permutation which comes from $S_N^\kappa$.
Specifically, we will choose a particular permutation $\pi_K$ which is generated as follows: first sample a random subset $K
\subset \bit^n$ (independently of everything else) of size $|K|=\kappa$, and then let $\pi_K$ be the permutation
$$
\pi_K(x_1||x_2) = 
\begin{cases}
x_1||x_2 & \text{ if } x_1 \in K;\\
x_1||x_2 \oplus 1^n & \text{ otherwise. }
\end{cases}
$$
Note that $\pi_K$ differs from $\pi_0$ on exactly $\kappa \cdot 2^n$ many inputs of the form $(x_1||x_2)$, where $x_1 \in K$ and $x_2$ is arbitrary. Moreover, $\pi_K$ has exactly $\kappa$ zero pairs of the form $(x_1||0^n)$ whenever $x_1 \in K$. Thus, $\pi_K \in S_N^\kappa$. Notice that we can map between the symmetrizations $\ket{\pi_0^{\mathrm{sym}} }$ and $\ket{\pi_K^{\mathrm{sym}}}$ via the following unitary
$$
\mathsf{SWAP}^{\pi_K,\pi_0}_{\Sigma \Omega F}
=U_{\Sigma \Omega:F} (\Id_{\Sigma \Omega} \otimes W^K_F) U_{\Sigma \Omega:F}^\dag\, ,
$$
where $U_{\Sigma \Omega:F}$ is a controlled unitary of the form 
$$
U \ket{\sigma}_\Sigma \ket{\omega}_\Omega \ket{f}_F = \ket{\sigma}_\Sigma \ket{\omega}_\Omega  U_{\sigma,\omega}\ket{f}_F, \quad \text{ for } f \in \algo F_{2n},
$$
and where $W^K$ is the unitary which flips between $\ket{\pi_K}$ and $\ket{\pi_0}$. \footnote{$W^K$ has a well-defined extension to all $\ket{f}_F$ since we just need to $XOR$ the string $(0^n||1^n)$ into the function output whenever the input is of the form $(x_1||y)$, for $x_1 \in K$ and $y \in \bit^n$}
Therefore,
$$
\mathsf{SWAP}^{\pi_K,\pi_0}_{\Sigma \Omega F}
\ket{\pi_0^{\mathrm{sym}} }_{\Sigma \Omega F}= \ket{\pi_K^{\mathrm{sym}}}_{\Sigma \Omega F}.
$$ 
Our goal is to show the following property:
on average over the choice of $K$, the query magnitude on symmetrized oracle calls that involve $\pi_0$ on $(x_1||x_2)$ with $x_1 \in K$ and arbitrary $x_2$ will be small. 
We proceed via a standard hybrid argument which involves a sequence of intermediate states.
Note that oracle calls to $\ket{\omega \circ \pi_0 \circ \sigma}_F$, where the permutations $\sigma,\omega$ come from registers $\Sigma$ and $\Omega$, only involve $K$ in one of two ways:
\begin{itemize}
    \item a forward query in the $X$ register is made on $\sigma^{-1}(x_1||x_2)$ with $x_1 \in K$; or

    \item a backward query in the $Y$ register is made on $\omega(x_1||x_2)$ with $x_1 \in K$.
\end{itemize}
Define the following projectors that exclude each of the two cases:
\begin{align}
\Pi_{X \Sigma}^K &= \sum_{x' \in \bit^{2n}} \proj{x'}_X \otimes \bigotimes_{\substack{x_1 \in K\\
x_2 \in \bit^n}} \big(\Id - \proj{x_1||x_2} \big)_{\Sigma_{x'}}\\
\Xi_{Y\Omega}^{K} &= \sum_{y' \in \bit^{2n}} \proj{y'}_Y \otimes 
\bigotimes_{\substack{x_1 \in K\\
x_2 \in \bit^n}} 
\big(\Id - \proj{y'} \big)_{\Omega_{x_1||x_2}} \, .
\end{align}
If we apply the projectors $\Pi_{X \Sigma}^K$ and $\Xi_{Y \Omega}^K$ to each forward/backward query just before the oracle is applied, then swapping from $\pi_0$ to $\pi_K$ before the oracle evaluation has no effect. Formally, we have the commutation relations\footnote{For two linear operators $A$ and $B$, the commutator is defined as $[A,B] := AB-BA$.}
\begin{align}
\left[\mathsf{SWAP}^{\pi_K,\pi_0}_{\Sigma \Omega F}, O_{XY F} \,\Pi_{X \Sigma}^K \right] &=0 \label{eq:comm-relations}\\
\left[\mathsf{SWAP}^{\pi_K,\pi_0}_{\Sigma \Omega F}, O_{XY F}^{-1} \,\Xi_{Y\Omega}^{K}\right] &=0.\label{eq:comm-relations-II}
\end{align}
For $i \in [T]$, we now define the ensemble of projectors $\{P_i^K\}_{i \in [T]}$, where
\begin{align}\label{eq:def-Pi}
P_i^K = 
\begin{cases}
\Pi^K, & \text{ if } O_i = O;\\
\Xi^K, & \text{ if } O_i = O^{-1}.
\end{cases}
\end{align}
Suppose $F$ is initially in the state $\ket{\Phi^0}_F$ and $\algo D$ starts out with some normalized pure state $\ket{\eta_0}$ on registers $XYE$.
We can model the interaction between $\algo D$ and the oracle as sequence of oracle queries and unitaries such that for $i \in [T]$,
\begin{align}
\ket{\psi_i}_{XYE\Sigma \Omega F} = U_i O_i U_{i-1} O_{i-1} \cdots U_1 O_1 \ket{\psi_0} \, .
\end{align}
Here, the unitaries $O_i \in \{O,O^{-1}\}$ represent a call to the forward/backward oracles acting on registers $XYF$, the $U_i$ represent intermediate unitaries performed by $\algo D$ which act on $XYE$, for some workspace register $E$, and where $\ket{\psi_0}$ is the state 
$$
\ket{\psi_0}_{XYE\Sigma \Omega F} =  \ket{\eta_0}_{XYE} \otimes \ket{\pi_0^{\mathrm{sym}} }_{\Sigma \Omega F} \, .
$$
To study the effects of swapping between symmetrizations of
$\pi_0 \in S_N^0$
and $\pi_K \in S_N^\kappa$, we now define $i$-th intermediate state as
$$
\ket{\psi_i^K}_{XYE\Sigma \Omega F} = U_i O_i U_{i-1} O_{i-1} \cdots U_1 O_1 \ket{\psi_0^K} \, ,
$$
where $\ket{\psi_0^K}$ is the initial state corresponding to a symmetrization of $\pi_K$, i.e.,
$$
\ket{\psi_0^K}_{XYE\Sigma \Omega F} =  \ket{\eta_0}_{XYE} \otimes \ket{\pi_K^{\mathrm{sym}} }_{\Sigma \Omega F} \, .
$$
We now introduce the $K$-dependent oracle unitaries $\{O^K_i\}_{i \in [T]}$ where
$$
O^K_i = \left(\mathsf{SWAP}^{\pi_K,\pi_0}_{\Sigma \Omega F} \right)^\dag (O_i)_{XYF} \,\, \mathsf{SWAP}^{\pi_K,\pi_0}_{\Sigma \Omega F}
$$
acts on registers $XY\Sigma \Omega F$. This yields the following equivalent characterization
\begin{align}
\ket{\psi_i^K}_{XYE\Sigma \Omega F} = U_i O_i^K U_{i-1} O_{i-1}^K \cdots U_1 O_1^K \ket{\psi_0} \, .
\end{align}
By unitarity, we can now bound the distance between the $i$-th hybrid states as
\begin{align}
\left\| \ket{\psi_i} -\ket{\psi_i^K}   \right\|^2 &=  \left\| U_i O_i \ket{\psi_{i-1}} -U_i O_i^K \ket{\psi_{i-1}^K}   \right\|^2 \nonumber\\
&=  \left\| O_i \ket{\psi_{i-1}} - O_i^K \ket{\psi_{i-1}^K}   \right\|^2 \nonumber\\
&=  \left\| O_i \ket{\psi_{i-1}} -  O_i^K \ket{\psi_{i-1}} + O_i^K \ket{\psi_{i-1}} - O_i^K \ket{\psi_{i-1}^K}   \right\|^2 \nonumber\\
&\leq \left\| O_i \ket{\psi_{i-1}} -  O_i^K \ket{\psi_{i-1}} \right\|^2 + \left\|O_i^K \ket{\psi_{i-1}} - O_i^K \ket{\psi_{i-1}^K}   \right\|^2 \nonumber\\
& \quad\quad +2\cdot \left\| O_i \ket{\psi_{i-1}} -  O_i^K \ket{\psi_{i-1}} \right\| \cdot \left\| O_i^K (\ket{\psi_{i-1}} - \ket{\psi_{i-1}^K})\right\| \nonumber\\
&= \left\| (O_i -O_i^K)\ket{\psi_{i-1}} \right\|^2 + \left\|\ket{\psi_{i-1}} - \ket{\psi_{i-1}^K}  \right\|^2 \nonumber\\
& \quad\quad +2\cdot \left\| (O_i -O_i^K)\ket{\psi_{i-1}} \right\| \cdot \left\| \ket{\psi_{i-1}} - \ket{\psi_{i-1}^K}\right\|, \label{eq:after-CS}
\end{align}
where the triangle inequality gives the inequality. We now observe that
\begin{align}
\left\|(O_i -O_i^K)\ket{\psi_{i-1}} \right\| &= \left\|(O_i -O_i^K) \bar{P}_i^K \ket{\psi_{i-1}} + (O_i -O_i^K) P_i^K \ket{\psi_{i-1}} \right\| \nonumber \\
&\leq  \left\|(O_i -O_i^K) \bar{P}_i^K \ket{\psi_{i-1}}\right\| + \left\|(O_i -O_i^K) P_i^K \ket{\psi_{i-1}} \right\| \nonumber \\
&\leq 2 \cdot \left\|\bar{P}_i^K \ket{\psi_{i-1}}\right\| + \left\|(O_i -O_i^K) P_i^K \ket{\psi_{i-1}} \right\| \nonumber\\
&= 2 \cdot \left\|\bar{P}_i^K \ket{\psi_{i-1}}\right\|. \label{eq:CR}
\end{align}
Here, the second to last line uses that $(O_i -O_i^K)$ has operator norm at most $2$. For the last line, we first invoke the fact that $\mathsf{SWAP}^{\pi_K,\pi_0}_{\Sigma \Omega F}$ and $P_i^K$ commute, and then use the commutation relations from \Cref{eq:comm-relations,eq:comm-relations-II} so that
\begin{align*}
\left\|(O_i -O_i^K) P_i^K \ket{\psi_{i-1}} \right\| &= 
\left\|\mathsf{SWAP}^{\pi_K,\pi_0}_{\Sigma \Omega F}(O_i -O_i^K) P_i^K \ket{\psi_{i-1}} \right\| \quad\quad \text{(By unitarity)}\\
&=\left\|(\mathsf{SWAP}^{\pi_K,\pi_0}_{\Sigma \Omega F} \, O_i  \, P_i^K - O_i \,\mathsf{SWAP}^{\pi_K,\pi_0}_{\Sigma \Omega F} \, P_i^K)\ket{\psi_{i-1}} \right\|\\
&=\left\|(\mathsf{SWAP}^{\pi_K,\pi_0}_{\Sigma \Omega F} \, O_i  \, P_i^K - O_i \, P_i^K \, \mathsf{SWAP}^{\pi_K,\pi_0}_{\Sigma \Omega F})\ket{\psi_{i-1}} \right\|\\
&=\left\|\left[\mathsf{SWAP}^{\pi_K,\pi_0}_{\Sigma \Omega F}, \, O_i \,P_i^K \right]\ket{\psi_{i-1}} \right\| =0.
\end{align*}
By combining \Cref{eq:after-CS} and \Cref{eq:CR}, we get that
\begin{align}
\left\| \ket{\psi_i} -\ket{\psi_i^K}   \right\|^2 &\leq 4 \cdot \left\|\bar{P}_i^K \ket{\psi_{i-1}}\right\|^2 + \left\|\ket{\psi_{i-1}} - \ket{\psi_{i-1}^K}  \right\|^2 \nonumber\\
& \quad\quad +4\cdot \left\|\bar{P}_i^K \ket{\psi_{i-1}}\right\| \cdot \left\| \ket{\psi_{i-1}} - \ket{\psi_{i-1}^K}\right\|    \nonumber \\
&= \left( \left\| \ket{\psi_{i-1}} - \ket{\psi_{i-1}^K}\right\| + 2 \cdot \left\|\bar{P}_i^K \ket{\psi_{i-1}}\right\| \right)^2,
\end{align}
and thus
\begin{align}
\left\| \ket{\psi_T} -\ket{\psi_T^K}   \right\| \leq 2 \cdot \sum_{i=1}^T \left\|\bar{P}_i^K \ket{\psi_{i}}\right\|.
    \label{eq:hybrid-distance}
\end{align}
Therefore, we can bound the average distance as follows:
\begin{align*}
&\mathsf{TD} \Big(\proj{\psi_T}, \underset{\substack{K\subset \bit^n\\
|K|=\kappa
}}{\E} \proj{\psi_T^K}\Big)\\
&\quad \leq
\underset{\substack{K\subset \bit^n\\
|K|=\kappa
}}{\E} \mathsf{TD} \Big(\proj{\psi_T}, \proj{\psi_T^K}\Big)  && \text{(By convexity)}\\
&\quad \leq
\underset{\substack{K\subset \bit^n\\
|K|=\kappa
}}{\E} \left\|\ket{\psi_T} - \ket{\psi_T^K}\right\| && \text{(Norm inequality)}\\
&\quad \leq
\underset{\substack{K\subset \bit^n\\
|K|=\kappa
}}{\E} 2 \cdot \sum_{i=1}^T \left\|\bar{P}_i^K \ket{\psi_{i}}\right\| && \text{(By \Cref{eq:hybrid-distance})}\\
&\quad = 2 T\underset{i \sim [T]}{\E} \,\,\underset{\substack{K\subset \bit^n\\
|K|=\kappa
}}{\E} \left\|\bar{P}_i^K \ket{\psi_{i}}\right\|\\
&\quad \leq 2 T \sqrt{\underset{i \sim [T]}{\E} \,\,\underset{\substack{K\subset \bit^n\\
|K|=\kappa
}}{\E} \left\|\bar{P}_i^K \ket{\psi_{i}}\right\|^2}.&& \text{(Jensen's inequality)}.
\end{align*}

To complete the proof, it suffices to bound the expectation.
By the definition of $\{P_i^K\}_{i \in [T]}$ in \Cref{eq:def-Pi}, we have to consider two cases for each $i\in [T]$. First, suppose that the $i$-th query is a forward query with $O_i = O$. Then,
\begin{align}
\label{eq:bad-event-I}
\underset{\substack{K\subset \bit^n\\
|K|=\kappa
}}{\E} \left\|\bar{P}_i^K \ket{\psi_{i}}\right\|^2 = \underset{\substack{K\subset \bit^n\\
|K|=\kappa
}}{\E} \left[\left\| (\Id-\Pi^K)_{X \Sigma} \ket{\psi_i}_{XY\Sigma\Omega E}\right\|_2^2 \right]  \leq \frac{\kappa}{2^n}.
\end{align}
Next, suppose that the $i$-th query is to the inverse oracle $O_i = O^{-1}$. Then, we have
\begin{align}
\label{eq:bad-event-II}
\underset{\substack{K\subset \bit^n\\
|K|=\kappa
}}{\E} \left\|\bar{P}_i^K \ket{\psi_{i}}\right\|^2 = \underset{\substack{K\subset \bit^n\\
|K|=\kappa
}}{\E} \left[\left\| (\Id-\Xi^K)_{Y \Omega} \ket{\psi_i}_{XY\Sigma\Omega E}\right\|_2^2 \right]  \leq \frac{\kappa}{2^n}.
\end{align}
Putting everything together and using \Cref{eq:bad-event-I} and \Cref{eq:bad-event-II}, we arrive that
$$
\mathsf{TD} \Big(\proj{\psi_T}, \underset{\substack{K\subset \bit^n\\
|K|=\kappa
}}{\E} \proj{\psi_T^K}\Big) \leq 2 T\cdot \sqrt{\frac{\kappa}{2^n}}.
$$
This proves the claim.
\end{proof}

\paragraph{Alternative to resolving Unruh's conjecture.}
We can use \Cref{thm:decisionDSZSSuperposProof} to prove an alternative version of \Cref{lem:doubleSidedZeroAverageCase}, which in turn is the key component of \Cref{thm:uniformZeroSearchHard}, and thus in resolving \Cref{conj}.

\begin{lemma}[Alternative to \Cref{lem:doubleSidedZeroAverageCase}]
    \textit{Any quantum $T$-query algorithm for the \textsc{Double-Sided Zero-Search} problem with respect to a random $2n$-bit invertible permutation with exactly $\kappa>0$ zero pairs, $\pi \sim S_{2^{2n}}^\kappa$, that succeeds with probability $\epsilon>0$ satisfies the inequality
    $$\epsilon \leq 2(T+1) \sqrt{\frac{\kappa}{2^{n}}}.$$}
    \label{lem:DSZSAverageCaseSuperpos}
\end{lemma}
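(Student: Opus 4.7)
The plan is to reduce search to decision, using \Cref{thm:decisionDSZSSuperposProof} as a black box. Given any $T$-query search algorithm $\algo A$ that outputs a candidate pair $(x,y) \in \bit^n \times \bit^n$, I would construct a $(T+1)$-query distinguisher $\algo D^{\varphi,\varphi^{-1}}$ as follows: run $\algo A^{\varphi,\varphi^{-1}}$ to obtain $(x,y)$, then make one additional forward query to compute $\varphi(x\|0^n)$, and output $1$ iff $\varphi(x\|0^n) = y\|0^n$.

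By construction, if $\varphi \in S_N^0$ (no zero pairs exist), then no pair $(x,y)$ can possibly satisfy $\varphi(x\|0^n)=y\|0^n$, so $\Pr_{\varphi \sim S_N^0}[\algo D^{\varphi,\varphi^{-1}}=1] = 0$. On the other hand, if $\varphi \sim S_N^\kappa$, then whenever $\algo A$ succeeds in outputting a genuine zero pair, the verification check passes, so $\Pr_{\varphi \sim S_N^\kappa}[\algo D^{\varphi,\varphi^{-1}}=1] \geq \epsilon$.

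Applying \Cref{thm:decisionDSZSSuperposProof} with the distinguisher $\algo D$, which uses at most $T+1$ queries, yields
\begin{align*}
\epsilon \;\leq\; \Big|\Pr_{\varphi \sim S_N^\kappa}[\algo D^{\varphi,\varphi^{-1}}=1] - \Pr_{\varphi \sim S_N^0}[\algo D^{\varphi,\varphi^{-1}}=1]\Big| \;\leq\; 2(T+1)\sqrt{\frac{\kappa}{2^n}},
\end{align*}
which is precisely the claimed bound.

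There is no serious obstacle once the decision-to-search reduction is set up; the only minor subtlety is that the extra verification query costs one query, which is why the final bound has $T+1$ rather than $T$. The heavy lifting---namely, the hybrid argument combined with the symmetrization of the function register---is already encapsulated in \Cref{thm:decisionDSZSSuperposProof}.
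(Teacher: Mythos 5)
Your proof is correct and takes essentially the same route as the paper: construct a $(T+1)$-query distinguisher that runs the search algorithm and verifies the candidate zero pair with one extra forward query, observe that the distinguisher never outputs $1$ on $S_N^0$, and invoke \Cref{thm:decisionDSZSSuperposProof}. The reasoning and the resulting bound match the paper exactly.
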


Before proceeding with the proof, we observe a difference between \Cref{lem:DSZSAverageCaseSuperpos} and \Cref{lem:doubleSidedZeroAverageCase}. While both are tight up to constant factors for a constant success probability (and thus sufficient to resolve \Cref{conj}), the bound in \Cref{lem:DSZSAverageCaseSuperpos} is of the form $\epsilon = O(T \sqrt{\kappa/2^n})$ whereas in \Cref{lem:doubleSidedZeroAverageCase} the bound derived is of the form $\epsilon = O(T^2 \kappa/2^n)$. The latter bound is quadratically tighter for small $\epsilon \ll 1$, and so we state our main theorem in these terms. We leave tightening the analysis in the superposition-oracle framework as an interesting open problem.

\begin{proof}
    Suppose that $\algo A^{\varphi, \varphi^{-1}}$ finds a zero pair with probability $\epsilon > 0$ whenever $\varphi$ is random $2n$-bit permutation with exactly $\kappa$ zero pairs. We can transform such an $\algo A$ into a distinguisher between permutations with $\kappa$ zero pairs and no zero pairs by running it on a given permutation, and guessing that $\varphi$ has $\kappa$ zero pairs if and only if $\algo A$ finds a zero pair. Formally, given a permutation $\varphi,\varphi^{-1}$ the distinguisher $\algo D^{\varphi, \varphi^{-1}}$ proceeds as follows:\begin{enumerate}
        \item Run $\algo A^{\varphi, \varphi^{-1}}(1^n)$ to obtain\footnote{If the output is not of this form, simply output $0$ (guess $\varphi$ has no zero pairs).} a pair $(x||0^n, y||0^n)$.
        \item If $\varphi(x||0^n)=y||0^n$, then output $1$ (guess that $\varphi$ has $\kappa$ zero pairs)
        \item Otherwise, output $0$ (guess that $\varphi$ has no zero pairs)
    \end{enumerate}
    If $\algo A$ makes $T$ queries then $\algo D$ makes $T+1$ queries, and the distinguishing advantage of $\algo D$ is the probability that $\algo A$ succeeds given that $\varphi$ has $\kappa$ zero pairs. We have \begin{align*}
        \Big| 
        \Pr_{\varphi \sim S_N^\kappa}[\algo D^{\varphi,\varphi^{-1}}(1^n) = 1] - \Pr_{\varphi \sim S_N^0}[\algo D^{\varphi,\varphi^{-1}}(1^n) = 1] 
        \Big|   \,  = \, \Pr_{\varphi \sim S_N^\kappa}[\algo A^{\varphi, \varphi^{-1}}(1^n)].
    \end{align*}
    Now applying \Cref{thm:decisionDSZSSuperposProof}, we obtain \begin{align*}
        \Pr_{\varphi \sim S_N^\kappa}[\algo A^{\varphi, \varphi^{-1}}(1^n)] \, \leq \, 2(T+1) \sqrt{\frac{\kappa}{2^n}}.
    \end{align*}
\end{proof}

\subsection{Non-Uniform Double-Sided Search}

In this section we define a new variant of
Unruh's original double-sided zero-search problem~\cite{Unruh2021,Unruh2023}. Namely, we consider the
\textsc{Double-Sided Search} problem in which the distribution on permutations is \emph{non-uniform}, and the constraints on preimages versus images are more flexible.
This problem will be useful for proving the one-wayness of the single-round sponge, which is our motivation for studying it.

\paragraph{Notation.} For the remainder of this section, we fix the following notation:
\begin{itemize}
    \item $N=2^{n}$ and $r, c>0$ are integers such that $r+c=n$.
    \item $X=(X_1,X_2)$ is a pair of subsets, where $X_1 \subset \bit^n$ is the set of bitstrings ending in $0^c$, $X_2 \subset \bit^n$ is the set of bitstrings beginning in $0^r$.
    \item $|X_\varphi|$ denotes the number of $X$-pairs with respect to a permutation $\varphi \in S_N$.
    \item $G_1$ is the subgroup of $S_N$ consisting of permutations which preserve membership in $X_1$, whereas $G_2$ is the subgroup of $S_N$ which preserves membership in $X_2$ 
\end{itemize}

Note that $|X_1|\cdot |X_2|=N$, and $G_1, G_2$ are both Young subgroups. In this section, we consider the distribution $\algo D_X$ from \Cref{defn:nonUniformDist} which assigns more weight to permutations $\varphi \in S_N$ that have a large amount of $X$-pairs.
We will occasionally omit the subscript which specifies $X=(X_1,X_2)$, for the pair of subsets $X_1,X_2$, when it is clear from context. 

\begin{problem}[\textsc{Non-uniform Double-Sided Search}]
Given quantum query access to a permutation $\varphi : \{0,1\}^{n} \rightarrow \{0,1\}^{n}$ as well as its inverse $\varphi^{-1}$, where $\varphi$ is sampled according to the non-uniform distribution $\algo D_X$ from \Cref{defn:nonUniformDist}, output a pair of strings $x \in X_1, y \in X_2$ such that $\varphi(x)=y$. 
\label{prob:nonUniformZeroSearch}
\end{problem}

In the remainder of this section, we prove that any algorithm must make at least $T = \Omega(\sqrt{\epsilon 2^{\min(r,c)}})$ many queries in order to find an $X$-pair for an $n$-bit permutation sampled from $\algo D_X$ with probability $\epsilon$. At a high level, the first step of the proof is showing that there exist worst-case instances on which the problem is hard for any fixed number of $X$-pairs (see \Cref{lem:doubleSidedNonuniformWorstCase}). Then, we give a worst-case to average-case reduction to show hardness for a random permutation with a fixed number of $X$-pairs in \Cref{lem:doubleSidedNonuniformAverageCase}. In the below theorem we give a bound for the general problem using the aforementioned results, as well as tail bounds on the number of subset pairs derived in \Cref{sec:combinatoricsSubsetPairs}.

\begin{theorem}
    Any quantum algorithm for \textsc{Non-uniform Double-Sided Search} that makes $T$ queries to an invertible permutation and succeeds with probability $\epsilon>0$ satisfies
    $$\epsilon \leq \frac{80(T+1)^2}{2^{\min(r,c)}}.$$
    \label{thm:nonUniformZeroSearchHard}
\end{theorem}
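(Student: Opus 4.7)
The plan is to mirror the proof structure of \Cref{thm:uniformZeroSearchHard}, replacing the uniform distribution with the non-uniform $\algo D_X$ and using the combinatorial machinery from \Cref{sec:combinatoricsSubsetPairs}. Concretely, I would first establish a worst-case hardness lemma (the analog of \Cref{lem:doubleSidedZeroWorstCase}), then lift it to an average-case statement over $S_N^\kappa$ via \Cref{lemma:rerandomizationZeroSearch} (the analog of \Cref{lem:doubleSidedZeroAverageCase}), and finally average over $\kappa$ using the non-uniform tail bound \Cref{thm:X-pairs-nonuniform-tail-bound}.

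For the worst-case reduction, assume without loss of generality that $r \leq c$, so $\min(r,c) = r$ and $|X_1| = 2^r \leq 2^c = |X_2|$. Given a function $f : \{0,1\}^r \to \{0,1\}$ with $|f^{-1}(1)| = K$, I would construct an involution $\varphi$ on $\{0,1\}^n$ which can be evaluated with a single query to $f$ and has exactly $K$ many $X$-pairs. A natural choice is to fix an injection from $X_1 \setminus \{0^n\}$ into $X_2 \setminus \{0^n\}$, say $(a, 0^c) \mapsto (0^r, a || 0^{c-r})$ for $a \neq 0^r$, and to swap each such pair whenever $f(a) = 1$, leaving everything else fixed; the element $0^n \in X_1 \cap X_2$ is handled separately so that the total count of $X$-pairs is exactly $K$ irrespective of $f(0^r)$. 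An $X$-pair of $\varphi$ then yields a marked input of $f$ with no overhead, so \Cref{cor:unstructuredSearchHardErrorVersion} gives worst-case success probability at most $8(T+1)^2 K / 2^{\min(r,c)}$.

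For the worst-case to average-case step, I would apply \Cref{lemma:rerandomizationZeroSearch} with the Young subgroups $G_1, G_2$: for any fixed $\pi \in S_N^\kappa$, sampling $\omega \sim G_1$ and $\sigma \sim G_2$ independently makes $\omega \circ \pi \circ \sigma$ uniform over $S_N^\kappa$; an $X$-pair of $\omega \circ \pi \circ \sigma$ maps back to an $X$-pair of $\pi$ via $\sigma$ on the input side and $\omega^{-1}$ on the output side, and queries to the symmetrized permutation (in either direction) cost just one query to $\pi, \pi^{-1}$. This yields a bound of $8(T+1)^2 \kappa / 2^{\min(r,c)}$ for uniformly random $\pi \in S_N^\kappa$. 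Crucially, a direct calculation from \Cref{defn:nonUniformDist} gives $\Pr_{\Phi \sim \algo D_X}[\Phi = \varphi \mid |X_\Phi| = \kappa] = 1/|S_N^\kappa|$, so the same bound applies to the conditional success probability under $\algo D_X$.

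To conclude, I would write
\[\epsilon = \sum_{\kappa=1}^{|X_1|} \Pr_{\Phi \sim \algo D_X}\big[|X_\Phi|=\kappa\big] \cdot \Pr\big[\text{success} \mid |X_\Phi|=\kappa\big]\]
and split the sum at $\kappa = 6$, exactly as in the proof of \Cref{thm:uniformZeroSearchHard}. The small-$\kappa$ piece is dominated by $48(T+1)^2/2^{\min(r,c)}$ using convexity, while the large-$\kappa$ tail is controlled by the identity $\Pr_{\Phi \sim \algo D_X}[|X_\Phi|=\kappa] = \kappa \cdot \Pr_{U \sim S_N}[|X_U|=\kappa]$ (from the proof of \Cref{thm:X-pairs-nonuniform-tail-bound}) combined with the uniform tail bound \Cref{thm:X-pairs-uniform-tail-bound}, noting that $\mathbb{E}_{U \sim S_N}[|X_U|] = |X_1| \cdot |X_2|/N = 1$ in our setting; summing the resulting arithmetico-geometric series yields a constant multiple of $(T+1)^2/2^{\min(r,c)}$. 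The main obstacle is the explicit worst-case construction in Step 1: because $X_1$ and $X_2$ now differ and intersect only at $0^n$, the identity-or-flip trick from the uniform case must be carefully adapted to produce a single involution realizing exactly $K$ $X$-pairs, with $0^n$ requiring special treatment to ensure the count is exact regardless of $f(0^r)$.
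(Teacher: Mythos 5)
Your proposal is correct and mirrors the paper's proof at every stage: worst-case hardness via reduction from unstructured search on $2^{\min(r,c)}$ elements (the analogue of Lemma~\ref{lem:doubleSidedNonuniformWorstCase}), worst-to-average lift via the symmetrization lemma with Young subgroups $G_1,G_2$ (Lemma~\ref{lem:doubleSidedNonuniformAverageCase}), the observation that conditioning $\algo D_X$ on $\kappa$ gives the uniform distribution on $S_N^\kappa$, and the split at $\kappa=6$ with the tail bound of Theorem~\ref{thm:X-pairs-nonuniform-tail-bound} (which, as you note, is itself derived from the uniform tail bound via the $\kappa$-weighting identity).

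The one place you diverge is the explicit worst-case construction. You build an involution by fixing an injection $(a\|0^c)\mapsto(0^r\|a\|0^{c-r})$ from $X_1\setminus\{0^n\}$ into $X_2\setminus\{0^n\}$ and swapping each pair when $f(a)=1$, with $0^n$ handled by a special case (paired with some element outside $X_1\cup X_2$ when $f(0^r)=0$) so that the number of $X$-pairs is exactly $K$ regardless of $f(0^r)$. The paper instead uses the "reverse-and-XOR" map $\varphi(x\|y)=(x\|y)^R$ if $f(x)=1$ and $(x\|y)^R\oplus 1^r\|0^c$ otherwise, which handles both $r\leq c$ and $r>c$ uniformly, needs no special case at $0^n$, and is not an involution but remains one-query invertible because $x^R$ is recoverable from the last $\min(r,c)$ bits of the image. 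Both constructions are valid and give the same $8(T+1)^2K/2^{\min(r,c)}$ worst-case bound; yours is perhaps more intuitive, theirs avoids the corner-case bookkeeping you correctly flag as the delicate step, and also handles the $(r,c)$ asymmetry without a WLOG that implicitly relies on the $(r,c)\leftrightarrow(c,r)$ symmetry of the problem under swapping $\varphi$ and $\varphi^{-1}$.
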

\begin{proof}
    Let $\varphi\sim\algo D_X$ be a permutation sampled according to the distribution $\algo D_X$ from \Cref{defn:nonUniformDist} and let $K=|X_\varphi| \in \{0,1,\dots,2^{\min(r,c)}\}$ be the random variable corresponding to the number of $X$-pairs in $\varphi$. By the law of total probability, we can write the success probability of any $T$-query algorithm as the sum
    \begin{align*}
        \epsilon =& \sum_{\kappa=0}^{2^{\min(r,c)}} \Pr[K=\kappa] \cdot \Pr[success| K=\kappa] \\
        =& \underbrace{\sum_{\kappa=1}^{6} \Pr[K=\kappa] \cdot \Pr[success| K=\kappa]}_{=P_1} + \underbrace{\sum_{\kappa=7}^{2^{\min(r,c)}} \Pr[K=\kappa] \cdot \Pr[success| K=\kappa]}_{=P_2}.
    \end{align*}
    We bound each term separately---note that splitting the sum at $\kappa=6$ is chosen to give the tightest bound. Beginning with the first, we find \begin{align*}
        P_1 \leq& \max_{\kappa \in [6]}\left\{\Pr[success| K=\kappa]\right\} & \text{(By Convexity)} \\
        \leq& \frac{48(T+1)^2}{2^{\min(r,c)}}. & \text{(By \Cref{lem:doubleSidedNonuniformAverageCase})}
    \end{align*}
    Next, continuing with the second term, we get \begin{align*}
        P_2 \leq& \sum_{\kappa=7}^{2^{\min(r,c)}} \Pr[success| K=\kappa] \cdot 3\exp(-4\kappa/9) & \text{(By \Cref{thm:X-pairs-nonuniform-tail-bound})} \\
        \leq& \sum_{\kappa=7}^{2^{\min(r,c)}} \frac{24(T+1)^2 \kappa}{2^{\min(r,c)}}\cdot\exp(-4\kappa/9) & \text{(By \Cref{lem:doubleSidedNonuniformAverageCase})} \\
        \leq& \frac{32(T+1)^2}{2^{\min(r,c)}}. & \text{(Arithmetico-geometric series)}
    \end{align*}
    This concludes the proof, as we can bound the success probability as
    \begin{align*}
        \epsilon \leq \frac{80(T+1)^2}{2^{\min(r,c)}}.
    \end{align*}
\end{proof}

This bound is tight up to constant factors for any $\epsilon$, as there is at least a single $X$-pair with $\Omega(1)$ probability (\Cref{fact:Z-pairs-existence-probability}), which can be found using Grover's algorithm \cite{Grover96algorithm} with probability $\Theta(T^2/2^{\min(r, c)})$ after $T$ queries. In particular, we have the following corollary.
\begin{corollary}\label{cor:grover-nonuniform-dszs}
    Given quantum oracle-access to an $n$-bit permutation chosen from $\algo D_X$, the query complexity of outputting an $X$-pair with probability $\epsilon$ is $\Theta(\sqrt{\epsilon 2^{\min(r, c)}})$.
\end{corollary}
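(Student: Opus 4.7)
The corollary asserts a matching upper and lower bound, so my plan is to handle the two directions separately. The lower bound $T = \Omega(\sqrt{\epsilon \, 2^{\min(r,c)}})$ is immediate: simply rearrange the inequality $\epsilon \leq 80(T+1)^2 / 2^{\min(r,c)}$ from \Cref{thm:nonUniformZeroSearchHard} to solve for $T$. All the real work lies in the matching upper bound, which must exhibit a $T$-query quantum algorithm succeeding with probability $\Omega(T^2 / 2^{\min(r,c)})$ when $\varphi \sim \algo D_X$.

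For the upper bound, I would use Grover's algorithm applied to whichever of the two natural unstructured-search instances is smaller. Concretely, define a marking oracle on $X_1$ that marks $x \in X_1$ iff $\varphi(x) \in X_2$; this can be implemented with one forward query to $\varphi$ per Grover iteration, giving a search space of size $|X_1| = 2^r$. Alternatively, define a marking oracle on $X_2$ that marks $y \in X_2$ iff $\varphi^{-1}(y) \in X_1$; this uses one backward query per iteration, with search space of size $|X_2| = 2^c$. In either case the set of marked elements is in bijection with $X_\varphi$, so the number of marked elements is $K = |X_\varphi|$. Crucially, by choosing whichever of the two instances has the smaller search space, I reduce the total space to $N = 2^{\min(r,c)}$, which is where the $\min(r,c)$ in the final bound comes from. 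This step is why access to $\varphi^{-1}$ is essential for tightness; only searching forward in $X_1$ would give a suboptimal $2^r$ when $c < r$.

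The key observation that makes the probability bound work is that the distribution $\algo D_X$ from \Cref{defn:nonUniformDist} assigns weight proportional to $|X_\varphi|$, and so is supported only on permutations with $|X_\varphi| \geq 1$. Therefore $K \geq 1$ almost surely, and applying the standard Grover analysis in a space of size $N = 2^{\min(r,c)}$ with $K \geq 1$ marked items yields success probability $\Omega(T^2 K / N) \geq \Omega(T^2 / 2^{\min(r,c)})$ as long as $T$ is below the saturation threshold $\sqrt{N/K}$. Setting $T = \Theta(\sqrt{\epsilon \, 2^{\min(r,c)}})$ then produces success probability $\Omega(\epsilon)$, which matches the lower bound up to constant factors and completes the $\Theta$-bound.

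The only mildly subtle point, and probably the one requiring a careful statement, is making sure the Grover success-probability guarantee holds for every $\varphi$ in the support of $\algo D_X$ (not just on average), since we need the probability bound over the internal randomness of the algorithm for each fixed $\varphi$; but because $K \geq 1$ pointwise on $\mathrm{supp}(\algo D_X)$, this is automatic and requires no second-moment argument or application of \Cref{fact:Z-pairs-existence-probability}-style existence facts. No significant obstacle remains beyond citing the quantitative form of Grover search.
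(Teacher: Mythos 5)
Your proposal is correct and follows the same overall template as the paper: the lower bound comes from rearranging \Cref{thm:nonUniformZeroSearchHard}, and the upper bound comes from Grover's algorithm over the smaller of the two natural search spaces (enabled by inverse access), giving $2^{\min(r,c)}$. One point where your argument is actually cleaner than the paper's: to justify $K \geq 1$, the paper cites \Cref{fact:Z-pairs-existence-probability}, which is a statement about a \emph{uniformly} random permutation having a zero pair with probability $1 - 1/e + o(1)$ — a fact that does not strictly apply to the non-uniform distribution $\algo D_X$ (nor to general $X$-pairs). Your observation that $\algo D_X$ assigns zero mass to permutations with $|X_\varphi| = 0$, so that $K \geq 1$ holds pointwise on the support, is both more direct and more accurate. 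The one thing both you and the paper elide is that a fixed $T$-iteration Grover run can over-rotate when $K$ is large and unknown; to make the quantitative claim $\Omega(\min(1, T^2K/N))$ honest for every $\varphi$ in the support, one should run Grover with a uniformly random iteration count in $\{0,\dots,T-1\}$ à la Boyer--Brassard--H\o yer--Tapp. Since the paper's own justification is a single informal sentence, this is a shared (and standard, fixable) looseness rather than a gap unique to your write-up.
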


\paragraph{Reduction from worst-case unstructured search.}
In the following lemma, we prove a query-lower bound for a worst-case variant of \textsc{Non-uniform Double-Sided Search} in \Cref{prob:nonUniformZeroSearch}, where we are given the promise that the underlying permutation has exactly $K$ many $X$-pairs. We show the following.

\begin{lemma}
    Any $T$-query quantum algorithm for \textsc{Non-Uniform Double-Sided Search} on a worst-case permutation with exactly $K>0$ many $X$-pairs that succeeds with probability $\epsilon>0$ must satisfy the inequality
    $$\epsilon \leq \frac{8(T+1)^2K}{2^{\min(r,c)}}.$$
    \label{lem:doubleSidedNonuniformWorstCase}
\end{lemma}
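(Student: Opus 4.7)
The plan is to mimic the proof of \Cref{lem:doubleSidedZeroWorstCase} and reduce worst-case \textsc{Unstructured Search} with $K$ out of $2^{\min(r,c)}$ marked elements---without any query overhead---to a specific instance of \textsc{Non-Uniform Double-Sided Search} on an $n$-bit permutation $\varphi$ having exactly $K$ many $X$-pairs. The stated bound will then follow immediately from \Cref{cor:unstructuredSearchHardErrorVersion}.

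First, I would handle the case $r \leq c$. Given quantum query-access to $f : \bit^r \to \bit$ with $|f^{-1}(1)| = K$, I parse each input $x \in \bit^n$ as $x = (a||b_1||b_2)$ with $a, b_1 \in \bit^r$ and $b_2 \in \bit^{c-r}$, and define
$$
\varphi(a||b_1||b_2) = \begin{cases} (b_1||a||b_2) & \text{if } f(a) = 1, \\ ((b_1 \oplus 1^r)||a||b_2) & \text{if } f(a) = 0. \end{cases}
$$
I then verify that $\varphi$ is a permutation (the middle block $a$ is preserved, and for each fixed $a$ the induced map on the outer blocks is a bijection), and that forward and inverse queries each cost a single query to $f$ (to invert, read $a$ off the middle block of the output and query $f(a)$ to decide whether to apply the $1^r$ flip). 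Most importantly, $\varphi$ has exactly $K$ many $X$-pairs: for $x = (a||0^r||0^{c-r}) \in X_1$, the image $\varphi(x)$ has its first $r$ bits equal to $b_1 = 0^r$ (hence lies in $X_2$) precisely when $f(a) = 1$, and equal to $b_1 \oplus 1^r = 1^r$ (hence does not) when $f(a) = 0$.

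The case $r > c$ is symmetric. Given $f : \bit^c \to \bit$ with $|f^{-1}(1)| = K$, I parse inputs as $(a_1||a_2||b)$ with $a_1 \in \bit^{r-c}$ and $a_2, b \in \bit^c$, and define
$$
\varphi(a_1||a_2||b) = \begin{cases} (a_1||b||a_2) & \text{if } f(a_2) = 1, \\ (a_1||(b \oplus 1^c)||a_2) & \text{if } f(a_2) = 0. \end{cases}
$$
By the analogous verification, $\varphi$ is a permutation simulable with one $f$-query per direction, and an input $(a_1||a_2||0^c) \in X_1$ maps into $X_2$ precisely when $a_1 = 0^{r-c}$ \emph{and} $f(a_2) = 1$; thus $\varphi$ again has exactly $K$ many $X$-pairs. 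In either case, an $X$-pair returned by the search algorithm reveals a preimage of $1$ under $f$ (namely, the block that $f$ was queried on), completing the reduction, and \Cref{cor:unstructuredSearchHardErrorVersion} applied with $N = 2^{\min(r,c)}$ delivers the stated bound.

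The main obstacle I anticipate is guaranteeing that the $X$-pair count is \emph{exactly} $K$, not merely $K$ up to an additive constant. A naive transposition-based construction that swaps $(a||0^c) \in X_1$ with a canonical element of $X_2$ whenever $f(a) = 1$ suffers from an off-by-one error coming from $0^n \in X_1 \cap X_2$, which is a ``free'' $X$-pair whose presence or absence depends on whether $f(0^{\min(r,c)}) = 1$. The XOR-with-$1^r$ (or $1^c$) trick above sidesteps this issue cleanly---in direct parallel to the zero-pair proof---by deterministically pushing the image out of $X_2$ whenever $f$ evaluates to zero, leaving no ambiguous case at the origin.
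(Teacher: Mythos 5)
Your proof is correct and follows essentially the same approach as the paper's: reduce from worst-case unstructured search over $2^{\min(r,c)}$ elements, embed $f$ into an $n$-bit permutation with exactly $K$ many $X$-pairs via an XOR-with-ones trick that deterministically pushes images out of $X_2$ when $f$ vanishes, and simulate both directions with one query to $f$ by reading off the relevant block from either side. The only difference is the concrete gadget---you split into cases $r \le c$ and $r > c$ with a block-rearrangement construction, whereas the paper uses a single string-reversal formula $\varphi(x||y) = (x||y)^R$ (optionally XORed with $1^r||0^c$) to handle both regimes uniformly---but the two gadgets serve the same purpose and give the identical bound.
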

\begin{proof}
    We can reduce \textsc{Unstructured Search} with $K$ out of $2^{\min(r, c)}$ marked elements to a specific instance of \textsc{Non-Uniform Double-Sided Search} with respect to a permutation $\varphi$ on $N=2^{n}$ elements and $K$ many $X$ pairs. Moreover, as we will show, our reduction does not incur any query overhead. 
    
    Suppose we are given query access to a function $$f:\{0,1\}^{{\min(r, c)}} \rightarrow \{0,1\}$$ such that $|f^{-1}(1)|=K$. We can construct a permutation $\varphi:\{0,1\}^{n} \rightarrow \{0,1\}^{n}$ for $x\in \bit^{\min(r, c)}, y \in \bit^{\max(r, c)}$ as follows, where $s^R$ denotes the reverse of string $s$:
    $$
    \varphi(x||y) =\begin{cases}
        (x||y)^R & \text{ if } f(x)=1 \\
        (x||y)^R \oplus 1^r||0^c  & \text{ if } f(x)=0.
    \end{cases}
    $$

    It is apparent that we can implement a query to $\varphi$ using a single query to $f$. Moreover, the image $\varphi(x||y)$ always contains $x^R$ in the last $\min(r, c)$ bits, which suffices to determine $f(x)$, and thus to determine $x||y$ by inverting whichever case was applied. Hence, we can implement $\varphi^{-1}$ with a single query to $f$. Now let $x \in \bit^{\min(r, c)}, y\in\bit^{\max(r,c)}$ such that $y$ ends in $0^c$ and $\varphi(x||y)$ begins with $0^r$. It follows that $f(x)=1$ and $y=0^{\max(r, c)}$, as otherwise either $y$ does not end in $0^c$ or $\varphi(x||y)$ does not begin with $0^r$. Similarly, if for any $x'$ we have $f(x')=1$ then $x'||0^{\max(r, c)}$ is an $X$-pair of $\varphi$. Therefore, there are $K$ many $X$-pairs in $\varphi$, and given an $X$-pair of $\varphi$ it is straight forward to find a preimage of $1$ under $f$, which can be used to solve \textsc{Unstructured Search}. The claim now follows from \Cref{cor:unstructuredSearchHardErrorVersion}.
\end{proof}

\paragraph{Worst-case to average-case reduction.}
In the following lemma, we prove a query-lower bound for an average-case variant of  \textsc{Non-uniform Double-Sided Search} in \Cref{prob:nonUniformZeroSearch}, where we are given the promise that the underlying permutation has exactly $K$ many $X$-pairs. In this case, average refers to a uniform random permutation subject to this promise. We show the following.

\begin{lemma}
    Any $T$-query algorithm for \textsc{Non-Uniform Double-Sided Search} on a random permutation which has exactly $K>0$ many $X$-pairs that succeeds with probability $\epsilon>0$ must satisfy the inequality
    $$\epsilon \leq \frac{8(T+1)^2K}{2^{\min(r,c)}}.$$
    \label{lem:doubleSidedNonuniformAverageCase}
\end{lemma}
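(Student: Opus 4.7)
The plan is to mimic the proof of \Cref{lem:doubleSidedZeroAverageCase}, reducing the average-case problem to the worst-case lower bound of \Cref{lem:doubleSidedNonuniformWorstCase} via a re-randomization argument based on Young subgroups. The only substantive twist relative to the zero-pair case is that the subsets $X_1$ and $X_2$ may now be distinct, so the Young subgroups $G_1$ and $G_2$ are genuinely different and must be composed on the appropriate side of the permutation.

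Concretely, I would start from a worst-case permutation $\varphi \in S_N^K$ with exactly $K > 0$ many $X$-pairs, and symmetrize it by sampling $\omega$ and $\sigma$ independently and uniformly from $G_1$ and $G_2$ (respectively), setting $\varphi^{\mathrm{sym}} := \omega \circ \varphi \circ \sigma$. By the symmetrization lemma (\Cref{lemma:rerandomizationZeroSearch}), $\varphi^{\mathrm{sym}}$ is uniformly distributed over $S_N^K$, i.e., it is a uniformly random $n$-bit permutation subject to having exactly $K$ many $X$-pairs. Given any output pair $(x,y) \in X_1 \times X_2$ with $\varphi^{\mathrm{sym}}(x) = y$, the pair $(x', y') := (\sigma(x), \omega^{-1}(y))$ is an $X$-pair of $\varphi$: the subset-preserving property of $G_1$ and $G_2$ ensures $x' \in X_1$ and $y' \in X_2$, while $\varphi(x') = y'$ follows directly from the definition of $\varphi^{\mathrm{sym}}$. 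Moreover, each forward or inverse query to $\varphi^{\mathrm{sym}}$ can be simulated using a single query to $\varphi$ or $\varphi^{-1}$, since $\omega$ and $\sigma$ are sampled classically and known to the reduction.

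Combining these steps, any $T$-query algorithm that succeeds on the average-case instance $\varphi^{\mathrm{sym}}$ with probability $\epsilon$ yields a $T$-query algorithm that succeeds on the worst-case instance $\varphi$ with probability at least $\epsilon$. Invoking \Cref{lem:doubleSidedNonuniformWorstCase} then gives the claimed bound $\epsilon \leq 8(T+1)^2 K / 2^{\min(r,c)}$. The step requiring the most care is verifying that the left/right assignment of $G_1, G_2$ in the composition preserves the count of $X$-pairs in the asymmetric case $X_1 \neq X_2$; this is already handled by the double-coset characterization in \Cref{corrolary:zeroSearchYoungSubgroups} together with \Cref{lemma:cardinalityDoubleCosets}, so no machinery beyond that developed in \Cref{sec:subsetPairsSymGroup} is required.
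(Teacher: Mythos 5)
Your proposal is essentially the paper's proof, nearly verbatim: symmetrize a worst-case permutation by composing on both sides with uniform elements of the two Young subgroups, invoke \Cref{lemma:rerandomizationZeroSearch} to get uniformity over $S_N^K$, pull an $X$-pair of $\varphi$ back from an $X$-pair of $\varphi^{\mathrm{sym}}$, observe the query simulation is free, and finish via the worst-case bound \Cref{lem:doubleSidedNonuniformWorstCase}. One wrinkle you should double-check, since you flag the left/right assignment as the delicate step but do not actually verify it: with $\omega \sim G_1$ (preserving $X_1$) and $\sigma \sim G_2$ (preserving $X_2$), the element $x' = \sigma(x)$ for $x \in X_1$ is not guaranteed to stay in $X_1$ (since $\sigma$ preserves $X_2$, not $X_1$), and symmetrically $y' = \omega^{-1}(y)$ is not guaranteed to stay in $X_2$; for the subset-preservation claim to hold, the pre-composed permutation should come from $G_1$ and the post-composed one from $G_2$. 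The paper's own proof of this lemma uses the same labels, so this is a pre-existing convention issue tracing back to \Cref{lemma:rerandomizationZeroSearch} and \Cref{corrolary:zeroSearchYoungSubgroups}, not a defect peculiar to your plan; once the left/right roles are fixed consistently, both arguments go through unchanged.
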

\begin{proof}
    We can re-randomize any worst-case permutation $\varphi$ with $K$ many $X$-pairs toward an average-case permutation $\varphi^{\mathrm{sym}}$ with $K$ many $X$-pairs as follows. \begin{enumerate}
        \item Randomly and independently sample $\omega \sim G_1, \sigma \sim G_2$, where $\omega$ preserves membership in $X_1$ and $\sigma$ preserves membership in $X_2$.
        \item Define the symmetrized permutation $\varphi^{\mathrm{sym}} = \omega \circ \varphi \circ \sigma$.
    \end{enumerate}
    It follows from Lemma \ref{lemma:rerandomizationZeroSearch} that $\varphi^{\mathrm{sym}}$ is random among permutations with $K$ many $X$-pairs. Let $(x, y) \in X_1 \times X_2$ be an $X$-pair of $\varphi^{\mathrm{sym}}$ such that $\varphi^{\mathrm{sym}}(x)=y$. Then, 
    \begin{align}
        x' =& \, \sigma(x), & y'=\omega^{-1}(y)
    \end{align}
    which satisfy $x' \in X_1, y' \in X_2$ and $\varphi(x')=y'$.  Hence, an $X$-pair of $\varphi$ can be constructed for free from an $X$-pair of $\varphi^{\mathrm{sym}}$. We can simulate queries to $\varphi^{\mathrm{sym}}$ as well as its inverse with a single query to $\varphi, \varphi^{-1}$ respectively, so the reduction incurs no overhead. Lemma \ref{cor:unstructuredSearchHardErrorVersion} now implies the claim.
\end{proof}




\section{One-Wayness of the Sponge Construction}
\label{sec:spongeOneWay}

In this section, we prove our main result; namely, the quantum one-wayness of the single-round sponge when instantiated with invertible random permutations.

\paragraph{The sponge construction.}
Recall that the sponge construction~\cite{KeccakSponge3} uses two main parameters which we call the \emph{rate} $r$ and the \emph{capacity} $c$ of the sponge. The internal state of the sponge function gets updated through successive applications of a \emph{block function} $\varphi: \bit^{r+c} \rightarrow \bit^{r+c}$.
Let $\sigma^{(0)} = (0^r || \mathsf{IV})$ denote the initial state of the sponge, where $\mathsf{IV} \in \bit^c$ is some initialization vector; for simplicity, we consider the choice $\mathsf{IV}=0^c$. In the following, we use the notation $x_r \in \bit^r$ and $x_c \in \bit^c$ to denote the first $r$ bits as well as the last $c$ bits of $x \in \bit^{r+c}$, respectively.

The sponge hash function $\mathsf{Sp}^\varphi :  \bit^* \rightarrow \bit^r$ is evaluated as follows on  a sequence of $r$-bit blocks $m = (m_1,\dots,m_\ell)$:
\begin{itemize}
    \item (Absorption phase) for $i \in [\ell]$, update the state by computing
    $$\sigma^{(i)}  = \varphi\big((\sigma^{(i-1)}_r \oplus m_i)|| \sigma^{(i-1)}_c \big).$$
    \item (Squeezing phase) parse the final state $\sigma^{(\ell)}$ as $(\sigma_r^{(\ell)},\sigma_c^{(\ell)}) \in \bit^r \times \bit^c$, and output the digest $\mathsf{Sp}^\varphi(m_1,\dots,m_\ell) = \sigma_r^{(\ell)}$ consisting of the first $r$ bits of $\sigma^{(\ell)}$.
\end{itemize}
We remark that the sponge construction also allows for variable-length output by essentially running the absorption phase in reverse~\cite{KeccakSponge3}; we choose to ignore this option for simplicity as we only focus on single-round sponge hashing.

\paragraph{One-wayness of the single-round sponge.} 
Recall that, in the special case when there is only a single round of absorption, the sponge function $\mathsf{Sp}^\varphi: \bit^{r} \rightarrow \bit^r$ takes on a simple form; namely, on input $x \in \bit^r$, the output is given by the string $y = \mathsf{Sp}^\varphi(x)$, where $y$ corresponds to the first $r$ bits of $\varphi(x||0^c)$. 

We now answer the following question on the one-wayness of the single-round sponge; namely, how many queries does a quantum algorithm $\algo A^{\varphi,\varphi^{-1}}$ need to invert $\mathsf{Sp}^\varphi$ on a randomly chosen input, when $\varphi$ is an invertible random permutation? 

Let us first define the one-wayness experiment in full generality. Note that the sponge function $\mathsf{Sp}^\varphi$ depends on $\varphi$, which in turn is modeled as a uniformly random permutation which we have oracle access to. Therefore, we need to consider a variant of one-wayness which is parameterized by an ensemble of oracle functions. 

\begin{definition}[One-wayness] 
Let $\lambda \in \N$ be the security parameter and let $\algo O = \{\algo O_\lambda\}$ be an ensemble of oracle functions.
A function family $$
\algo H^{\algo O} = \Big\{\{h^O: \bit^{m(\lambda)} \rightarrow \bit^{n(\lambda)}\}_{O \in \algo O_\lambda} \Big\}_{\lambda \in \N}
$$ is called one-way if $(i)$
there exists a polynomial-time algorithm $\algo M^O(1^\lambda)$ that efficiently computes $h^O$ with oracle access to $O$ and $(ii)$, for any efficient quantum algorithm $\mathcal{A}$, it holds
$$
\underset{\substack{x \sim \bit^m\\
O \sim \algo O_\lambda}}
{\Pr}\Big[ h^O(x) = h^O(x') \, : \, x' \from \algo A^O(1^\lambda,h^O(x)) \Big] \leq \negl(\lambda).
$$
\label{def:one-wayness}
\end{definition}

We now prove a lower bound for single-round sponge hashing, where $\ell=1$ and the block function $\varphi: \bit^{r+c} \rightarrow \bit^{r+c}$ is an invertible random permutation. Specifically, we show that any algorithm must make at least $T = \Omega(\sqrt{\epsilon 2^{\min(r,c)}})$ many queries in order to find a pre-image of a randomly chosen image with probability $\epsilon$.

\begin{theorem}
Let $n=r+c$ and $N=2^n$. Suppose that a quantum algorithm $\algo A$ breaks the one-wayness of the single-round Sponge $\mathsf{Sp}^\varphi :  \bit^r \rightarrow \bit^r$ with probability
$$
\underset{\substack{x \sim \bit^r\\
\varphi \sim S_N}}
{\Pr}\Big[ \mathsf{Sp}^\varphi(x) = \mathsf{Sp}^\varphi(x') \, : \, x' \from \algo A^{\varphi,\varphi^{-1}}(1^n,\mathsf{Sp}^\varphi(x)) \Big] = \epsilon.
$$
Then, if $\algo A$
makes a total amount of $T$ quantum queries to $\varphi, \varphi^{-1}$, it satisfies the inequality $$
\epsilon \leq \frac{80(T+1)^2}{2^{\min(r,c)}}.
$$
\label{thm:spongeOneWay}
\end{theorem}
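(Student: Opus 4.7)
The plan is to reduce the one-wayness game to the \textsc{Non-Uniform Double-Sided Search} problem (\Cref{prob:nonUniformZeroSearch}) and then invoke \Cref{thm:nonUniformZeroSearchHard}, which already gives precisely the desired bound of $80(T+1)^2/2^{\min(r,c)}$. The one-wayness challenge is to find a preimage $x' \in \bit^r$ such that $\varphi(x'||0^c)$ agrees with $y$ in its first $r$ bits, where $(\varphi,x)$ is drawn jointly with $\varphi \sim S_N$ uniform and $x\sim \bit^r$ uniform. Defining the subsets $X_1 = \bit^r \times \{0^c\}$ and $X_2 = \{0^r\} \times \bit^c$ (so that $|X_1|\cdot|X_2|=N$ and both are Young subgroups as required in \Cref{sec:subsetPairsSymGroup}), the target pair $(x'||0^c,\,\varphi(x'||0^c))$ is exactly an $X$-pair for the special case $y=0^r$.

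First, I would pass to an equivalent game in which the image is sampled \emph{before} the permutation. Using Bayes's rule on the joint distribution of $(\varphi,y)$, the probability that the one-way challenger outputs challenge $y$ equals $(1/|S_N|)\cdot (1/2^r)\cdot|\{(\varphi,x):\mathsf{Sp}^\varphi(x)=y\}|$, which after normalization makes $y$ uniform on $\bit^r$ and, \emph{conditioned on} $y$, makes $\varphi$ sampled with weight proportional to the number of preimages of $y$ under $\mathsf{Sp}^\varphi$. This is precisely the alternate characterization alluded to in \Cref{lem:oneWayAlternateGameEqual}, and I would verify it by a direct counting argument.

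Second, I would use the left-translation symmetry of the uniform measure on $S_N$ to reduce to the case $y=0^r$. Define the involution $\tau_y$ acting on $\bit^{r+c}$ by $\tau_y(a\|b) = (a\oplus y)\|b$. Then $\varphi \mapsto \tau_y\circ \varphi$ is a bijection on $S_N$ that preserves the uniform measure, and under this relabeling a preimage of $y$ under $\mathsf{Sp}^\varphi$ corresponds to a preimage of $0^r$ under $\mathsf{Sp}^{\tau_y\circ\varphi}$. Since $\tau_y$ and $\tau_y^{-1}=\tau_y$ can be applied by the reduction at the cost of one gate per oracle call (no queries), we may assume $y=0^r$ without any loss in query complexity or success probability.

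With $y=0^r$ fixed, preimages of $y$ under $\mathsf{Sp}^\varphi$ are in bijection with $X$-pairs of $\varphi$ for the subsets $X_1,X_2$ above, and the weight of $\varphi$ in the alternate game is proportional to $|X_\varphi|$. This is exactly the distribution $\algo D_X$ from \Cref{defn:nonUniformDist}. Hence the reduction turns a $T$-query adversary against one-wayness with success $\epsilon$ into a $T$-query algorithm for \textsc{Non-Uniform Double-Sided Search} with the same success probability $\epsilon$, and \Cref{thm:nonUniformZeroSearchHard} yields
\[
\epsilon \;\leq\; \frac{80(T+1)^2}{2^{\min(r,c)}}.
\]
The main subtlety, and the step most worth executing carefully, is verifying that the induced distribution on $\varphi$ in the ``image-first'' game matches $\algo D_X$ exactly (not merely up to some factor); this amounts to checking that the normalization $\sum_{\sigma\in S_N}|X_\sigma|$ in \Cref{defn:nonUniformDist} is cancelled by the normalization from conditioning on $y$, which follows from \Cref{corollary:total-X-pairs}.
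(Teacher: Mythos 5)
Your proposal is correct and matches the paper's own proof essentially step for step: pass to the image-first game (this is exactly \Cref{lem:oneWayAlternateGameEqual}), translate by $\mathsf{XOR}_{(y\|0^c)}$ (your $\tau_y$) to normalize $y=0^r$ so the conditioned permutation law is $\algo D_X$, and invoke \Cref{thm:nonUniformZeroSearchHard}. One small accounting point worth tightening (the paper's explicit reduction $\algo B$ has it too): producing the output $X$-pair $(x'\|0^c,\,\pi(x'\|0^c))$ costs one further oracle call beyond $\algo A$'s $T$ queries, so the reduction is really $(T+1)$-query; this only shifts the constant and does not affect the $\Theta(T^2/2^{\min(r,c)})$ form of the bound.
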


\begin{proof}
Suppose that a quantum algorithm $\algo A$ breaks the one-wayness of the single-round sponge hash function $\mathsf{Sp}^\varphi :  \bit^r \rightarrow \bit^r$ with probability $\epsilon > 0$. We reduce from the problem of \textsc{Non-uniform Double-Sided Search} in \Cref{prob:nonUniformZeroSearch}. 
Let $X=(X_1,X_2)$ be a pair of subsets, where $X_1 \subset \bit^n$ denotes the set of bitstrings ending in $0^c$ and $X_2 \subset \bit^n$ denotes the set of bitstrings beginning in $0^r$.

Consider the following reduction $\algo B^{\pi,\pi^{-1}}(1^n)$ which receives quantum query access to a random permutation $\pi : \{0,1\}^{n} \rightarrow \{0,1\}^{n}$ chosen from $\mathcal{D}_X$ (\Cref{defn:nonUniformDist}) as well as its inverse $\pi^{-1}$.
\begin{enumerate}
    \item $\algo B$ samples a random $y \sim \bit^r$.

    \item $\algo B$ runs $\algo A^{\varphi,\varphi^{-1}}(1^n,y)$ to get $x' \in \bit^r$; whenever $\algo A$ makes a query to either $\varphi$ or $\varphi^{-1}$, the reduction $\algo B$ answers with either $O_\varphi$ or $O_{\varphi^{-1}}$, where $\varphi$ is the permutation
    $$
    \varphi := \mathsf{XOR}_{y||0^c} \circ \pi
    $$
    with $\mathsf{XOR}_{(y||0^c)} (a||b)= (a \oplus y)||b$ for $a \in \bit^r$ and $b \in \bit^c$.

    \item $\algo B$ computes $\varphi(x'||0^c) = y_{x'}||z_{x'}$ using a single query to $O_\pi$, and outputs the pair $(x'||0^c)$ and $(0^r || z_{x'})$.
\end{enumerate}
By Lemma \ref{lem:oneWayAlternateGameEqual}, we know $\algo A$ sees an instance $(\varphi, y)$ drawn from the proper one-way game distribution. Note that whenever $\algo A^{\varphi,\varphi^{-1}}(1^n,y)$ succeeds at breaking one-wayness with respect to $\mathsf{Sp}^\varphi$, it holds that $y_{x'} = y$, and thus $\varphi(x'||0^c) = y||z_{x'}$. But then the output $(x'||0^c)$ and $(0^r||z_{x'})$ of the reduction $\algo B$ forms an $X$-pair of $\pi$, since
\begin{align*}
\pi(x'||0^c) &= \mathsf{XOR}_{(y||0^c)} (\varphi(x'||0^c))\\
 &= \mathsf{XOR}_{(y||0^c)} (y||z_{x'})\\
&= 0^r||z_{x'}.
\end{align*}
The lower bound now follows from Theorem \ref{thm:nonUniformZeroSearchHard}.
\end{proof}

Next, we prove the following technical lemma which we require for our main result. Here, we denote by $\algo D_X$ the non-uniform distribution from \Cref{defn:nonUniformDist}.

\begin{lemma}\label{lem:oneWayAlternateGameEqual} Let $n=r+c$ for $r,c \in \N$ and $N=2^{r+c}$. Let $X=(X_1,X_2)$, where $X_1 \subset \bit^n$ denotes the set of bitstrings ending in $0^c$ and $X_2 \subset \bit^n$ denotes the set of bitstrings beginning in $0^r$.
Consider the following two distributions:
\begin{align*}
&\underline{\text{Distribution $\algo D_1$}:} && \underline{\text{Distribution $\algo D_2$}:}\\
&1. \,\, \text{Sample $\varphi \sim S_N$;} && 1. \, \text{Sample $y \sim \bit^r$;} \\
&2. \,\, \text{Sample $x \sim \bit^r$;} && 2. \, \text{Sample $\pi \sim \algo D_{X}$ with} \\
&3. \,\, \text{Let $y=\mathsf{Sp}^\varphi(x)$;} && \quad \underset{\Pi \sim \algo D_{X}}{\Pr}[\Pi = \pi] = |X_\pi|/\sum_{\, \sigma \in S_N} |X_\sigma| \, ; \\
&4. \,\, \text{Output $(\varphi,y)$.} && 3. \,\, \text{Output $(\varphi,y)$ with $\varphi = \mathsf{XOR}_{(y||0^c)} \circ \pi$.}
\end{align*}
Then, the distributions $\algo D_1$ and $\algo D_2$ are identical.
\end{lemma}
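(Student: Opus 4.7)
The plan is to prove the two distributions agree by computing the probability mass each assigns to an arbitrary fixed pair $(\varphi_0, y_0) \in S_N \times \bit^r$ and showing the two expressions coincide. Under $\algo D_1$, the permutation is uniform, so $\Pr[\varphi = \varphi_0] = 1/N!$, while conditioned on $\varphi = \varphi_0$ the probability $\Pr[y = y_0]$ equals $M_{\varphi_0, y_0}/2^r$, where $M_{\varphi_0, y_0}$ counts the $x \in \bit^r$ for which the first $r$ bits of $\varphi_0(x\|0^c)$ equal $y_0$. Thus $\algo D_1$ assigns mass $M_{\varphi_0, y_0}/(N! \cdot 2^r)$ to $(\varphi_0, y_0)$.

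For $\algo D_2$, since $\mathsf{XOR}_{(y\|0^c)}$ is an involution, the map $\pi \mapsto \mathsf{XOR}_{(y_0\|0^c)} \circ \pi$ is a bijection on $S_N$, so the unique $\pi_0 \in S_N$ consistent with $(\varphi_0, y_0)$ is $\pi_0 := \mathsf{XOR}_{(y_0\|0^c)} \circ \varphi_0$. Hence $\algo D_2$ assigns mass $(1/2^r) \cdot |X_{\pi_0}|/\sum_{\sigma \in S_N}|X_\sigma|$. The key combinatorial identity to establish is
\[
|X_{\pi_0}| = M_{\varphi_0, y_0}.
\]
This follows by a direct unwinding: an element $a \in X_1$ is of the form $a = x\|0^c$ for $x \in \bit^r$, and writing $\varphi_0(x\|0^c) = u\|v$ with $u \in \bit^r$, $v \in \bit^c$, one has $\pi_0(x\|0^c) = (u \oplus y_0)\|v$, which lies in $X_2$ (i.e.\ starts with $0^r$) precisely when $u = y_0$. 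Therefore both sides count exactly the same set of inputs $x$.

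It remains to check that the normalizing constants line up, i.e.\ that $\sum_{\sigma \in S_N}|X_\sigma| = N!$. Since $X_1$ consists of the $2^r$ strings ending in $0^c$ and $X_2$ consists of the $2^c$ strings starting with $0^r$, we have $|X_1| \cdot |X_2| = 2^{r+c} = N$. Applying \Cref{corollary:total-X-pairs} then gives $\sum_{\sigma \in S_N}|X_\sigma| = (|X_1|\,|X_2|/N) \cdot N! = N!$, as required. Plugging this in yields that $\algo D_2$ assigns mass $M_{\varphi_0, y_0}/(N! \cdot 2^r)$ to $(\varphi_0, y_0)$, matching $\algo D_1$.

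I do not anticipate a serious obstacle: the proof is essentially bookkeeping once one sets up the bijection $\varphi \leftrightarrow \pi$ induced by $\mathsf{XOR}_{(y\|0^c)}$ and recognizes that $X$-pairs of $\pi_0$ translate verbatim into preimages of $y_0$ under $\mathsf{Sp}^{\varphi_0}$. The only spot that merits care is confirming that the non-uniform weighting of $\pi$ by $|X_\pi|$ is exactly what is needed to cancel the non-uniformity of $y = \mathsf{Sp}^\varphi(x)$ over a uniform $x \sim \bit^r$; this cancellation is precisely what \Cref{corollary:total-X-pairs} provides via the equality $|X_1|\,|X_2| = N$.
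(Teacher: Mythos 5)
Your proposal is correct and follows essentially the same approach as the paper: compute the mass each distribution places on a fixed pair $(\varphi_0,y_0)$, identify the unique $\pi_0=\mathsf{XOR}_{(y_0\|0^c)}\circ\varphi_0$ consistent with it under $\algo D_2$, show $|X_{\pi_0}|$ equals the number of $\mathsf{Sp}^{\varphi_0}$-preimages of $y_0$ by unwinding the XOR, and use $|X_1|\,|X_2|=N$ together with \Cref{corollary:total-X-pairs} (equivalently \Cref{thm:X-pairs-uniform}) to get $\sum_\sigma|X_\sigma|=N!$. Your write-up is slightly more explicit about the bijection and the counting identity than the paper's, but the argument is the same.
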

\begin{proof}
To show that the distributions $\algo D_1$ and $\algo D_2$ are identical, it suffices to show that each pair $(\varphi,y) \in S_N \times \bit^r$ is equally likely to occur under both $\algo D_1$ and $\algo D_2$.

First, we consider the probability that $(\varphi, y)$ occur under $\algo D_1$. Thus,
\begin{align*}
\Pr_{(\Phi,Y)\sim \algo D_1}[ (\Phi,Y) = (\varphi, y)] =& \Pr[\Phi = \varphi] \cdot \Pr[Y = y | \Phi = \varphi] \\
        =& \frac{1}{N!} \cdot \frac{|\{ x \in \bit^r \, : \, \mathsf{Sp}^{\varphi}(x) = y\}|}{2^r}.
    \end{align*}
Next, we consider the probability that $(\varphi, y)$ occurs under $\algo D_2$. We find that
    \begin{align*}
&\Pr_{(\Phi,Y)\sim \algo D_2}[ (\Phi,Y) = (\varphi, y)] = \Pr[Y = y] \cdot \Pr[\Phi = \varphi | Y = y] \\
&=  \frac{1}{2^r} \cdot \frac{|\{ x \in \bit^r \, : \, \mathsf{Sp}^{\pi}(x) = 0^r \text{ and } \varphi = \mathsf{XOR}_{(y||0^c)} \circ \pi, \, \text{ for } \, \pi \in S_N\}|}{\sum_{\sigma \in S_N}  |\{ x \in \bit^r \, : \, \mathsf{Sp}^{\sigma}(x) = 0^r, \, \text{ for } \, \sigma \in S_N\}|}.
    \end{align*}
Note that since $|X_1|\cdot|X_2|=N$, we can invoke \Cref{thm:X-pairs-uniform} to show that there is precisely one $X$-pair on average over the uniform choice over $S_N$. But since there are $N!$ such permutations the total number of $X$ pairs must be $N!$. Thus,
$$
N! = \sum_{\sigma \in S_N}  |\{ x \in \bit^r \, : \, \mathsf{Sp}^{\sigma}(x) = 0^r, \, \text{ for } \, \sigma \in S_N\}|.
$$
Moreover, because the permutation $\pi$ is uniquely defined by the permutation $\varphi$ and the image $y$, it is also easy to see that
\begin{align*}
&|\{ x \in \bit^r \, : \, \mathsf{Sp}^{\pi}(x) = 0^r \text{ and } \varphi = \mathsf{XOR}_{(y||0^c)} \circ \pi, \, \text{ for } \, \pi \in S_N\}|\\
&= |\{ x \in \bit^r \, : \, \mathsf{Sp}^{\varphi}(x) = y\}|.
\end{align*}
Putting everything together, it follows that
\begin{align*}
\Pr_{(\Phi,Y)\sim \algo D_2}[ (\Phi,Y) = (\varphi, y)] = \frac{1}{N!} \cdot \frac{|\{ x \in \bit^r \, : \, \mathsf{Sp}^{\varphi}(x) = y\}|}{2^r}.
    \end{align*}
This completes the proof.
\end{proof}

\printbibliography

\newpage
\appendix

\end{document}